\documentclass{amsart}

\usepackage{amsmath,amssymb,amsthm,mathtools}
\usepackage{aliascnt}
\usepackage[alphabetic]{amsrefs}
\usepackage{enumitem}
\usepackage{microtype}
\usepackage[dvipsnames]{xcolor}
\definecolor{RoyalBlue}{HTML}{4169E1}
\definecolor{RoyalRed}{HTML}{9B1B30}
\usepackage{tikz-cd}
\usepackage[colorlinks=true,linkcolor=blue!55!black,citecolor=green!45!black,urlcolor=red!60!black]{hyperref}
\hypersetup{
  pdftitle={Entropic hypocoercivity via the Polchinski flow: an application to lattice sine-Gordon}
}
\numberwithin{equation}{section}

\usepackage{fourier}
\usepackage[margin=1.3in]{geometry}
\usepackage{setspace}

\newtheorem{theorem}{Theorem}[section]
\newaliascnt{proposition}{theorem}
\newtheorem{proposition}[proposition]{Proposition}
\aliascntresetthe{proposition}
\newaliascnt{lemma}{theorem}
\newtheorem{lemma}[lemma]{Lemma}
\aliascntresetthe{lemma}
\newaliascnt{corollary}{theorem}
\newtheorem{corollary}[corollary]{Corollary}
\aliascntresetthe{corollary}
\newaliascnt{assumption}{theorem}

\aliascntresetthe{assumption}
\theoremstyle{definition}
\newaliascnt{definition}{theorem}

\aliascntresetthe{definition}
\newaliascnt{remark}{theorem}
\newtheorem{remark}[remark]{Remark}
\aliascntresetthe{remark}
\usepackage[capitalise,noabbrev]{cleveref}
\crefname{theorem}{Theorem}{Theorems}
\Crefname{theorem}{Theorem}{Theorems}
\crefname{proposition}{Proposition}{Propositions}
\Crefname{proposition}{Proposition}{Propositions}
\crefname{lemma}{Lemma}{Lemmas}
\Crefname{lemma}{Lemma}{Lemmas}
\crefname{corollary}{Corollary}{Corollaries}
\Crefname{corollary}{Corollary}{Corollaries}
\crefname{assumption}{Assumption}{Assumptions}
\Crefname{assumption}{Assumption}{Assumptions}
\crefname{definition}{Definition}{Definitions}
\Crefname{definition}{Definition}{Definitions}
\crefname{remark}{Remark}{Remarks}
\Crefname{remark}{Remark}{Remarks}
\crefname{appendix}{Appendix}{Appendices}
\Crefname{appendix}{Appendix}{Appendices}

\newcommand{\R}{\mathbb R}
\newcommand{\E}{\mathbb E}
\newcommand{\Id}{\mathrm{Id}}
\newcommand{\dd}{\,\mathrm d}
\newcommand{\norm}[1]{\left\lVert #1\right\rVert}
\newcommand{\abs}[1]{\left\lvert #1\right\rvert}
\newcommand{\ip}[2]{\left\langle #1,#2\right\rangle}
\newcommand{\HS}{\mathrm{HS}}
\newcommand{\op}{\mathrm{op}}
\newcommand{\diver}{\operatorname{div}}

\DeclareMathOperator{\Ent}{Ent}
\DeclareMathOperator{\Var}{Var}
\allowdisplaybreaks[1]
\makeatletter
\renewcommand\paragraph{\@startsection{paragraph}{4}{\z@}%
  {1.25ex \@plus .5ex \@minus .2ex}%
  {-.5em}%
  {\normalfont\normalsize\itshape}}
\makeatother

\title[Entropic Hypocoercivity via the Polchinski Flow]{Entropic hypocoercivity via the Polchinski flow:\\
an application to the lattice sine--Gordon model}
\thanks{B.L. is supported by the National Key R\&D Program of China, Grant No.~2024YFA1016000, and the start-up fund 7200843 from City University of Hong Kong. The work of J.L. is supported in part by the National Science Foundation under grant DMS-2309378. J.L. is grateful to Yuansi Chen for many illuminating discussions regarding stochastic localization over the years.}
\author{Bowen Li}
\address{Department of Mathematics, City University of Hong Kong}
\email{boweli4@cityu.edu.hk}

\author{Jianfeng Lu}
\address{Department of Mathematics, Department of Physics, and Department of Chemistry, Duke University}
\email{jianfeng@math.duke.edu}
\date{}

\begin{document}

\begin{abstract}
We develop a multiscale criterion for exponential decay of relative entropy in the canonical stochastic quantization of lattice field theories, the underdamped Langevin dynamics of the field and its momentum. For smooth bounded perturbations of Gaussian fields, the criterion requires the uniform operator-norm bound on the smoothed interaction Hessian along the Polchinski flow of Bauerschmidt and Bodineau to have a small $L^2$ norm in the scale variable. The decay constants depend only on this $L^2$ norm and the friction. The proof measures the transfer of momentum dissipation to position on the same scales that decompose entropy. A phase-space extension of the Polchinski flow gives Gaussian coarse-graining that commutes with free kinetic evolution. Integrating position--momentum cross terms over these scales yields a corrector bounded by entropy. A force defect accounts for changes in the averaged interaction force away from equilibrium; its evolution in scale controls the errors in the kinetic estimate. For the massive two-dimensional lattice sine--Gordon model with $0<\beta<4\pi$ and small dimensionless local coupling, the criterion gives entropy decay for all finite-entropy initial laws, uniformly in the lattice spacing and the volume. For friction proportional to the mass $m$, the rate has optimal order $m$.
\end{abstract}

\maketitle

\begingroup
\small
\setstretch{1}
\tableofcontents
\endgroup

\section{Introduction}
\label{sec:introduction}

Logarithmic Sobolev inequalities that are uniform in the lattice spacing and the volume give quantitative relaxation to equilibrium for the Glauber dynamics of lattice field theories~\cite{BGL}.
Bauerschmidt and Bodineau proved a logarithmic Sobolev inequality for the massive two-dimensional sine--Gordon model below $\beta=6\pi$~\cite[Theorem~1.6]{BB}. Its constant is uniform in the lattice spacing for every coupling and, at small dimensionless local coupling, also in the volume. Their proof uses a multiscale Bakry--\'Emery criterion~\cite[Theorem~1.2]{BB}.
The criterion uses the Polchinski flow, which averages the interaction over the Gaussian fluctuations of the field scale by scale, and assumes at each scale a lower bound, possibly negative, on the Hessian of the resulting renormalized potential, smoothed at that scale; see also~\cite[Theorem~3.6]{BBD2024}.

This paper concerns the kinetic counterpart of the Glauber dynamics, the underdamped Langevin dynamics
\begin{equation}
  \dd q_t=p_t\dd t,\qquad \dd p_t=-\nabla U(q_t)\dd t-\gamma p_t\dd t+\sqrt{2\gamma}\dd W_t,
  \label{eq:intro-langevin}
\end{equation}
with friction $\gamma>0$ and a standard Brownian motion $W$. Its invariant measure is
\[\boldsymbol\mu(\dd q\,\dd p)\propto e^{-U(q)-|p|^2/2}\dd q\,\dd p.\]
For lattice sine--Gordon, $q$ is the lattice field, $U$ is the sine--Gordon Hamiltonian, and \eqref{eq:intro-langevin} is known as the canonical stochastic quantization~\cites{RyangSaitoShigemoto1985,Horowitz1985}. We seek exponential decay of the relative entropy $\Ent(\boldsymbol\varrho_t\mid\boldsymbol\mu)$ of the law $\boldsymbol\varrho_t$ of $(q_t,p_t)$, uniformly in the lattice spacing and the volume.

Two challenges make this difficult. Friction dissipates entropy only through the momentum Fisher information \eqref{eq:ent-physical-dissipation}, which vanishes for densities that depend on position alone. Hypocoercivity recovers decay from the transfer of dissipation by Hamiltonian transport~\cite{Villani}, and its quantitative forms~\cites{DMS,CLW,CGMZ,Lu2026,LiLu2026} assume convexity of $U$ or bounds on its Hessian. In singular field models, these bounds deteriorate as the cutoff is removed.
For lattice sine--Gordon with lattice spacing $\varepsilon$ and coupling $z\neq0$, the microscopic coupling $z_\varepsilon=z\varepsilon^{-\alpha}$ in \eqref{eq:energy}, $\alpha=\beta/(4\pi)$, diverges as $\varepsilon\downarrow0$, and the smallest eigenvalue of the Hessian of the Hamiltonian tends to $-\infty$; see \eqref{eq:intro-microscopic-curvature}.

Our first main result, Theorem~\ref{thm:phase-space-criterion}, is a multiscale criterion for entropy decay that measures the interaction along the Polchinski flow. Let $U(q)=\frac12\ip q{Aq}+V_0(q)$ with $A\succeq\Id$ and $V_0$ smooth and bounded together with all its derivatives. The smoothed interaction Hessian at scale $s$ is the Hessian of the renormalized potential $V_s$ in the coarse variable. If its uniform operator norm has $L^2$ norm less than $1$ as a function of the coarse-graining scale, then every initial law of finite entropy satisfies
\begin{equation}
  \Ent(\boldsymbol\varrho_t\mid\boldsymbol\mu)\le C\,e^{-ct}\,\Ent(\boldsymbol\varrho_0\mid\boldsymbol\mu),\qquad t\ge0,
  \label{eq:intro-decay}
\end{equation}
with explicit constants $c,C>0$ that depend only on $\gamma$ and on an upper bound for this $L^2$ norm of the upper bound of the smoothed Hessian. The estimate is uniform in the dimension and in $A$, and linearization gives the corresponding decay in $L^2(\boldsymbol\mu)$ (Corollary~\ref{cor:abstract-L2}). The microscopic Hessian $\nabla^2V_0$ is the smoothed interaction Hessian at scale zero; a large microscopic Hessian is compatible with the hypothesis when the smoothed interaction Hessian decreases over a short range of small scales.

Its proof, outlined in \S~\ref{sec:proof-ideas}, extends the Polchinski flow to phase space so that coarse-graining commutes with the free kinetic dynamics, and builds a corrector, in the sense of hypocoercivity, from the entropy decomposition of Bauerschmidt and Bodineau~\cite[Section~2.2, (2.16)--(2.17)]{BB}.

Our second main result, Theorem~\ref{thm:bulk-entropy}, applies this criterion to the massive lattice sine--Gordon model with $0<\beta<4\pi$ and small dimensionless local coupling. At unit mass, \eqref{eq:intro-decay} holds for all lattice spacings $0<\varepsilon\le1$ and side lengths $L\ge1$, with constants that depend only on $\gamma$. The hypothesis is verified with the charge-expansion estimates of Bauerschmidt and Bodineau, and their bounds on the Fourier representation of the renormalized potential obtained by the Brydges--Kennedy approach~\cite[Sections~3.3--3.4]{BB}.
The bound \eqref{eq:SG-square-profile} on the smoothed interaction Hessian is of the microscopic size $\varepsilon^{-\alpha}$ only below the lattice scale and decays like $s^{-\alpha/2}$ up to the mass scale.
This power is square integrable at $s=0$ exactly when $\beta<4\pi$, which is the origin of the range of $\beta$.

For friction proportional to the mass $m$, the rate in physical time is of order $m$, and this order is optimal (\S~\ref{subsec:mass-calibration}).
Under an additional small-coupling condition, the static Poincar\'e and logarithmic Sobolev constants of the position law are both of order $m^2$ (Proposition~\ref{prop:static-mass-order}), so this rate is of the order of their square roots. For convex potentials, this square-root acceleration is established in $L^2$ in terms of the Poincar\'e constant~\cite[Theorem~1 and Remark~1.2(ii)]{CLW} and in entropy in terms of the logarithmic Sobolev constant~\cite[Theorem~2.3]{Lu2026}.

Concurrently, Barashkov, Di Ges\`u, and Laarne~\cite{BDL2026} build on the space-time Poincar\'e inequality argument of~\cite{CLW} to obtain exponential $L^2$ convergence for lattice sine--Gordon throughout $\beta<4\pi$; see the discussion after Corollary~\ref{cor:SG-L2}.
They write the interaction as $\cos(\beta_{\mathrm{SPDE}}\varphi)$ and normalize the Gaussian free field as in~\cite{BB}, so their condition $\beta_{\mathrm{SPDE}}^2<4\pi$ is $\beta<4\pi$ in the convention of \eqref{eq:energy}.
For the continuum hyperbolic sine--Gordon dynamics, invariant Gibbs dynamics are constructed in~\cites{ORSW,Zine2025}, and exponential ergodicity on the circle is proved in~\cite{Seong2024}.
The extension to $\beta\ge4\pi$ and the transfer of the uniform estimate to these continuum dynamics remain open.

\subsection*{Use of AI tools.}
The authors explored proof approaches with ChatGPT 5.6 Sol; the proof strategy, which extends the Polchinski flow to phase space and builds a hypocoercive corrector scale by scale, is their own. In preparing the manuscript, the authors used AI assistants, including ChatGPT 5.6 Sol, ChatGPT 6 Astra, and Claude Opus 5.5, to draft and revise the exposition and the presentation of proofs, and to check calculations and the consistency of notation. The authors verified all mathematical statements, proofs, and citations, and take full responsibility for the content of the paper.

\section{Preliminaries}
\label{sec:preliminaries}
In this section, we recall some background material:
\S~\ref{subsec:kinetic-dynamics} defines the underdamped Langevin dynamics. \S~\ref{subsec:polchinski-flow} recalls the Polchinski flow of Bauerschmidt and Bodineau \cite{BB}, which averages the interaction over the Gaussian fluctuations of the position field scale by scale. \S~\ref{subsec:SG-model} defines the lattice sine--Gordon model.

Throughout, $t$ denotes kinetic time and $s$ the coarse-graining scale. Calligraphic letters denote kinetic operators and functionals, sans serif letters denote Polchinski operators, and boldface denotes phase-space measures, block matrices, and observation noises. We write $\E_{\boldsymbol\varrho}\psi=\int\psi\dd\boldsymbol\varrho$. We write $\mathsf N(\cdot,\cdot)$ for  Gaussian measures with the mean as the first argument and the covariance as the second. We write $\nabla^k$ for the tensor of derivatives of order $k$, with $\nabla^0\psi=\psi$. Without a subscript, derivatives act in the position variable $q$; a subscript, as in $\nabla_p$ or $\nabla_{(q,p)}^k$, names the variables. We write $|\cdot|$ for the Euclidean norm on vectors and tensors, $\|\cdot\|_{\mathrm{op}}$ and $\|\cdot\|_{\mathrm{HS}}$ for the operator and Hilbert--Schmidt norms of matrices, and $\operatorname{osc}\psi = \sup\psi - \inf\psi$ for the oscillation of a bounded function $\psi$.

\subsection{Underdamped Langevin dynamics}
\label{subsec:kinetic-dynamics}

Let $A$ be a symmetric $d\times d$ matrix with $A\succeq\Id$.
We consider the class of potentials
\begin{equation} \label{eq:potentialclass}
   U(q)=\frac12\ip q{Aq}+V_0(q)\,, \quad V_0\in C_b^\infty(\R^d),
\end{equation}
where $C_b^\infty$ denotes smooth functions that are bounded together with all their derivatives.
For a fixed finite lattice, the sine--Gordon potential \eqref{eq:energy} belongs to this class, although the bounds need not be uniform in the lattice spacing and the volume; see \eqref{eq:intro-microscopic-curvature}.
The corresponding Gibbs measure on the phase space $\R^d \times \R^d$ is given by, with $Z_0=\E_{\mathsf N(0,A^{-1})}e^{-V_0}$,
\begin{equation}
 \boldsymbol\mu(\dd q\,\dd p) = \nu\otimes\mathsf N(0,\Id) =Z_0^{-1}e^{-V_0(q)}\mathsf N(0,A^{-1})(\dd q)\,\mathsf N(0,\Id)(\dd p)\,,
  \label{eq:ent-abstract-law}
\end{equation}
where the position law $\nu(\dd q) \propto e^{- U(q)} \dd q  \propto e^{-V_0(q)}\mathsf N(0,A^{-1})(\dd q)$ can be viewed as a bounded perturbation of the Gaussian measure $\mathsf N(0,A^{-1})$.

We consider the underdamped Langevin equation, which admits $\boldsymbol\mu$ as its invariant measure,
\begin{equation} \label{eq:langevindy}
  \dd q_t=p_t\dd t,\qquad \dd p_t=-\nabla U(q_t)\dd t-\gamma p_t\dd t+\sqrt{2\gamma}\dd W_t,
\end{equation}
with friction $\gamma>0$, unit temperature, and a standard Brownian motion $W$ in $\R^d$. Its generator, for the semigroup $\mathcal P_t=e^{t\mathcal L}$ acting on observables, can be written as
\begin{equation}
  \mathcal L=\mathcal A+\gamma\mathcal S,\qquad \mathcal A=p\cdot\nabla_q-\nabla U\cdot\nabla_p, \qquad \mathcal S=\Delta_p-p\cdot\nabla_p.
  \label{eq:ent-abstract-generator}
\end{equation}
We write $\mathcal P_t^*=e^{t\mathcal L^*}$ for its adjoint in $L^2(\boldsymbol\mu)$, which acts on densities relative to $\boldsymbol\mu$. More precisely, given an initial law $\boldsymbol\varrho_0=g_0\boldsymbol\mu$ for \eqref{eq:langevindy}, the law of $(q_t,p_t)$ and its relative density satisfy
\begin{equation}
  \boldsymbol\varrho_t=\operatorname{Law}(q_t,p_t)=g_t\boldsymbol\mu,\qquad g_t=\mathcal P_t^*g_0,
  \label{eq:ent-evolving-law}
\end{equation}
that is, $\partial_tg_t=\mathcal L^*g_t=-\mathcal A g_t+\gamma\mathcal S g_t$.
The Markov semigroup $\mathcal P_t^*$ is positivity preserving, fixes constant functions, preserves positive lower and upper bounds, and is a contraction of $L^1(\boldsymbol\mu)$.

We measure the convergence of $\boldsymbol\varrho_t$ to $\boldsymbol\mu$ in relative entropy. For a probability density $f$ with respect to $\boldsymbol\mu$, we write
\begin{equation} \label{eq:fishermoment}
  \Ent_{\boldsymbol\mu}(f)=\int f\log f\dd\boldsymbol\mu,\qquad \mathcal I_p(f)=\int f|\nabla_p\log f|^2\dd\boldsymbol\mu
\end{equation}
for its relative entropy and its momentum Fisher information, with $0\log0=0$. For the probability law $\boldsymbol\varrho=f\boldsymbol\mu$, we also write $\Ent(\boldsymbol\varrho\mid\boldsymbol\mu)=\Ent_{\boldsymbol\mu}(f)$, and we set $\Ent(\boldsymbol\varrho\mid\boldsymbol\mu)=+\infty$ if $\boldsymbol\varrho$ is not absolutely continuous with respect to $\boldsymbol\mu$; the same convention applies to other reference probability measures. Along the dynamics, the entropy is dissipated by the friction part alone: for a smooth initial density with bounded derivatives and a positive lower bound, the skew-adjointness of $\mathcal A$ and Gaussian integration by parts in momentum give
\begin{equation}
  \frac\dd{\dd t}\Ent_{\boldsymbol\mu}(g_t)
  =\int\log g_t\,\mathcal L^*g_t\dd\boldsymbol\mu
  =-\gamma\mathcal I_p(g_t).
  \label{eq:ent-physical-dissipation}
\end{equation}
The differentiation is justified in Lemma~\ref{lem:ent-kinetic-differentiation}. We see from \eqref{eq:ent-physical-dissipation} that the dissipation vanishes for every density that depends only on position.

\subsection{The Polchinski flow} \label{subsec:polchinski-flow}

This subsection recalls the Polchinski flow of Bauerschmidt and Bodineau~\cites{BB,BBD2024}, which decomposes the position law $\nu$ along a scale parameter $s\geq0$. At scale $s$, the flow averages out the fluctuations of $q\sim\nu$ on length scales below $\sqrt s$, that is, on the eigenmodes of $A$ with $s\lambda\gtrsim1$, and produces the renormalized potential $V_s$, which evolves in $s$ by the Polchinski equation. Then, the Hessian of $V_s$, smoothed at scale $s$, can be controlled at every scale, and this smoothed Hessian enters the assumptions of Theorem~\ref{thm:phase-space-criterion}.

We will present the Polchinski flow in the equivalent formalism of stochastic localization~\cites{Eldan2013,ChenEldan2025}, through noising and observation as in~\cite[Sections~1.2--1.3]{Montanari2023}; \cite[Section~4.5]{BBD2024} gives a dictionary between the two points of view. At scale $s$, the position variable $q$ is noised by attenuation and independent Gaussian noise, which produces an observation of $q$ that becomes less informative as $s$ increases. The renormalized measure of~\cite[Section~1.2]{BB} is the law of this observation, up to a linear change of variables, and the renormalized potential gives its density relative to the Gaussian reference measure. The posterior, the conditional law of $q$ given the observation, is the fluctuation measure of~\cite{BBD2024}; it provides the conditional averaging that defines coarse graining in phase space in \S~\ref{subsec:phase-construction}. As $s$ decreases to $0$, the observation determines $q$ and the posterior localizes; stochastic localization follows the posteriors in this direction.

\paragraph{The Gaussian observation.}
Consider a random variable $q \in \R^d$ with distribution $\nu$. At scale $s\geq0$, we define the \emph{observation process} by attenuating it by $Q_s=e^{-sA/2}$, a semigroup associated with $A$, and adding independent Gaussian noise:
\begin{equation} \label{processobv}
  y_s=Q_sq+\xi_s,
\end{equation}
where $\xi_s\sim\mathsf N(0,C_s)$ is independent of $q$.
We denote the law of the observation $y_s$ by $\nu_s$. We choose the noise covariance $C_s$ so that the observation rule preserves the Gaussian reference measure: if $q\sim\mathsf N(0,A^{-1})$, then $y_s\sim\mathsf N(0,A^{-1})$. Indeed, for $q\sim\mathsf N(0,A^{-1})$, the observation $y_s$ is centered Gaussian with covariance $Q_sA^{-1}Q_s+C_s$, which gives $C_s$~\cite[(1.2)]{BB}:
\begin{equation}
  C_s = (\Id-Q_s^2)A^{-1} =\int_0^s e^{-uA}\dd u.
  \label{eq:covdecomp}
\end{equation}
The integral form shows that $C_s$ collects the parts of the scales $u\leq s$ of $A^{-1}=\int_0^\infty e^{-uA}\,\dd u$. As $s$ increases, $C_s$ increases at the rate $\frac\dd{\dd s}C_s=Q_s^2$, from $C_0=0$, when the observation $y_0=q$ is exact, to $C_\infty=A^{-1}$ as $s\to\infty$, when $Q_s\to0$ and the observation no longer depends on $q$. Following \cite{BB}, we call $s\mapsto C_s$ a \emph{covariance decomposition} of $A^{-1}$.

The choice of $C_s$ also makes the observation reversible with respect to $\mathsf N(0,A^{-1})$. If $q\sim\mathsf N(0,A^{-1})$, then $(q,y_s)$ is a centered Gaussian pair whose marginals are both $\mathsf N(0,A^{-1})$ and whose cross-covariance $A^{-1}Q_s$ is symmetric, so $(y_s,q)$ has the same law as $(q,y_s)$. In terms of the observation kernel $\mathsf N(Q_sq,C_s)(\dd y)$ of \eqref{processobv}, this reads
\begin{equation}
  \mathsf N(0,A^{-1})(\dd q)\,\mathsf N(Q_sq,C_s)(\dd y)
  =\mathsf N(Q_sy,C_s)(\dd q)\,\mathsf N(0,A^{-1})(\dd y).
  \label{eq:gaussian-identity}
\end{equation}
By \eqref{eq:gaussian-identity}, when $q\sim\mathsf N(0,A^{-1})$, the conditional law of $q$ given $y_s=y$ is $\mathsf N(Q_sy,C_s)$. Accordingly, for an observation $y\in\R^d$, we split $q=Q_sy+\zeta_q$ into the \emph{coarse field} $Q_sy$, which the observation determines, and the \emph{fluctuation field} $\zeta_q=q-Q_sy$. Moreover, in this case, the fluctuation field has law $\mathsf N(0,C_s)$ and is independent of the observation.
Here, intuitively, the scale $s$ separates the modes of $A$ by their eigenvalues. For $q\sim\mathsf N(0,A^{-1})$, the variance $\lambda^{-1}$ of the component of $q$ along a unit eigenvector of $A$ with eigenvalue $\lambda$ splits between the independent fields $Q_sy_s$ and $\zeta_q$:
\begin{equation*}
 \lambda^{-1}=\underbrace{\lambda^{-1}e^{-s\lambda}}_{\text{coarse field}}+\underbrace{\lambda^{-1}(1-e^{-s\lambda})}_{\text{fluctuation field}}.
\end{equation*}
The observation thus resolves the modes with $s\lambda\ll1$ almost completely, and the fluctuation field carries almost all of the variance of the modes with $s\lambda\gg1$. In the lattice application, large eigenvalues correspond to short wavelengths, so increasing $s$ moves the short wavelengths from the coarse field to the fluctuation field.

\paragraph{The posterior and the renormalized potential.}
We now return to $q\sim\nu$ and denote the joint law of $q$ and its observation $y_s$ by $\pi_s(\dd q\,\dd y) =  \nu(\dd q)\,\mathsf N(Q_sq,C_s)(\dd y)$. Since $\nu(\dd q)=Z_0^{-1}e^{-V_0(q)}\mathsf N(0,A^{-1})(\dd q)$, multiplying the reversibility identity \eqref{eq:gaussian-identity} by $Z_0^{-1}e^{-V_0(q)}$ factors this joint law as
\begin{equation}
  \pi_s(\dd q\,\dd y)=Z_0^{-1}e^{-V_0(q)}\,\mathsf N(Q_sy,C_s)(\dd q)\,\mathsf N(0,A^{-1})(\dd y).
  \label{eq:joint-law}
\end{equation}
By Bayes' formula, \eqref{eq:joint-law} gives the \emph{posterior} of $q$ given $y_s=y$
\begin{equation*}
  \pi_s(\dd q\mid y)\propto e^{-V_0(q)}\,\mathsf N(Q_sy,C_s)(\dd q).
\end{equation*}
To compute its normalizing constant, we define the \emph{renormalized potential} $V_s\in C_b^\infty(\R^d)$ by
\begin{equation}  \label{eq:polchinski2}
   e^{-V_s(q)}=\E_{\zeta_q\sim\mathsf N(0,C_s)}e^{-V_0(q+\zeta_q)},\qquad q\in\R^d.
\end{equation}
Since $q=Q_sy+\zeta_q$, evaluating \eqref{eq:polchinski2} at $Q_sy$ gives the normalizing constant:
\begin{equation} \label{eq:normalconstant}
  \int e^{-V_0(q)}\,\mathsf N(Q_sy,C_s)(\dd q) = e^{-V_s(Q_sy)}.
\end{equation}
We hence have
\begin{equation*}
  \pi_s(\dd q\mid y)=e^{V_s(Q_sy)-V_0(q)}\,\mathsf N(Q_sy,C_s)(\dd q),
\end{equation*}
and, for every bounded measurable function $\psi$,
\begin{equation}
  \E_{\pi_s}[\psi(q)\mid y]
  =e^{V_s(Q_sy)}\E_{\zeta_q\sim\mathsf N(0,C_s)}\!\Big[e^{-V_0(Q_sy+\zeta_q)}\psi(Q_sy+\zeta_q)\Big].
  \label{eq:posterior-average}
\end{equation}
We extend this conditional expectation to phase space in \S~\ref{subsec:phase-construction}.

\paragraph{The observation law and the renormalized measure.}
Integrating \eqref{eq:joint-law} over $q$ and using the normalizing constant \eqref{eq:normalconstant} of the posterior gives the law of the observation
\begin{equation*}
  \nu_s(\dd y)=Z_0^{-1}e^{-V_s(Q_sy)}\,\mathsf N(0,A^{-1})(\dd y).
\end{equation*}
For $s<\infty$, the map $y\mapsto Q_sy$ pushes $\mathsf N(0,A^{-1})$ forward to $\mathsf N(0,A^{-1}-C_s)$, so the coarse field $Q_sy_s$ has law
\begin{equation*}
 (Q_s)_\#\nu_s(\dd q)=Z_0^{-1}e^{-V_s(q)}\,\mathsf N(0,A^{-1}-C_s)(\dd q),
\end{equation*}
which is the renormalized measure of \cite[(1.5)]{BB} and \cite[Definition~3.2, (3.13)]{BBD2024} at scale $s$. We work in the coarse variable $y$, in which the Gaussian reference measure is $\mathsf N(0,A^{-1})$ at every scale $s\in[0,\infty]$, and call $\nu_s$ itself the \emph{renormalized measure}.

Since $\nu$ is the $q$-marginal of $\pi_s$, disintegrating $\pi_s$ along $y$ gives
\begin{equation*}
 \nu(\dd q)=\int\pi_s(\dd q\mid y)\,\nu_s(\dd y),
\end{equation*}
the decomposition of $\nu$ in \cite[(3.17)--(3.19)]{BBD2024}, written in the variable $y$. The posterior $\pi_s(\dd q\mid y)$ is the fluctuation measure at $Q_sy$ \cite{BBD2024}, and its pushforward under $q\mapsto q-Q_sy$ is the conditional law of the fluctuation field $\zeta_q$ given $y_s=y$.

\paragraph{The Polchinski flow.}
We call the family $s\mapsto(V_s,\nu_s)$ the \emph{Polchinski flow} of $V_0$ associated with the covariance decomposition \eqref{eq:covdecomp}. It interpolates between the position law and the Gaussian reference measure:
\begin{equation*}
  \nu_0=\nu,\qquad \nu_s\to\mathsf N(0,A^{-1})\ \text{in total variation as }s\to\infty.
\end{equation*}
The first identity holds since $Q_0=\Id$ and $C_0=0$, and the limit since $Q_s\to0$ and $C_s\to A^{-1}$ in \eqref{eq:polchinski2}.

The scale evolution of $V_s$ follows from the Gaussian convolution \eqref{eq:polchinski2}. For a matrix $M$, write $M:\nabla^2=\sum_{i,j}M_{ij}\partial_{q_i}\partial_{q_j}$. Differentiating \eqref{eq:polchinski2} in $s$ at fixed $q$ and using $\frac\dd{\dd s}C_s=Q_s^2$ gives the heat equation
\begin{equation*}
  \partial_se^{-V_s}=\frac12Q_s^2:\nabla^2e^{-V_s}.
\end{equation*}
By the chain rule,
\begin{equation*}
  Q_s^2:\nabla^2e^{-V_s}=e^{-V_s}\bigl(\abs{Q_s\nabla V_s}^2-Q_s^2:\nabla^2V_s\bigr),
\end{equation*}
so dividing the heat equation by $-e^{-V_s}$ yields the Polchinski equation~\cites{BB,Polchinski1984}
\begin{equation}
  \partial_sV_s=\frac12Q_s^2:\nabla^2V_s-\frac12\abs{Q_s\nabla V_s}^2,\qquad V_s|_{s=0}=V_0.
  \label{eq:polchinski-equation}
\end{equation}
Its quadratic nonlinearity comes only from taking the logarithm of a solution of the linear heat equation.

\subsection{The lattice sine--Gordon model}
\label{subsec:SG-model}

This subsection defines the massive two-dimensional lattice sine--Gordon model, to which our second main result, Theorem~\ref{thm:bulk-entropy}, applies. We use the physical units of~\cite{BB}, in which the inner product, the Laplacian, and the noise are discretizations of their continuum counterparts. For ease of exposition, in \S~\ref{subsec:SG-main} and \S~\ref{subsec:scaling} we pass to the unit mass when stating Theorem~\ref{thm:bulk-entropy} and its proof.

Following \cite[Section~1.3]{BB}, let $\Omega_{\varepsilon,L}=\varepsilon\mathbb Z^2/(L\mathbb Z^2)$ be the discrete torus with lattice spacing $\varepsilon>0$ and side length $L>0$, where $L/\varepsilon\in\mathbb N$. It approximates the continuum torus $L\mathbb T^2$. A lattice field is $\varphi\in\R^{\Omega_{\varepsilon,L}}$, with $(L/\varepsilon)^2$ coordinates. We equip $\R^{\Omega_{\varepsilon,L}}$ with the discrete $L^2$ inner product
\begin{equation*}
  \ip\varphi{\varphi'}_\varepsilon=\varepsilon^2\sum_{x\in\Omega_{\varepsilon,L}}\varphi_x\varphi'_x, \qquad \norm\varphi_\varepsilon^2=\ip\varphi\varphi_\varepsilon,
\end{equation*}
which approximates $\int_{L\mathbb T^2}\varphi\varphi'\dd x$ as $\varepsilon\to0$. Gradients, adjoints, Gaussian measures, and Brownian motions on $\R^{\Omega_{\varepsilon,L}}$ are understood with respect to $\ip\cdot\cdot_\varepsilon$.
In particular, we write $\nabla_\varepsilon$ for the gradient in this inner product, so that $(\nabla_\varepsilon\psi)_x=\varepsilon^{-2}\partial_{\varphi_x}\psi$ for a function $\psi$ on $\R^{\Omega_{\varepsilon,L}}$, and the nearest-neighbor lattice Laplacian is given by
\begin{equation*}
  (\Delta_\varepsilon\varphi)_x := \varepsilon^{-2}\sum_{j=1}^2 \bigl(\varphi_{x+\varepsilon e_j}+\varphi_{x-\varepsilon e_j}-2\varphi_x\bigr),
\end{equation*}
where $e_1,e_2$ are the coordinate unit vectors. For mass $m>0$, sine--Gordon parameter $\beta>0$, and coupling parameter $z\in\R$, we set
\begin{equation}
  A_\varepsilon := -\Delta_\varepsilon+m^2\Id, \qquad \alpha := \frac{\beta}{4\pi}, \qquad z_\varepsilon := z\varepsilon^{-\alpha}.
  \label{eq:parameters}
\end{equation}
The operator $A_\varepsilon$ is symmetric with respect to $\ip\cdot\cdot_\varepsilon$ and satisfies $A_\varepsilon\succeq m^2\Id$. The Hamiltonian, in the convention of Bauerschmidt and Bodineau~\cite{BB}, is
\begin{equation}
  H_{\varepsilon,L}(\varphi) = \frac12\ip{\varphi}{A_\varepsilon\varphi}_\varepsilon +2z_\varepsilon\varepsilon^2 \sum_{x\in\Omega_{\varepsilon,L}}\cos\big(\sqrt\beta\,\varphi_x \big).
  \label{eq:energy}
\end{equation}
The underdamped Langevin dynamics with friction $\gamma>0$ is
\begin{equation}
  \dd\varphi_t=\varpi_t\dd t,\qquad \dd\varpi_t=-\nabla_\varepsilon H_{\varepsilon,L}(\varphi_t)\dd t -\gamma\varpi_t\dd t+\sqrt{2\gamma}\dd W_t^\varepsilon,
  \label{eq:langevin}
\end{equation}
where $\varpi$ is the momentum field conjugate to $\varphi$, and the Brownian motion $W^\varepsilon$ on $(\R^{\Omega_{\varepsilon,L}},\ip\cdot\cdot_\varepsilon)$ has coordinates $W^\varepsilon_x=\varepsilon^{-1}W_x$ for independent standard one-dimensional Brownian motions $(W_x)_{x\in\Omega_{\varepsilon,L}}$. At each fixed lattice, the drift is globally Lipschitz and the process is globally well posed. Equation~\eqref{eq:langevin} is the \emph{canonical stochastic quantization}~\cites{RyangSaitoShigemoto1985,Horowitz1985} of the lattice sine--Gordon model.
Its invariant measure is
\begin{equation}
  \boldsymbol\mu_{\varepsilon,L}=\nu_{\varepsilon,L}\otimes\kappa_{\varepsilon,L}, \quad \nu_{\varepsilon,L}(\dd\varphi)\propto e^{-H_{\varepsilon,L}(\varphi)}\dd\varphi, \quad \kappa_{\varepsilon,L}(\dd\varpi)\propto e^{-\norm\varpi_\varepsilon^2/2}\dd\varpi.
  \label{eq:invariant-law}
\end{equation}
The position marginal $\nu_{\varepsilon,L}$ is the \emph{lattice sine--Gordon measure}. It perturbs the massive lattice Gaussian free field $\mathsf N(0,A_\varepsilon^{-1})$:
\begin{equation*}
  \nu_{\varepsilon,L}(\dd\varphi)\propto\exp\Bigl(-2z_\varepsilon\varepsilon^2\sum_{x\in\Omega_{\varepsilon,L}}\cos\big(\sqrt\beta\,\varphi_x\big)\Bigr)\mathsf N(0,A_\varepsilon^{-1})(\dd\varphi).
\end{equation*}
The factor $\varepsilon^{-\alpha}$ in $z_\varepsilon$ is the ultraviolet normalization of \cite[(1.16)]{BB}. For fixed $m,L>0$, the site variance of the massive lattice Gaussian free field satisfies
\begin{equation*}
  \Var_{\mathsf N(0,A_\varepsilon^{-1})}(\varphi_x)
  =\frac1{2\pi}\log\varepsilon^{-1}+O_{m,L}(1),\qquad \varepsilon\downarrow0,
\end{equation*}
so the cosine has Gaussian mean
\begin{equation*}
  \E_{\mathsf N(0,A_\varepsilon^{-1})}\cos(\sqrt\beta\,\varphi_x)
  =\exp\!\left(-\frac{\beta}{2}\Var_{\mathsf N(0,A_\varepsilon^{-1})}(\varphi_x)\right)
  =\varepsilon^{\alpha}e^{O_{m,L}(1)}\longrightarrow0 .
\end{equation*}
Exact Wick ordering with respect to $\mathsf N(0,A_\varepsilon^{-1})$ is then defined by
\begin{equation*}
  :\!\cos(\sqrt\beta\,\varphi_x)\!:\;
  =\exp\!\left(\frac{\beta}{2}\Var_{\mathsf N(0,A_\varepsilon^{-1})}(\varphi_x)\right)
     \cos(\sqrt\beta\,\varphi_x),
\end{equation*}
so that its Gaussian mean is one.
The coupling $z_\varepsilon=z\varepsilon^{-\alpha}$ keeps only the leading factor $\varepsilon^{-\alpha}$ of the exact Wick factor. Equivalently, the interaction in \eqref{eq:energy} is a Wick-ordered cosine interaction whose coupling is $z$ times a factor $e^{O_{m,L}(1)}$, which is bounded above and below by positive constants as $\varepsilon\downarrow0$ and depends on $m$, $L$, and the lattice regularization. The exponent $\alpha$ is the scaling dimension of $:\!e^{i\sqrt\beta\varphi}\!:$. With this normalization, $\nu_{\varepsilon,L}$ converges to the continuum massive sine--Gordon measure for $0<\beta<8\pi$ and small $z$; see \cite[Section~1.3]{BB} for precise statements and references.

We denote the semigroup of \eqref{eq:langevin} acting on observables by $\mathcal P_t^{\varepsilon,L}$ and its adjoint in $L^2(\boldsymbol\mu_{\varepsilon,L})$ by $(\mathcal P_t^{\varepsilon,L})^*$. The adjoint evolves densities relative to the invariant measure: $g_t=(\mathcal P_t^{\varepsilon,L})^*g_0$, and $\boldsymbol\varrho_t=g_t\boldsymbol\mu_{\varepsilon,L}$ is the law of $(\varphi_t,\varpi_t)$. At each fixed lattice we suppress the parameters $\varepsilon,L$ on these densities and laws. For a probability density $f$ relative to $\boldsymbol\mu_{\varepsilon,L}$, the entropy convention of \S~\ref{subsec:kinetic-dynamics} reads
\begin{equation}
  \Ent(f\boldsymbol\mu_{\varepsilon,L}\mid\boldsymbol\mu_{\varepsilon,L})=\Ent_{\boldsymbol\mu_{\varepsilon,L}}(f)=\int f\log f\dd\boldsymbol\mu_{\varepsilon,L},\qquad \E_{\boldsymbol\mu_{\varepsilon,L}}f=1.
  \label{eq:entropy-definition}
\end{equation}

\section{Main results}

\label{sec:criterion}
Our main results give exponential convergence to equilibrium in relative entropy. The first is a criterion for underdamped Langevin dynamics, formulated in terms of the interaction along the Polchinski flow (\S~\ref{subsec:entropy-criterion}). The second applies this criterion to lattice sine--Gordon and gives a decay rate uniform in the lattice spacing and the volume (\S~\ref{subsec:SG-main}). We then explain the mechanism of the proof in \S~\ref{sec:proof-ideas}.

\subsection{Entropy decay under a smoothed Hessian bound}
\label{subsec:entropy-criterion}
In the coarse variable $y$, the renormalized potential reads $V_s(Q_sy)$, and we call its Hessian the \emph{smoothed interaction Hessian}. By the chain rule, we have
\begin{equation*}
  \nabla_y^2[V_s(Q_sy)]=Q_s\nabla^2V_s(Q_sy)Q_s.
\end{equation*}
We denote its uniform operator-norm bound by
\begin{equation*}
  b(s)=\sup_{y\in\R^d}\norm{Q_s\nabla^2V_s(Q_sy)Q_s}_\op=\sup_{q\in\R^d}\norm{Q_s\nabla^2V_s(q)Q_s}_\op,
\end{equation*}
where the two suprema agree because $Q_s$ is invertible. We introduce the parameter
\begin{equation}
 \eta=\left(\int_0^\infty b(s)^2\dd s\right)^{1/2}.
  \label{eq:ent-abstract-budget}
\end{equation}
At $s=0$, $b(0)=\sup_q\norm{\nabla^2V_0(q)}_\op$ is the microscopic curvature bound. For $s>0$, each factor $Q_s$ multiplies an eigenmode of $A$ with eigenvalue $\lambda$ by $e^{-s\lambda/2}$. The modes with $s\lambda\gg1$, whose fluctuations the Gaussian convolution \eqref{eq:polchinski2} has already averaged into $V_s$, enter $b(s)$ only with exponentially small weight. Thus $\eta$ measures the curvature of the averaged interaction through the $L^2$ norm of $b$ in the scale variable.

The first main result of the paper is the following entropy decay estimate.
\begin{theorem}[Entropy decay under a smoothed Hessian bound]
\label{thm:phase-space-criterion}
Let $A\succeq\Id$, $V_0\in C_b^\infty(\R^d)$, and $\gamma>0$ be as in \S~\ref{subsec:kinetic-dynamics}, and assume $\eta\leq\eta_*<1$. For every initial law $\boldsymbol\varrho_0$ with $\Ent(\boldsymbol\varrho_0\mid\boldsymbol\mu)<\infty$, the law $\boldsymbol\varrho_t$ at time $t$ of \eqref{eq:langevindy} with initial law $\boldsymbol\varrho_0$ satisfies
\begin{equation}
  \Ent(\boldsymbol\varrho_t\mid\boldsymbol\mu) \leq C_{\gamma,\eta_*}\,e^{-c_{\gamma,\eta_*}t}\,\Ent(\boldsymbol\varrho_0\mid\boldsymbol\mu), \qquad t\geq0,
  \label{eq:ent-abstract-decay}
\end{equation}
with constants $c_{\gamma,\eta_*},C_{\gamma,\eta_*}>0$ that depend only on the friction $\gamma$ and the fixed bound $\eta_*$.
\end{theorem}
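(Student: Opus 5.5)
We will prove the decay through a modified entropy functional $\mathcal H(f)=\Ent_{\boldsymbol\mu}(f)+\delta\,\mathcal C(f)$. The corrector $\mathcal C$ is built from a phase-space Polchinski decomposition of entropy, and the target is a differential inequality $\frac{\dd}{\dd t}\mathcal H(g_t)\le -c\,\Ent_{\boldsymbol\mu}(g_t)$ together with $|\mathcal C|\lesssim\Ent$. Gronwall then gives \eqref{eq:ent-abstract-decay} with $C=(1+\delta K)/(1-\delta K)$. We first reduce to smooth initial densities bounded above and below with bounded derivatives. The general finite-entropy case follows by monotone approximation and lower semicontinuity of entropy, using the differentiation justified in Lemma~\ref{lem:ent-kinetic-differentiation}.

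\textbf{Phase-space extension of the flow.} We observe the position as in \eqref{processobv}, $y_s=Q_sq+\xi_s$, and the momentum with the same attenuation, $w_s=Q_sp+\xi'_s$ with $\xi'_s\sim\mathsf N(0,\Id-Q_s^2)$. This rule preserves the Gaussian reference $\mathsf N(0,A^{-1})\otimes\mathsf N(0,\Id)$. Because $Q_s$ commutes with $A$, the Gaussian coarse-graining kernel intertwines the free kinetic generator $p\cdot\nabla_q-Aq\cdot\nabla_p$ with itself. For a density $f$ we write $f_s$ for the coarse-grained density of $(y_s,w_s)$ relative to $\boldsymbol\mu_s=\nu_s\otimes\mathsf N(0,\Id)$. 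The entropy then decomposes as in \cite[(2.16)--(2.17)]{BB}:
\[
\Ent_{\boldsymbol\mu}(f)=\int_0^\infty \tfrac12\,\E\big|(\text{conditional score at scale }s)\big|^2\dd s ,
\]
with integrand $\frac12\int f_s|Q_s\nabla_{(y,w)}\log f_s|^2\dd\boldsymbol\mu_s$, up to the Hessian correction from $V_s$. The corrector is the scale-integrated cross term
\[
\mathcal C(f)=\int_0^\infty \omega(s)\int f_s\,\ip{Q_s\nabla_y\log f_s}{Q_s\nabla_w\log f_s}\dd\boldsymbol\mu_s\dd s,
\]
for a weight $\omega(s)$ that we choose to be bounded, possibly exponentially decaying. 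Cauchy--Schwarz and the decomposition then give $|\mathcal C|\le K\,\Ent$. This is the "corrector bounded by entropy".

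\textbf{Time derivative.} Differentiating $\mathcal C(g_t)$, the free transport part produces $-\int\omega(s)\,\E|Q_s\nabla_y\log f_s|^2\dd s$, which is the position Fisher information at each scale. Integrated in $s$, this dominates the position part of the entropy. The momentum parts and the friction contributions are bounded by $\gamma$ and by $\mathcal I_p$, using that coarse-graining contracts the momentum Fisher information. The interaction does not commute with coarse-graining, and we account for it with a \emph{force defect}: the difference between the coarse-grained force $\E[\nabla V_0(q)\mid y,w]$ under $f$ and under equilibrium, the latter being $Q_s\nabla V_s(Q_sy)$. Its scale derivative is governed by the Polchinski equation \eqref{eq:polchinski-equation}. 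Differentiating in $s$ shows that the defect grows at a rate bounded by $\|Q_s\nabla^2V_sQ_s\|_{\op}\le b(s)$ times the position score. An $L^2$-in-$s$ Grönwall/Hardy-type estimate then bounds the total defect error by $\eta\big(\int\E|Q_s\nabla_y\log f_s|^2\dd s\big)^{1/2}\cdot(\dots)$.

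\textbf{Conclusion.} Collecting terms gives $\frac{\dd}{\dd t}\mathcal H\le -\gamma\mathcal I_p-\delta(1-\eta^2-O(\delta))\,\mathcal D_q+\delta\,O(\mathcal I_p)$, where $\mathcal D_q$ is the weighted scale-integrated position Fisher information. We choose $\delta$ small in terms of $\gamma$ and $1-\eta_*$. The decomposition then gives $\mathcal D_q+\mathcal I_p\gtrsim\Ent$; this is the position analogue of the multiscale Bakry--\'Emery bound, where $\eta<1$ again enters. We expect the main obstacle to be the force-defect estimate: showing that the mismatch between the nonlinear coarse-grained force and $Q_s\nabla V_s$ is controlled with constant exactly $\eta$ and no dimension dependence. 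We would attack it by writing the defect as a posterior covariance and differentiating it along $s$ using \eqref{eq:posterior-average}.
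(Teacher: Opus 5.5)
Your architecture is the paper's: matched coarse graining (your momentum noise $\Id-Q_s^2$ is the paper's $AC_s$), the Bauerschmidt--Bodineau decomposition, a scale-integrated cross term as corrector, and a force defect generated at rate $b(s)$ times the position score. The gap is in controlling the terms of $\partial_t\mathcal J_s$ other than $-\mathcal I^y_s$.

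In the coarse variables the decomposition is exactly $\Ent_{\boldsymbol\mu}(f)=\frac12\int_0^\infty(\mathcal I^y_s+\mathcal I^{w,A}_s)\dd s$. It has no $Q_s$ factors and no Hessian correction, and the momentum score is weighted by $A$. The same weight appears in the free-transport part of $\partial_t\mathcal J_s$, which is $-\mathcal I^y_s+\mathcal I^{w,A}_s$, because the scores rotate at frequency $\lambda^{1/2}$ on each eigenmode. The contraction $\mathcal I^w_s\le e^{-s}\mathcal I_p$ that you invoke cannot bound $\int_0^\infty\mathcal I^{w,A}_s\dd s$ uniformly in $A$. On the lattice $\|A\|_{\op}\sim\varepsilon^{-2}$, and inserting $Q_s^2$ weights still leaves a $\log\|A\|_{\op}$ loss near $s=0$, since $\|AQ_s^2\|_{\op}\le(es)^{-1}$.

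Friction also contributes $-2\gamma\int g_{t,s}\sum_{i,j}\partial_{y_i}\partial_{w_j}h\,\partial_{w_i}\partial_{w_j}h\dd\boldsymbol\mu_s$ with $h=\log g_{t,s}$. The force defect contributes pairings of $\nabla_{(y,w)}r_{t,s}$ with second derivatives of $h$. The only kinetic dissipation in your functional, however, is $\gamma\mathcal I_p$, a first-order quantity at scale zero. The missing idea is to use the rate of contraction along the scale flow, not just its sign. Since $[\nabla_w,\mathsf L_s]=-\frac12A\nabla_w$, one has $\frac{\dd}{\dd s}\mathcal I^w_s=-\mathcal D_s(\nabla_w\log f_s)-\mathcal I^{w,A}_s$. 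Hence $\int_0^\infty(\mathcal D_s(\nabla_w\log f_s)+\mathcal I^{w,A}_s)\dd s\le\mathcal I_p$, which bounds both the $A$-weighted momentum information and the scale-integrated mixed and momentum Hessians of $\log f_s$ (Lemma~\ref{lem:ent-velocity-fisher}).

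The force defect needs the same treatment. Set $Y_s=\int f_s|r_s|^2\dd\boldsymbol\mu_s$ and $\sigma_s=b(s)(\mathcal I^y_s)^{1/2}$. A Gr\"onwall bound on its size, $\sqrt{Y_s}\le\int_0^s\sigma_u\dd u$, is only uniform in $s$. The pairing with the position score needs $\int_0^\infty Y_s\dd s$, and the derivative pairings need $\int_0^\infty\!\int f_s\|\nabla_{(y,w)}r_s\|_{\HS}^2\dd\boldsymbol\mu_s\dd s$. Hardy's inequality does not supply these without a $1/s$ weight.

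What works is that $R_s$ solves the equation of $\nabla_wf_s$ plus a source, $\partial_sR_s=\mathsf L_sR_s-\frac12AR_s+Q_s\nabla^2V_s(Q_sy)Q_s\nabla_yf_s$ with $R_0=0$. So the energy identity for $Y_s$ retains the dissipation $-\mathcal D_s(r_s)-\int f_sr_s^\top Ar_s\dd\boldsymbol\mu_s$. Integrating it bounds both needed quantities by $(\int_0^\infty\sigma_s\dd s)^2\le2\eta^2\Ent_{\boldsymbol\mu}(f)$.

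Finally, take $\omega\equiv1$. Then ``$\mathcal D_q+\mathcal I_p\gtrsim\Ent$'' is just the exact decomposition combined with $\int_0^\infty\mathcal I^{w,A}_s\dd s\le\mathcal I_p$. No Bakry--\'Emery input is needed and $\eta$ plays no role there; $\eta<1$ enters only through the force-defect pairing with the position score. With a decaying weight, the decomposition no longer ties the weighted position information to entropy. A multiscale Bakry--\'Emery bound concerns the scale-zero Fisher information of the position law, so that step would be unsupported.
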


The proof gives \eqref{eq:ent-abstract-decay} with the explicit constants
\begin{equation}
  c_{\gamma,\eta_*}=\frac{(1-\eta_*)\theta_{\gamma,\eta_*}}{1+\theta_{\gamma,\eta_*}}, \qquad C_{\gamma,\eta_*}=\frac{1+\theta_{\gamma,\eta_*}}{1-\theta_{\gamma,\eta_*}},
  \label{eq:ent-general-constants}
\end{equation}
where
\begin{equation}
  \theta_{\gamma,\eta_*}=\min\left\{\frac{\gamma}{2K_{\gamma,\eta_*}}, \frac{\gamma(1-\eta_*)}{2(\gamma+\eta_*)^2}\right\}, \qquad K_{\gamma,\eta_*}=\max\{2,\gamma\}+\frac{\eta_*}{\sqrt2}.
  \label{eq:ent-general-theta}
\end{equation}
Since $0<\theta_{\gamma,\eta_*}<\frac12$, we have $c_{\gamma,\eta_*}>0$ and $1<C_{\gamma,\eta_*}<3$, and the rate $c_{\gamma,\eta_*}$ tends to zero as $\eta_*\uparrow1$. Because the constants depend only on $\gamma$ and $\eta_*$, the estimate is uniform in the dimension $d$, in the matrix $A\succeq\Id$, and in the interaction $V_0$ subject to $\eta\le\eta_*$; in particular, the oscillation of $V_0$ and the size of its gradient enter only through $\eta$.

\begin{remark}
  We compare Theorem~\ref{thm:phase-space-criterion} with the multiscale Bakry--\'Emery criterion of Bauerschmidt and Bodineau~\cite[Theorem~1.2]{BB}. That criterion assumes, at each scale, a lower bound on the smoothed interaction Hessian by a real constant that may be negative, and gives a logarithmic Sobolev inequality for $\nu$. Theorem~\ref{thm:phase-space-criterion} assumes, at each scale, the two-sided bound $\norm{Q_s\nabla^2V_s(Q_sy)Q_s}_\op\le b(s)$, with $\eta<1$, and gives entropy decay for the underdamped Langevin dynamics.
\end{remark}

\smallskip

Near equilibrium, relative entropy is quadratic to leading order: $\Ent_{\boldsymbol\mu}(1+\kappa\psi)\approx\frac{\kappa^2}{2}\norm\psi_2^2$ for small $\kappa$ and centered $\psi$. Linearizing the entropy estimate at the equilibrium density $f=1$ therefore gives the $L^2$ decay for $\mathcal P_t^*$, and duality gives the same bound for $\mathcal P_t$.

\begin{corollary}[$L^2$ decay from entropy decay]
\label{cor:abstract-L2}
Under the assumptions of Theorem~\ref{thm:phase-space-criterion}, every $\psi\in L^2(\boldsymbol\mu)$ with $\E_{\boldsymbol\mu}\psi=0$ satisfies
\begin{equation*}
  \norm{\mathcal P_t\psi}_2\le\sqrt{C_{\gamma,\eta_*}}\,e^{-c_{\gamma,\eta_*}t/2}\norm\psi_2.
\end{equation*}
The same estimate holds for $\mathcal P_t^*$.
\end{corollary}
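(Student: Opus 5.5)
Linearization of Theorem~\ref{thm:phase-space-criterion} at the equilibrium density $f=1$ gives the bound for $\mathcal P_t^*$, and duality transfers it to $\mathcal P_t$. The reason is that the $L^2$ bound is the second-order term of the entropy estimate.

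\emph{Step 1: reduction to nice $\psi$.} By density and the $L^2(\boldsymbol\mu)$ contractivity of $\mathcal P_t^*$, I first take $\psi$ bounded, centered, and with $\sup\abs\psi\le M$. For $0<\kappa<1/(2M)$, the function $f_\kappa=1+\kappa\psi$ is a probability density with $\frac12\le f_\kappa\le\frac32$. Since $\mathcal P_t^*$ fixes constants and preserves positive lower and upper bounds, we have $\mathcal P_t^*f_\kappa=1+\kappa\mathcal P_t^*\psi$ with values in $[\frac12,\frac32]$. Also $\Ent_{\boldsymbol\mu}(f_\kappa)<\infty$, so Theorem~\ref{thm:phase-space-criterion} applies to the initial law $f_\kappa\boldsymbol\mu$.

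\emph{Step 2: expansion of entropy.} Set $\phi(x)=x\log x-(x-1)$. For $\abs{u}\le\frac12$, Taylor's theorem gives
\[\tfrac{u^2}{2}(1-\abs u)\le\phi(1+u)\le\tfrac{u^2}{2}(1+\abs u)\]
with explicit constants; any bound of the form $\frac{u^2}{2}(1+O(\abs u))$ suffices. Both $\psi$ and $\mathcal P_t^*\psi$ are centered, because $\mathcal P_t^*$ preserves $\boldsymbol\mu$-integrals, as it is adjoint to $\mathcal P_t 1=1$. Hence $\Ent_{\boldsymbol\mu}(1+\kappa h)=\int\phi(1+\kappa h)\dd\boldsymbol\mu$ for $h\in\{\psi,\mathcal P_t^*\psi\}$. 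This yields
\[\tfrac{\kappa^2}{2}(1-\kappa M)\norm{\mathcal P_t^*\psi}_2^2\le\Ent_{\boldsymbol\mu}(\mathcal P_t^*f_\kappa)\le C_{\gamma,\eta_*}e^{-c_{\gamma,\eta_*}t}\tfrac{\kappa^2}{2}(1+\kappa M)\norm\psi_2^2,\]
where I used $\sup\abs{\mathcal P_t^*\psi}\le M$. Dividing by $\kappa^2/2$ and letting $\kappa\downarrow0$ gives $\norm{\mathcal P_t^*\psi}_2^2\le C_{\gamma,\eta_*}e^{-c_{\gamma,\eta_*}t}\norm\psi_2^2$. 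The density argument of Step 1 then extends this to all centered $\psi\in L^2(\boldsymbol\mu)$, using that bounded centered functions are dense in the centered subspace and that both sides are continuous.

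\emph{Step 3: duality.} Let $\Pi$ denote the orthogonal projection onto centered functions. Both $\mathcal P_t$ and $\mathcal P_t^*$ commute with $\Pi$, because they fix constants and preserve means. Hence $\mathcal P_t\Pi=(\mathcal P_t^*\Pi)^*$ in $L^2(\boldsymbol\mu)$, and the two have the same operator norm. Therefore the bound for $\mathcal P_t^*$ on centered functions gives the same bound $\sqrt{C_{\gamma,\eta_*}}e^{-c_{\gamma,\eta_*}t/2}$ for $\mathcal P_t$.

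The only delicate point is the uniformity of the Taylor remainder. It is handled by the preservation of pointwise bounds under $\mathcal P_t^*$, which keeps $\kappa\mathcal P_t^*\psi$ uniformly small.
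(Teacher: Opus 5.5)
Your proposal is correct and follows the paper's proof. Both linearize the entropy estimate of Theorem~\ref{thm:phase-space-criterion} at $f=1$ for bounded centered $\psi$, use $\sup\abs{\mathcal P_t^*\psi}\le\sup\abs\psi$ to make the quadratic expansion of entropy uniform, and then extend by density and pass to $\mathcal P_t$ by duality; you make explicit the Taylor remainder and the duality step via the projection onto centered functions, which the paper leaves implicit.
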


\begin{remark}
  Concurrently, Barashkov, Di Ges\`u, and Laarne~\cite{BDL2026} prove exponential $L^2$ convergence for lattice sine--Gordon throughout $\beta<4\pi$, the range of Corollary~\ref{cor:SG-L2}, with estimates uniform in the lattice spacing for every coupling. Their analysis builds on the argument of Cao, Lu, and Wang~\cite{CLW}. It assumes a logarithmic Sobolev inequality for the position law, which for sine--Gordon is that of~\cite{BB}, and exponential lower-tail bounds, under the equilibrium measure, for the bottom of the spectrum of the lattice Hessian $\nabla_\varepsilon^2H_{\varepsilon,L}$, a random Schr\"odinger operator; these bounds are localized, so that the volume enters only through the logarithmic Sobolev constant. For friction proportional to the mass, Theorem~\ref{thm:bulk-entropy} gives entropy decay at a rate of order $m$, uniformly in the lattice spacing and the volume.
\end{remark}

\subsection{Uniform entropy decay for lattice sine--Gordon}
\label{subsec:SG-main}
For lattice sine--Gordon, the criterion $\eta < 1$ gives the same entropy decay constants as the lattice is refined and the volume is enlarged.
The change of variables of \S~\ref{subsec:scaling}, which measures lengths and times in units of $m^{-1}$, maps \eqref{eq:langevin} to the same dynamics at unit mass, with the dimensionless parameters $m\varepsilon$, $mL$, $z\,m^{-(2-\alpha)}$, and $\gamma/m$ in place of $(\varepsilon,L,z,\gamma)$. With a slight abuse of notation, we retain the notations $\varepsilon$, $L$, $z$, and $\gamma$ for the rescaled parameters and state the results at $m=1$. The isometry $q=\varepsilon\varphi$ of \S~\ref{subsec:scaling} then writes $H_{\varepsilon,L}$ as a potential $U$ of the form \eqref{eq:potentialclass}, with $A=A_\varepsilon=-\Delta_\varepsilon+\Id\succeq\Id$ and with $V_0$ given by the cosine term of \eqref{eq:energy} in the variable $q$, and it preserves Hessian operator norms. We write $b_{\varepsilon,L}$ and $\eta_{\varepsilon,L}$ for the resulting function $b$ and parameter $\eta$ of \S~\ref{subsec:entropy-criterion}.
The second main result of the paper is the following uniform entropy decay estimate.

\begin{theorem}[Uniform entropy decay for lattice sine--Gordon]
\label{thm:bulk-entropy}
Let $0<\beta<4\pi$ and $m=1$. There exists $z_*(\beta)>0$ such that, for $\gamma>0$ and
\begin{equation}
  0<\varepsilon\le1,\qquad L\ge1,\qquad L/\varepsilon\in\mathbb N,\qquad |z|\le z_*(\beta),
  \label{eq:bulk-entropy-regime}
\end{equation}
the law $\boldsymbol\varrho_t$ of \eqref{eq:langevin}, started from any $\boldsymbol\varrho_0$ with $\Ent(\boldsymbol\varrho_0\mid\boldsymbol\mu_{\varepsilon,L})<\infty$, satisfies
\begin{equation}
  \Ent(\boldsymbol\varrho_t\mid\boldsymbol\mu_{\varepsilon,L}) \le C_\gamma e^{-c_\gamma t} \Ent(\boldsymbol\varrho_0\mid\boldsymbol\mu_{\varepsilon,L}),\qquad t\ge0,
  \label{eq:bulk-entropy-decay}
\end{equation}
with constants $c_\gamma,C_\gamma>0$ that depend only on $\gamma$.
\end{theorem}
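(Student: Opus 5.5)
The plan is to deduce Theorem~\ref{thm:bulk-entropy} from Theorem~\ref{thm:phase-space-criterion}. It suffices to find $z_*(\beta)$ such that $\eta_{\varepsilon,L}\le\eta_*:=\frac12$ for all parameters in \eqref{eq:bulk-entropy-regime}. Then \eqref{eq:bulk-entropy-decay} holds with $c_\gamma=c_{\gamma,1/2}$ and $C_\gamma=C_{\gamma,1/2}$ from \eqref{eq:ent-general-constants}.

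The first step is the reduction to the abstract setting. The isometry $q=\varepsilon\varphi$ writes \eqref{eq:langevin} as \eqref{eq:langevindy} with $A=-\Delta_\varepsilon+\Id\succeq\Id$. The potential $V_0$ is the cosine term, which is smooth and bounded with all derivatives on each fixed lattice. Entropies and Hessian operator norms are unchanged, so it remains to bound $b_{\varepsilon,L}(s)$ uniformly in $\varepsilon$ and $L$.

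The main work is a pointwise profile bound of the form
\[
b_{\varepsilon,L}(s)\le C_\beta|z|\,\min\{\varepsilon^{-\alpha},\,s^{-\alpha/2}\}\,e^{-cs}.
\]
I would split the scales into three ranges.
\begin{itemize}
\item For $s\le\varepsilon^2$, the smoothing is below the lattice scale. I would bound $\|Q_s\nabla^2V_s Q_s\|_\op\le\sup\|\nabla^2V_s\|_\op$ using $\|Q_s\|_\op\le1$. The second derivatives of $V_s$ in the variable $q$ are controlled as for $V_0$. The cosine term has diagonal Hessian of size $\beta|z_\varepsilon|$, and the Polchinski evolution over this short range does not enlarge it. This gives the bound $\varepsilon^{-\alpha}$.
\item For $\varepsilon^2\le s\lesssim1$, I would use the charge-expansion and Fourier-representation estimates of \cite[Sections~3.3--3.4]{BB}. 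They express $V_s$ as a sum of local charge contributions. The Gaussian variance gained by scale $s$, roughly $\frac1{4\pi}\log(s/\varepsilon^2)$, improves the effective coupling from $z\varepsilon^{-\alpha}$ to $zs^{-\alpha/2}$. Summing over charges with their exponential spatial decay should give a bound on $\|\nabla^2V_s\|_\op$ by a Schur test, uniform in $L$, and hence $b(s)\lesssim|z|s^{-\alpha/2}$ for small $z$.
\item For $s\gtrsim1$, the mass gives $\|Q_s\|_\op\le e^{-s/2}$, so $b(s)\lesssim|z|e^{-s}$.
\end{itemize}

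Squaring and integrating the profile gives
\[
\eta_{\varepsilon,L}^2\lesssim z^2\Bigl(\varepsilon^{2-2\alpha}+\int_{\varepsilon^2}^1 s^{-\alpha}\dd s+1\Bigr)\lesssim_\beta z^2.
\]
This uses $\alpha<1$, that is, $\beta<4\pi$. Choosing $z_*$ small then gives $\eta\le\frac12$.

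The main obstacle is the middle range: extracting from \cite{BB} an operator-norm bound on the Hessian of $V_s$ that is uniform in the volume, rather than a lower bound only. I would go through the Fourier/charge representation, bounding the kernel $\partial_x\partial_{x'}V_s$ by $|z|s^{-\alpha/2}$ times a decaying function of $|x-x'|/\sqrt s$ whose row sums are bounded. Symmetry of the kernel then gives the two-sided operator bound.
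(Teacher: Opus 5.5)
Your proposal is correct and follows the paper's route: reduce to Theorem~\ref{thm:phase-space-criterion} with $\eta_*=\frac12$ (so $c_\gamma=c_{\gamma,1/2}$, $C_\gamma=C_{\gamma,1/2}$), establish the three-regime profile \eqref{eq:SG-square-profile} for $b_{\varepsilon,L}$, and square-integrate using $\alpha<1$, as in Proposition~\ref{prop:SG-square-budget}. The one difference is that the paper does not rebuild the profile regime by regime: it reads off $b_{\varepsilon,L}(s)\le C_\beta|z|\,\ell(s)^{-\alpha}e^{-s}$, with $\ell(s)=(\varepsilon\vee\sqrt s)\wedge1$, directly from the two-sided smoothed-Hessian bound \cite[(3.53)]{BB} via the change of variables $s=\varepsilon^2t$, so this single citation covers all scales at once, including your ``main obstacle'' and your large-$s$ step, which needs $\sup_q\|\nabla^2V_s(q)\|_{\op}\lesssim|z|$ uniformly in $s\ge1$ and not just $\|Q_s\|_{\op}\le e^{-s/2}$.
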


We will see in Proposition~\ref{prop:SG-square-budget} that $\eta_{\varepsilon,L}\le1/2$ holds in the regime \eqref{eq:bulk-entropy-regime}, and hence the proof applies Theorem~\ref{thm:phase-space-criterion} with $\eta_*=1/2$. We write $c_\gamma$, $C_\gamma$, $\theta_\gamma$, $K_\gamma$ for the constants \eqref{eq:ent-general-constants} and \eqref{eq:ent-general-theta} at this value:
\begin{equation}
  c_\gamma=\frac{\theta_\gamma}{2(1+\theta_\gamma)}, \qquad C_\gamma=\frac{1+\theta_\gamma}{1-\theta_\gamma},
  \label{eq:bulk-entropy-constants}
\end{equation}
where
\begin{equation*}
  \theta_\gamma=\min\left\{\frac{\gamma}{2K_\gamma},\frac{\gamma}{4(\gamma+1/2)^2}\right\}, \qquad K_\gamma=\max\{2,\gamma\}+\frac1{2\sqrt2}.
\end{equation*}
As in \S~\ref{subsec:entropy-criterion}, $1<C_\gamma<3$, and linearization gives the corresponding $L^2$ estimate (Corollary~\ref{cor:SG-L2}).

The smallness assumption of Theorem~\ref{thm:bulk-entropy} concerns the dimensionless local coupling $z$. To see how the interaction behaves as $\varepsilon\downarrow0$ and $L\to\infty$, consider the constant fields $\varphi_x=c$, $x\in\Omega_{\varepsilon,L}$.
The cosine term $2z_\varepsilon\varepsilon^2\sum_x\cos(\sqrt\beta\,\varphi_x)$ in \eqref{eq:energy} is a sum over the $(L/\varepsilon)^2$ sites of $\Omega_{\varepsilon,L}$, and it attains its extreme values $\pm2|z|L^2\varepsilon^{-\alpha}$ at the constant fields with $z\cos(\sqrt\beta\,c)=\pm|z|$.
At a maximizing field, for example $c=0$ if $z\ge0$ and $c=\pi/\sqrt\beta$ if $z<0$, the Hessian $\nabla_\varepsilon^2H_{\varepsilon,L}$ with respect to $\ip\cdot\cdot_\varepsilon$ and its smallest eigenvalue, whose eigenvectors are the constant fields, are
\begin{equation}
 \nabla_\varepsilon^2H_{\varepsilon,L}(\varphi)=A_\varepsilon-2\beta|z|\varepsilon^{-\alpha}\Id,\qquad \lambda_{\min}\bigl(\nabla_\varepsilon^2H_{\varepsilon,L}(\varphi)\bigr)=1-2\beta|z|\varepsilon^{-\alpha}.
  \label{eq:intro-microscopic-curvature}
\end{equation}
For any fixed nonzero $z$ in the regime of the theorem, the extreme values of the cosine term are unbounded as $\varepsilon\downarrow0$ or $L\to\infty$, and the smallest eigenvalue in \eqref{eq:intro-microscopic-curvature} tends to $-\infty$ as $\varepsilon\downarrow0$. In particular, the potential is nonconvex for small $\varepsilon$.

Theorem~\ref{thm:bulk-entropy} is uniform in these limits because the constants of Theorem~\ref{thm:phase-space-criterion} depend on the interaction only through $\eta$, and the scale dependence of $b_{\varepsilon,L}$ keeps $\eta_{\varepsilon,L}$ bounded. The charge-expansion estimates of Bauerschmidt and Bodineau control $b_{\varepsilon,L}$ at every scale: in the regime of Theorem~\ref{thm:bulk-entropy}, the proof of Proposition~\ref{prop:SG-square-budget} gives
\begin{equation}
  b_{\varepsilon,L}(s)\leq C_\beta|z|
  \begin{cases}
    \varepsilon^{-\alpha},&0\leq s\leq\varepsilon^2,\\
    s^{-\alpha/2},&\varepsilon^2\leq s\leq1,\\
    e^{-s},&s\geq1,
  \end{cases}
  \qquad \alpha=\frac{\beta}{4\pi},
  \label{eq:SG-square-profile}
\end{equation}
with $C_\beta$ independent of $\varepsilon$ and $L$. Below the lattice scale, the bound has the order of the microscopic Hessian. Above it, the bound is determined by the observation scale and is uniform in the cutoff: the power $s^{-\alpha/2}$ describes the strength of the Wick-ordered interaction after fluctuations up to length $\sqrt{s}$ have been averaged out, and the mass gives exponential decay beyond scale one. The restriction on $\beta$ comes from the power law, through the square integrability condition
\begin{equation*}
  \int_0^1s^{-\alpha}\dd s<\infty
  \quad\Longleftrightarrow\quad \alpha<1
  \quad\Longleftrightarrow\quad \beta<4\pi.
\end{equation*}
Under this condition, the contribution of the interval $[0,\varepsilon^2]$ to $\eta_{\varepsilon,L}^2$, of order $z^2\varepsilon^{2-2\alpha}$, vanishes as $\varepsilon\downarrow0$.

For general mass $m>0$, undoing the scaling of \S~\ref{subsec:scaling} gives the physical-time rate $c_\gamma m$, with $\varepsilon$, $L$, $z$, and $\gamma$ in the hypotheses and constants understood as the rescaled parameters. For friction proportional to the mass, the Gaussian constant mode shows that this order in $m$ is optimal for the stated family; see \S~\ref{subsec:mass-calibration}.

\subsection{Strategy of the proof}
\label{sec:proof-ideas}
We outline the ideas behind Theorem~\ref{thm:phase-space-criterion}. Hamiltonian transport turns position dependence into momentum dependence, and hypocoercivity quantifies this transfer by adding to the entropy a cross term that couples position and momentum derivatives~\cite{Villani}. The kinetic derivative of a cross term built from the density itself contains the Hessian of $U$, whose smallest eigenvalue for lattice sine--Gordon tends to $-\infty$ as $\varepsilon \to 0$; see \eqref{eq:intro-microscopic-curvature}. We build the cross term from the coarse-grained densities of the Polchinski flow, in which the interaction enters through the smoothed interaction Hessian, and integrate it over all scales; this is the corrector \eqref{eq:ent-corrector} of \S~\ref{subsec:corrector-size}. Sections~\ref{sec:scale-calculus}--\ref{sec:modified-entropy} carry out the argument for smooth initial densities bounded away from zero, and Appendix~\ref{app:regularity} supplies the regularity estimates and the approximation of general initial laws of finite entropy.

Differentiating the cross term at a fixed scale requires a kinetic equation for the coarse-grained density: evolving a law and then observing it should give the same law as observing first and then evolving (Figure~\ref{fig:kinetic-observation}). The position observation is that of the Polchinski flow, and we match the momentum observation to it (\S~\ref{subsec:phase-construction}). On an eigenmode of $A$, with position and momentum measured in their equilibrium units, harmonic motion is a rotation; attenuating both coordinates by the same factor and adding isotropic Gaussian noise commutes with this rotation and with the momentum friction and noise. For the interacting dynamics, the diagram commutes up to one forcing term, the force defect, the difference between the interaction force averaged over the unresolved fluctuations under the evolving law and the same average at equilibrium, which is the renormalized force (Lemma~\ref{lem:ent-intertwining}).

The corrector is built on the entropy decomposition of Bauerschmidt and Bodineau, which writes entropy as the information that the observation loses about the state, summed over all scales (Lemma~\ref{lem:ent-decomposition}). The rate of loss is a Fisher information that involves both the position and momentum scores, the logarithmic gradients of the coarse-grained density, whereas friction dissipates entropy through momentum alone. The corrector pairs the two scores at each scale and integrates the pairing over scales. Free Hamiltonian transport rotates position dependence into momentum dependence at every scale, so the kinetic derivative of the corrector produces the position part of the entropy decomposition with a negative sign (Lemma~\ref{lem:ent-exact-cross-evolution}). The decomposition also bounds the corrector by entropy, so the modified entropy is comparable to entropy (Lemma~\ref{lem:ent-corrector-size}).

The remaining terms in the kinetic derivative of the corrector involve the coarse-grained densities at every scale, while friction controls only the momentum Fisher information of the density itself, at scale zero. Coarse-graining bridges this mismatch of scales. In the momentum variable, the Polchinski flow is an Ornstein--Uhlenbeck smoothing determined by $A$, along which the momentum Fisher information decreases, so the control by friction extends to every scale (Lemma~\ref{lem:ent-velocity-fisher}). The interaction enters only through the force defect, which vanishes at scale zero and is generated along the flow by the smoothed interaction Hessian acting on position derivatives. Since the entropy decomposition controls position derivatives in mean square over scales, the accumulated force defect is measured against entropy by the $L^2$ norm $\eta$ of the bound $b$ in the scale variable (Proposition~\ref{prop:ent-force-energy}). It removes at most a fraction $\eta$ of the negative entropy produced by free Hamiltonian transport, and the hypothesis $\eta\leq\eta_*<1$ leaves the rest to drive the decay (Proposition~\ref{prop:ent-corrector-differential}).

\suppressfloats[t]
\section{Coarse graining and kinetic evolution}
\label{sec:scale-calculus}

We construct the observation laws and derive their evolution in both coarse-graining scale and kinetic time, establishing the commuting diagram in Figure~\ref{fig:kinetic-observation}, which is fundamental for constructing the corrector in the modified entropy functional.

\subsection{Matched coarse graining in phase space}
\label{subsec:phase-construction}

In this subsection, we extend the position-space Polchinski flow of \S~\ref{subsec:polchinski-flow} to phase space through a Gaussian observation of position and momentum. We match the momentum observation to the position observation such that the observation process commutes with the free kinetic dynamics, \eqref{eq:langevindy} with $V_0=0$. Applied at each kinetic time, this observation gives the vertical arrows of Figure~\ref{fig:kinetic-observation}.

\begin{figure}[t]
\centering
\begin{tikzpicture}[>=stealth,
    state/.style={draw=black!60,rounded corners=2pt,inner sep=5pt,
      minimum width=3.4cm,font=\small},
    lab/.style={font=\small,align=center},
    scale arrow/.style={->,semithick,draw=RoyalBlue!70!black},
    kinetic arrow/.style={->,semithick,draw=RoyalRed!85!black}]
  \node[lab] at (3,-0.6) {kinetic time $0$};
  \node[lab] at (12.3,-0.6) {kinetic time $t$};
  \node[lab,anchor=east] at (1,-1.25) {$s=0$};
  \node[lab,anchor=east] at (1,-3.25) {scale $s$};
  \node[lab,anchor=east] at (1,-4.4) {$s\to\infty$};
  \node[state] (f0) at (3,-1.25) {$(q_0,p_0)\sim\boldsymbol\varrho_0$};
  \node[state] (ft) at (12.3,-1.25) {$(q_t,p_t)\sim\boldsymbol\varrho_t$};
  \node[state] (c0) at (3,-3.25) {$(y_{0,s},w_{0,s})\sim\boldsymbol\varrho_{0,s}$};
  \node[state] (ct) at (12.3,-3.25) {$(y_{t,s},w_{t,s})\sim\boldsymbol\varrho_{t,s}$};
  \node[lab] (e0) at (3,-4.4) {$\mathsf N(0,\mathbf C_\infty)$};
  \node[lab] (et) at (12.3,-4.4) {$\mathsf N(0,\mathbf C_\infty)$};
  \draw[kinetic arrow] (f0) -- node[lab,above]{underdamped Langevin \eqref{eq:langevindy}}
    node[lab,below]{$g_t=\mathcal P_t^*g_0$} (ft);
  \draw[kinetic arrow] (c0) -- node[lab,above]{coarse kinetic equation \eqref{eq:ent-coarse-density-equation}}
    node[lab,below]{$\partial_tg_{t,s}=-\mathcal A_sg_{t,s}+\gamma\mathcal S_wg_{t,s}-\nabla_w^*R_{t,s}$} (ct);
  \draw[scale arrow] (f0) -- node[lab,left]{Gaussian observation\\
    $g_{0,s}=\mathsf P_sg_0$} (c0);
  \draw[scale arrow] (ft) -- node[lab,left]{Gaussian observation\\
    $g_{t,s}=\mathsf P_sg_t$} (ct);
  \draw[scale arrow,dashed] (c0) -- (e0);
  \draw[scale arrow,dashed] (ct) -- (et);
\end{tikzpicture}
\caption{Kinetic evolution and Gaussian observation. Horizontal (red) arrows follow kinetic time at fixed scale, and vertical (blue) arrows follow scale at fixed kinetic time, along the phase-space Polchinski flow. The state $(q_t,p_t)$ has law $\boldsymbol\varrho_t$, and its observation $(y_{t,s},w_{t,s})$ has law $\boldsymbol\varrho_{t,s}$; the densities $g_t$ and $g_{t,s}$ on the arrows are relative to $\boldsymbol\mu$ and $\boldsymbol\mu_s$, and $\mathsf P_s$ is the coarse-graining operator \eqref{eq:ent-coarse-operator}. The lower row is the exact evolution of $g_{t,s}$, with forcing by the force defect $R_{t,s}$; for $V_0=0$, the force defect vanishes and both rows are the free kinetic dynamics. As $s\to\infty$, the observation laws tend to the Gaussian law (dashed arrows).}
\label{fig:kinetic-observation}
\end{figure}

\paragraph{Matched Gaussian decomposition.}
With $\mathbf C_\infty:=\operatorname{diag}(A^{-1},\Id)$, the invariant measure \eqref{eq:ent-abstract-law} is $\boldsymbol\mu(\dd q\,\dd p)=Z_0^{-1}e^{-V_0(q)}\,\mathsf N(0,\mathbf C_\infty)(\dd q\,\dd p)$, a perturbation of the Gaussian reference measure $\mathsf N(0,\mathbf C_\infty)$ by the interaction $V_0$, which depends on position alone. We attenuate position and momentum by the same matrix $Q_s$, and write $\mathbf Q_s=\operatorname{diag}(Q_s,Q_s)$. At scale $s\geq0$, the Gaussian observation of a state $(q,p)$ is
\begin{equation} \label{eq:phasespaceobv}
  \begin{pmatrix}y_s\\w_s\end{pmatrix}
  =\mathbf Q_s\begin{pmatrix}q\\p\end{pmatrix}+\boldsymbol\xi_s,
  \qquad \boldsymbol\xi_s\sim\mathsf N(0,\mathbf C_s),
\end{equation}
with $(q,p)$ and $\boldsymbol\xi_s$ independent. Here $(y_s,w_s)$ is a noisy observation of $(q,p)$ in phase space. As in \eqref{eq:covdecomp}, the observation process \eqref{eq:phasespaceobv} preserves the reference measure $\mathsf N(0,\mathbf C_\infty)$ if and only if $\mathbf C_\infty=\mathbf Q_s\mathbf C_\infty\mathbf Q_s+\mathbf C_s$ for all $s$, which determines the noise covariance. Since $\mathbf Q_s$ commutes with $\mathbf C_\infty$,
\begin{equation}
  \mathbf C_s = (\Id-\mathbf Q_s^2)\mathbf C_\infty = \begin{pmatrix}C_s&0\\0&AC_s\end{pmatrix},\qquad
  \frac\dd{\dd s}\mathbf C_s=\begin{pmatrix}Q_s^2&0\\0&AQ_s^2\end{pmatrix}.
  \label{eq:ent-BB-block-covariance}
\end{equation}
We call this \emph{matched coarse graining}, since the same fraction $\Id-Q_s^2$ of the equilibrium variance is assigned to the fluctuation in position and in momentum.
Since $\mathbf C_0=0$, $\frac\dd{\dd s}\mathbf C_s\succeq0$, and $\mathbf C_s\to\mathbf C_\infty$ as $s\to\infty$, the family $s\mapsto\mathbf C_s$ is a covariance decomposition of $\mathbf C_\infty$ on $\R^{2d}$ in the sense of \cite[Section~2.1]{BB}.

\paragraph{Commutation with the free dynamics.}
For $V_0=0$, the underdamped Langevin dynamics \eqref{eq:langevindy} is the Gaussian process
\begin{equation} \label{eq:freekinetic}
  \dd\begin{pmatrix}q_t\\p_t\end{pmatrix}
  =\mathbf M\begin{pmatrix}q_t\\p_t\end{pmatrix}\dd t
  +\begin{pmatrix}0\\\sqrt{2\gamma}\,\dd W_t\end{pmatrix},
  \qquad \mathbf M=\begin{pmatrix}0&\Id\\-A&-\gamma\Id\end{pmatrix}.
\end{equation}
Since $\mathbf Q_s$ commutes with $\mathbf M$, and both the Gaussian observation \eqref{eq:phasespaceobv} and the free kinetic dynamics \eqref{eq:freekinetic} preserve $\mathsf N(0,\mathbf C_\infty)$, their Gaussian transition kernels commute: in either order, the composition sends the initial state $(q,p)$ to the law
\begin{equation*}
  \mathsf N\bigl(\mathbf Q_se^{t\mathbf M}(q,p)^\top,\ \mathbf C_\infty-\mathbf Q_se^{t\mathbf M}\mathbf C_\infty e^{t\mathbf M^\top}\mathbf Q_s\bigr).
\end{equation*}
This is the reason for the common attenuation $Q_s$ in position and momentum: a block-diagonal matrix $\operatorname{diag}(Q,Q')$ commutes with $\mathbf M$ if and only if $Q'=Q$ and $Q$ commutes with $A$. Preservation of $\mathsf N(0,\mathbf C_\infty)$ then fixes the noise covariance to be the matched covariance \eqref{eq:ent-BB-block-covariance}.
Hence, for $V_0=0$ and every initial law, observing and then evolving by \eqref{eq:freekinetic} gives the same law as evolving and then observing: the bottom row of Figure~\ref{fig:kinetic-observation} is then the free kinetic dynamics, the same as the top row.

\paragraph{Observation law and conditional expectation.}
Similarly to \S~\ref{subsec:polchinski-flow}, we use the reversibility of the Gaussian observation kernel to identify the observation law and the conditional expectation that defines coarse graining. The observation \eqref{eq:phasespaceobv} takes values in the fixed coarse space $\R^{2d}$ at every scale; we call its points $(y,w)$ the \emph{coarse variables}. Since $\mathbf Q_s$ is symmetric, commutes with $\mathbf C_\infty$, and satisfies $\mathbf Q_s\mathbf C_\infty\mathbf Q_s+\mathbf C_s=\mathbf C_\infty$, the argument for \eqref{eq:gaussian-identity} gives
\begin{equation}
  \mathsf N(0,\mathbf C_\infty)(\dd q\,\dd p)\,\mathsf N\bigl(\mathbf Q_s(q,p)^\top,\mathbf C_s\bigr)(\dd y\,\dd w)
  =\mathsf N\bigl(\mathbf Q_s(y,w)^\top,\mathbf C_s\bigr)(\dd q\,\dd p)\,\mathsf N(0,\mathbf C_\infty)(\dd y\,\dd w).
  \label{eq:phase-gaussian-identity}
\end{equation}
The joint law $ \boldsymbol\pi_s$ of $(q,p)\sim\boldsymbol\mu$ and its observation $(y_s,w_s)$ is $\boldsymbol\mu(\dd q\,\dd p)\,\mathsf N\bigl(\mathbf Q_s(q,p)^\top,\mathbf C_s\bigr)(\dd y\,\dd w)$. Recalling \eqref{eq:ent-abstract-law} and multiplying \eqref{eq:phase-gaussian-identity} by $Z_0^{-1}e^{-V_0(q)}$ gives
\begin{align}
  \boldsymbol\pi_s(\dd q\,\dd p\,\dd y\,\dd w)
  &{}=Z_0^{-1}e^{-V_0(q)}\,\mathsf N(0,\mathbf C_\infty)(\dd q\,\dd p)\,\mathsf N\bigl(\mathbf Q_s(q,p)^\top,\mathbf C_s\bigr)(\dd y\,\dd w) \notag \\
  &=Z_0^{-1}e^{-V_0(q)}\,\mathsf N\bigl(\mathbf Q_s(y,w)^\top,\mathbf C_s\bigr)(\dd q\,\dd p)\,\mathsf N(0,\mathbf C_\infty)(\dd y\,\dd w), \label{eq:measurepis}
\end{align}
which extends the joint law $\pi_s$ in \eqref{eq:joint-law} to phase space. Since $V_0$ depends only on $q$ and $\mathbf C_s$ is block diagonal, integrating \eqref{eq:measurepis} over $(q,p)$ and using \eqref{eq:polchinski2} gives the observation law, which we call the \emph{renormalized phase-space measure}:
\begin{equation}
  \boldsymbol\mu_s(\dd y\,\dd w)=Z_0^{-1}e^{-V_s(Q_sy)}\,\mathsf N(0,\mathbf C_\infty)(\dd y\,\dd w)=\nu_s(\dd y)\,\mathsf N(0,\Id)(\dd w).
  \label{eq:ent-coarse-measure}
\end{equation}
In the coarse variables, the renormalized potential enters through the composition $V_s(Q_sy)$, with gradient $\nabla_y[V_s(Q_sy)]=Q_s\nabla V_s(Q_sy)$.

Bayes' formula gives the posterior
\begin{equation*}
  \boldsymbol\pi_s(\dd q\,\dd p\mid y,w)
  =e^{V_s(Q_sy)-V_0(q)}\,\mathsf N\bigl(\mathbf Q_s(y,w)^\top,\mathbf C_s\bigr)(\dd q\,\dd p).
\end{equation*}
We define the \emph{coarse-graining operator} by the conditional expectation given the observation, i.e., $\mathsf P_s\psi(y,w)=\E_{\boldsymbol\pi_s}[\psi(q,p)\mid y,w]$.
For $\psi\in L^1(\boldsymbol\mu)$ or measurable $\psi\geq0$,
\begin{equation}
  \mathsf P_s\psi(y,w)
  =e^{V_s(Q_sy)}\E_{{(\zeta_q,\zeta_p)\sim\mathsf N(0,\mathbf C_s)}}\!\left[e^{-V_0(Q_sy+\zeta_q)}\psi(Q_sy+\zeta_q,Q_sw+\zeta_p)\right],
  \label{eq:ent-coarse-operator}
\end{equation}
where $(\zeta_q,\zeta_p)$ is the Gaussian fluctuation around $\mathbf Q_s(y,w)^\top$.
On functions of position alone, $\mathsf P_s$ is the posterior average \eqref{eq:posterior-average}. The pushforward $(\mathbf Q_s)_\#\boldsymbol\mu_s$ is the renormalized measure of Bauerschmidt and Bodineau, and $\psi\mapsto(\mathsf P_s\psi)\circ\mathbf Q_s^{-1}$ is their Polchinski semigroup; see Appendix~\ref{app:BB-identification} for the details.

Let $f$ be a probability density relative to $\boldsymbol\mu$ and write $f_s=\mathsf P_sf$. If $(q,p)$ has law $f\boldsymbol\mu$, its observation $(y_s,w_s)$ has law $f_s\boldsymbol\mu_s$, that is, the coarse-graining operator maps the density of a law relative to $\boldsymbol\mu$ to the density of its observation law relative to $\boldsymbol\mu_s$. Indeed, the joint law of $(q,p)$ and $(y_s,w_s)$ is $f(q,p)\,\boldsymbol\pi_s(\dd q\,\dd p\,\dd y\,\dd w)$. Its $(q,p)$-marginal is $f\boldsymbol\mu$, and since $\mathsf P_s$ is the conditional expectation given the observation, its $(y,w)$-marginal is $f_s\boldsymbol\mu_s$. As a conditional expectation, $\mathsf P_s$ fixes constants, and $f_s\boldsymbol\mu_s$ is a probability law:
\begin{equation}
  \mathsf P_s1=1,\qquad \int f_s\dd\boldsymbol\mu_s=\int f\dd\boldsymbol\mu=1.
  \label{eq:ent-coarse-mass}
\end{equation}
At $s=0$, $\mathbf Q_0=\Id$ and $\mathbf C_0=0$, so $\boldsymbol\mu_0=\boldsymbol\mu$ and $f_0=f$. We call the family $s\mapsto(\boldsymbol\mu_s,f_s)$ the \emph{phase-space Polchinski flow}.
For the underdamped Langevin dynamics, we apply this construction at each kinetic time. The observation of $(q_t,p_t)\sim\boldsymbol\varrho_t$ is $(y_{t,s},w_{t,s})^\top=\mathbf Q_s(q_t,p_t)^\top+\boldsymbol\xi_s$, with $\boldsymbol\xi_s\sim\mathsf N(0,\mathbf C_s)$ independent of the dynamics. Then, the density of $(y_{t,s},w_{t,s})$ relative to $\boldsymbol\mu_s$ and its law are
\begin{equation}
  g_{t,s}=\mathsf P_sg_t,\qquad
  \boldsymbol\varrho_{t,s}(\dd y\,\dd w)=g_{t,s}(y,w)\,\boldsymbol\mu_s(\dd y\,\dd w).
  \label{eq:ent-observed-law}
\end{equation}

\subsection{Scale evolution and entropy decomposition}
\label{subsec:scale-generator}

As the scale $s \to \infty$, the noisy observation forgets the state, and the relative entropy $\Ent_{\boldsymbol\mu_s}(f_s)$ between the observation laws of $f\boldsymbol\mu$ and $\boldsymbol\mu$ decreases from $\Ent_{\boldsymbol\mu}(f)$ to zero. We will compute its rate of decrease in Lemma~\ref{lem:ent-scale-generator}, by the scale evolution of $f_s$ and $\boldsymbol\mu_s$, which gives a Fisher information in position and momentum. Integrating this rate over all scales then gives the entropy decomposition (see Lemma~\ref{lem:ent-decomposition}). In the proofs below, we mainly consider \emph{regular densities} relative to $\boldsymbol\mu$, satisfying
\begin{equation}
  f\in C_b^\infty(\R^{2d}),\qquad \E_{\boldsymbol\mu} f=1,\qquad \inf f>0.
  \label{eq:ent-regular-class}
\end{equation}
The extension to general densities is discussed in Appendix~\ref{app:regularity}.

\paragraph{Scores and Fisher information.}
Let $f$ be a regular density. The \emph{position and momentum scores} of $f_s=\mathsf P_sf$ are its logarithmic gradients $\nabla_y\log f_s$ and $\nabla_w\log f_s$. The corresponding \emph{position}, \emph{momentum}, and \emph{$A$-weighted momentum} Fisher informations of $f_s$ relative to $\boldsymbol\mu_s$ are defined by
\begin{equation}
  \begin{gathered}
  \mathcal I^y_s=\int f_s\abs{\nabla_y\log f_s}^2\dd\boldsymbol\mu_s, \qquad \mathcal I^w_s=\int f_s\abs{\nabla_w\log f_s}^2\dd\boldsymbol\mu_s,\\
  \mathcal I^{w,A}_s=\int f_s(\nabla_w\log f_s)^\top A\,\nabla_w\log f_s\dd\boldsymbol\mu_s,
  \end{gathered}
  \label{eq:ent-scale-scores}
\end{equation}
respectively.
We write $\mathcal I^y_s(f_s)$, $\mathcal I^w_s(f_s)$, and $\mathcal I^{w,A}_s(f_s)$ when the density needs to be specified. Since $f_0=f$ and $\boldsymbol\mu_0=\boldsymbol\mu$, we have $\mathcal I^w_0=\mathcal I_p(f)$ (see \eqref{eq:fishermoment}). The assumption $A\succeq\Id$ gives $\mathcal I^{w,A}_s\geq\mathcal I^w_s$.

\begin{lemma}[Polchinski generator and renormalized phase-space measure]
\label{lem:ent-scale-generator}
Define
\begin{equation}
  \mathsf L_s =\frac12\Delta_y +\frac12A:\nabla_w^2 -\left(\frac12Ay+Q_s\nabla V_s(Q_sy)\right)\cdot\nabla_y -\frac12Aw\cdot\nabla_w.
  \label{eq:ent-scale-generator}
\end{equation}
For $\psi\in C_b^\infty(\R^{2d})$, $\partial_s(\mathsf P_s\psi)=\mathsf L_s(\mathsf P_s\psi)$. In particular, for a regular density $f$ and $f_s = \mathsf P_s f$,
\begin{equation}
  \partial_sf_s=\mathsf L_sf_s.
  \label{eq:ent-scale-evolution}
\end{equation}
The renormalized phase-space measure satisfies, in the sense of distributions,
\begin{equation}\label{eq:der_mus}
  \frac\dd{\dd s}\boldsymbol\mu_s
  =-\partial_s[V_s(Q_sy)]\,\boldsymbol\mu_s
  =-\frac12\diver_y\bigl(Q_s\nabla V_s(Q_sy)\,\boldsymbol\mu_s\bigr).
\end{equation}
If $\psi\in C^{1,2}([0,\infty)\times\R^{2d})$ and $\partial_s\psi_s$, $\nabla_{(y,w)}^k\psi_s$ $(0\leq k\leq2)$ have polynomial growth locally uniformly in $s$, then
\begin{equation}
  \frac\dd{\dd s}\int \psi_s\dd\boldsymbol\mu_s =\int(\partial_s\psi_s-\mathsf L_s\psi_s)\dd\boldsymbol\mu_s.
  \label{eq:ent-dual-scale}
\end{equation}
\end{lemma}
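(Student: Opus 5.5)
The plan is to derive all three statements from the explicit Gaussian-convolution formula \eqref{eq:ent-coarse-operator}, by direct differentiation in $s$. Write
\[
  F_s(y,w)=\E_{(\zeta_q,\zeta_p)\sim\mathsf N(0,\mathbf C_s)}\bigl[(e^{-V_0}\psi)(Q_sy+\zeta_q,Q_sw+\zeta_p)\bigr],
\]
so that $\mathsf P_s\psi=e^{V_s(Q_sy)}F_s$. The function $F_s$ is the transition of the Ornstein--Uhlenbeck semigroup with generator
\[
  \mathsf L^0=\tfrac12\Delta_y+\tfrac12A:\nabla_w^2-\tfrac12Ay\cdot\nabla_y-\tfrac12Aw\cdot\nabla_w
\]
applied to $h=e^{-V_0}\psi$. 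To check this, note that this OU process has mean $e^{-sA/2}(y,w)=\mathbf Q_s(y,w)^\top$. Its position covariance is $\int_0^s e^{-uA/2}\,\Id\,e^{-uA/2}\dd u=\int_0^se^{-uA}\dd u=C_s$, and its momentum covariance is $\int_0^se^{-uA/2}Ae^{-uA/2}\dd u=AC_s$, which agrees with \eqref{eq:ent-BB-block-covariance}. Hence $\partial_sF_s=\mathsf L^0F_s$. This can also be verified directly by differentiating the Gaussian expectation in $s$, using $\frac{\dd}{\dd s}\mathbf C_s$ and $\frac{\dd}{\dd s}\mathbf Q_s=-\frac12\operatorname{diag}(A,A)\mathbf Q_s$. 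Taking $\psi=1$ gives $F_s=e^{-V_s(Q_sy)}$, and therefore
\[
  \partial_s e^{-V_s(Q_sy)}=\mathsf L^0e^{-V_s(Q_sy)}.
\]

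Next, for $\Phi_s=V_s(Q_sy)$, I apply the Doob $h$-transform computation
\[
  \partial_s(e^{\Phi}F)=e^{\Phi}\bigl(\partial_s\Phi\,F+\mathsf L^0F\bigr),\qquad
  \mathsf L^0(e^{\Phi}u)=e^{\Phi}\bigl(\mathsf L^0u+\nabla_y\Phi\cdot\nabla_yu+u\,e^{-\Phi}\mathsf L^0e^{\Phi}\bigr),
\]
with $F=e^{-\Phi}u$ and $u=\mathsf P_s\psi$. Only $y$-derivatives of $\Phi$ appear, because $\Phi$ does not depend on $w$. From $\mathsf L^0e^{-\Phi}=\partial_se^{-\Phi}$ I get
\[
  \partial_s u=\mathsf L^0u-\nabla_y\Phi\cdot\nabla_yu=\mathsf L^0u-Q_s\nabla V_s(Q_sy)\cdot\nabla_yu,
\]
and this is exactly $\mathsf L_su$. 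The sign and the factor in front of the drift must be checked carefully: the $\frac12\Delta_y$ term produces $\nabla_y(e^{-\Phi})\cdot\nabla_y F$ with coefficient one. Applying this to $\psi=f$ gives \eqref{eq:ent-scale-evolution}. Since $V_0,\psi\in C_b^\infty$, all differentiations under the Gaussian expectation are justified by dominated convergence.

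For \eqref{eq:der_mus}, the first equality is immediate from \eqref{eq:ent-coarse-measure}, since $\mathsf N(0,\mathbf C_\infty)$ does not depend on $s$ and $Z_0$ is fixed. For the second equality, I use
\[
  \partial_s\Phi=e^{\Phi}\,\bigl(-\mathsf L^0e^{-\Phi}\bigr)=-\tfrac12\Delta_y\Phi+\tfrac12|\nabla_y\Phi|^2+\tfrac12Ay\cdot\nabla_y\Phi,
\]
and compare it with
\[
  -\tfrac12\diver_y\bigl(\nabla_y\Phi\,e^{-\Phi}\mathsf N(0,\mathbf C_\infty)\bigr)
  =\tfrac12\bigl(-\Delta_y\Phi+|\nabla_y\Phi|^2+Ay\cdot\nabla_y\Phi\bigr)e^{-\Phi}\mathsf N(0,\mathbf C_\infty),
\]
where the density of the $y$-Gaussian contributes $Ay$. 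Here $-\partial_s\Phi\,\boldsymbol\mu_s$ should equal $-\frac12\diver_y(\cdots)$, which requires $\partial_s\Phi=\frac12 e^{\Phi}\diver(\cdots)\cdot(\text{density})^{-1}$. The sign is the point to double-check; I would verify it through the pairing with test functions, $\int\phi\,\partial_s\Phi\dd\boldsymbol\mu_s$ versus $\frac12\int\nabla_y\phi\cdot\nabla_y\Phi\dd\boldsymbol\mu_s$.

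Finally, \eqref{eq:ent-dual-scale} follows by differentiating under the integral, with the growth assumptions and Gaussian tails justifying dominated convergence:
\[
  \frac{\dd}{\dd s}\int\psi_s\dd\boldsymbol\mu_s=\int\partial_s\psi_s\dd\boldsymbol\mu_s-\int\psi_s\,\partial_s\Phi\dd\boldsymbol\mu_s .
\]
It then remains to show $\int\psi\,\partial_s\Phi\dd\boldsymbol\mu_s=\int\mathsf L_s\psi\dd\boldsymbol\mu_s$. This is an integration by parts: $\mathsf L^0-\nabla_y\Phi\cdot\nabla_y$ is symmetric with respect to $e^{-2\Phi}$ rather than $e^{-\Phi}$, so $\int\mathsf L_s\psi\,e^{-\Phi}\dd\mathsf N(0,\mathbf C_\infty)$ equals $\int\psi\,\mathsf L_s^{*}1$ with $\mathsf L_s^*1=e^{\Phi}\mathsf L^0 e^{-\Phi}\cdot(-1)\ldots$, and this reduces to the identity $\mathsf L^0e^{-\Phi}=\partial_se^{-\Phi}$. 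The main obstacle is this bookkeeping of signs and adjoints, together with justifying the boundary terms in the integration by parts under only polynomial growth, which the Gaussian weight and the bounded $\nabla\Phi$ handle.
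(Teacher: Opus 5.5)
Your route is the paper's. The numerator and denominator of $\mathsf P_s\psi$ are Ornstein--Uhlenbeck evolutions of $e^{-V_0}\psi$ and $e^{-V_0}$ (your $\mathsf L^0$ is the paper's $\mathcal G$), and the quotient (Doob) computation gives $\partial_s\mathsf P_s\psi=\mathsf L_s\mathsf P_s\psi$. Then \eqref{eq:ent-dual-scale} follows by integrating the Doob identity \eqref{eq:ent-normalized-OU} against $\mathsf N(0,\mathbf C_\infty)$. Gaussian invariance kills the $\mathsf L^0(e^{-\Phi}\psi_s)$ term and leaves exactly your $\mathsf L_s^*1=-e^{\Phi}\mathsf L^0e^{-\Phi}=\partial_s\Phi$ in $L^2(\boldsymbol\mu_s)$, once the ``$\ldots$'' is filled in. Your displayed product rule is written for $e^{\Phi}u$ while you need $e^{-\Phi}u$, but the conclusion $\partial_su=\mathsf L^0u-\nabla_y\Phi\cdot\nabla_yu$ is correct.

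The one actual error is in the measure equation: your expansion of $\partial_s\Phi=-e^{\Phi}\mathsf L^0e^{-\Phi}$ has all three signs flipped. Since $\mathsf L^0e^{-\Phi}=e^{-\Phi}\bigl(\frac12|\nabla_y\Phi|^2-\frac12\Delta_y\Phi+\frac12Ay\cdot\nabla_y\Phi\bigr)$, one gets $\partial_s\Phi=\frac12\Delta_y\Phi-\frac12|\nabla_y\Phi|^2-\frac12Ay\cdot\nabla_y\Phi$, which is \eqref{eq:ent-W-evolution}. With this sign, $-\partial_s\Phi$ times the density of $\boldsymbol\mu_s$ is exactly your expression for $-\frac12\diver_y(\nabla_y\Phi\,\boldsymbol\mu_s)$. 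So the second equality in \eqref{eq:der_mus} holds as stated, and the sign you flagged is fine.

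You can see this without expanding. Let $\rho$ be the Lebesgue density of $\mathsf N(0,\mathbf C_\infty)$; since $\Phi$ does not depend on $w$,
$-\frac12\diver_y(\rho\,e^{-\Phi}\nabla_y\Phi)=\frac12\diver_y(\rho\nabla_ye^{-\Phi})=\rho\,\mathsf L^0e^{-\Phi}=\rho\,\partial_se^{-\Phi}$.
This is the pointwise version of the paper's test-function argument via the Gaussian integration by parts \eqref{eq:ent-OU-ibp}.
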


Lemma~\ref{lem:ent-scale-generator} is \cite[Proposition~2.1]{BB} (see also \cite[Proposition~3.5]{BBD2024}) applied to the covariance decomposition \eqref{eq:ent-BB-block-covariance} of $\mathbf C_\infty$ with interaction $(q,p)\mapsto V_0(q)$, and written in the coarse variables. Appendix~\ref{app:BB-identification} carries out the change of variables through $\mathbf Q_s$ and identifies the measure, semigroup, generator, and entropy decomposition with those of \cite{BB}.

\begin{proof}
The coarse-graining operator \eqref{eq:ent-coarse-operator} and the
density $Z_0^{-1}e^{-V_s(Q_sy)}$ of $\boldsymbol\mu_s$ relative to
$\mathsf N(0,\mathbf C_\infty)$ are built from Gaussian averages against
the kernels $\mathsf N(\mathbf Q_s(y,w)^\top,\mathbf C_s)$, $s\geq0$.
Since $\mathbf Q_{s+u}=\mathbf Q_s\mathbf Q_u$ and
$\mathbf C_{s+u}=\mathbf C_s+\mathbf Q_s\mathbf C_u\mathbf Q_s$, these
kernels form a Markov semigroup. Its generator has drift
$\frac\dd{\dd s}\mathbf Q_s\big|_{s=0}(y,w)^\top
=-\frac12\operatorname{diag}(A,A)(y,w)^\top$, since $Q_s=e^{-sA/2}$, and
diffusion matrix $\frac\dd{\dd s}\mathbf C_s\big|_{s=0}=\operatorname{diag}(\Id,A)$
by \eqref{eq:ent-BB-block-covariance}, so it is the Ornstein--Uhlenbeck
operator
\[
  \mathcal G=\frac12\Delta_y+\frac12A:\nabla_w^2
       -\frac12Ay\cdot\nabla_y-\frac12Aw\cdot\nabla_w.
\]
The Gaussian measure $\mathsf N(0,\mathbf C_\infty)$ is invariant for
$\mathcal G$.
We will compute the scale derivatives in $s$ of the coarse-graining operator and of the density of $\boldsymbol\mu_s$ through $\mathcal G$.

We first relate $\mathsf L_s$ to $\mathcal G$. Since $e^{-V_s(Q_sy)}$ depends only on $y$ and
$\nabla_y e^{-V_s(Q_sy)}=-e^{-V_s(Q_sy)}Q_s\nabla V_s(Q_sy)$, the product
rule for $\mathcal G$ gives, for $\psi\in C^2(\R^{2d})$, the
identity
\begin{equation}
  \mathsf L_s\psi
  = e^{V_s(Q_sy)} \mathcal G\bigl(e^{-V_s(Q_sy)}\psi\bigr)
       - e^{V_s(Q_sy)} \psi\,\mathcal G\bigl(e^{-V_s(Q_sy)}\bigr),
  \label{eq:ent-normalized-OU}
\end{equation}
that is, $\mathsf L_s$ is the Doob transform of $\mathcal G$ by
$e^{-V_s(Q_sy)}$.

The equation for $\mathsf P_s\psi$ follows by writing it as a ratio.
Let $\psi\in C_b^\infty(\R^{2d})$. By \eqref{eq:ent-coarse-operator}
and \eqref{eq:polchinski2} evaluated at $Q_sy$,
$\mathsf P_s\psi=n_s/e^{-V_s(Q_sy)}$, where
\[
  n_s(y,w)=\E\bigl[e^{-V_0(Q_sy+\zeta_q)}
                    \psi(Q_sy+\zeta_q,Q_sw+\zeta_p)\bigr],
  \qquad
  e^{-V_s(Q_sy)}=\E\bigl[e^{-V_0(Q_sy+\zeta_q)}\bigr],
\]
with $(\zeta_q,\zeta_p)\sim\mathsf N(0,\mathbf C_s)$. As Gaussian
averages of the functions $e^{-V_0(q)}\psi(q,p)$ and $e^{-V_0(q)}$, which
belong to $C_b^\infty(\R^{2d})$, the numerator and the denominator both
solve the Ornstein--Uhlenbeck equation
\begin{equation}
  \partial_sn_s=\mathcal G n_s,\qquad
  \partial_s e^{-V_s(Q_sy)}=\mathcal G\bigl(e^{-V_s(Q_sy)}\bigr).
  \label{eq:ent-OU-equation}
\end{equation}
Since $n_s=e^{-V_s(Q_sy)}\mathsf P_s\psi$, the quotient rule and
\eqref{eq:ent-normalized-OU} give
\begin{align*}
  \partial_s(\mathsf P_s\psi)
  &=e^{V_s(Q_sy)}\bigl(\partial_sn_s
     -\mathsf P_s\psi\,\partial_se^{-V_s(Q_sy)}\bigr)\\
  &=e^{V_s(Q_sy)}\Bigl(\mathcal G\bigl(e^{-V_s(Q_sy)}\mathsf P_s\psi\bigr)
     -\mathsf P_s\psi\,\mathcal G\bigl(e^{-V_s(Q_sy)}\bigr)\Bigr)
  =\mathsf L_s(\mathsf P_s\psi).
\end{align*}
This proves \eqref{eq:ent-scale-evolution} with $\psi=f$.

We now derive the measure equation \eqref{eq:der_mus}, by testing against $\psi\in C_c^\infty(\R^{2d})$. The first equality
in \eqref{eq:der_mus} is the derivative of the density
$Z_0^{-1}e^{-V_s(Q_sy)}$ of $\boldsymbol\mu_s$ relative to
$\mathsf N(0,\mathbf C_\infty)$ (see \eqref{eq:ent-coarse-measure}). For the second, we integrate
$\mathcal G$ by parts. Since $\mathsf N(0,\mathbf C_\infty)$
has density proportional to $e^{-\frac12\ip y{Ay}-\frac12|w|^2}$ and $\psi$ has compact support,
Gaussian integration by parts gives, for every $\chi\in C^2(\R^{2d})$,
\begin{equation}
  \int\psi\,\mathcal G\chi\dd\mathsf N(0,\mathbf C_\infty)
  =-\frac12\int\bigl(\nabla_y\psi\cdot\nabla_y\chi
     +(\nabla_w\psi)^\top A\nabla_w\chi\bigr)\dd\mathsf N(0,\mathbf C_\infty).
  \label{eq:ent-OU-ibp}
\end{equation}
For $\chi=e^{-V_s(Q_sy)}$, only the position gradient is involved, and hence \eqref{eq:ent-OU-equation} gives
\begin{align*}
  \frac\dd{\dd s}\int\psi\dd\boldsymbol\mu_s
  &=Z_0^{-1}\int\psi\,\mathcal G\bigl(e^{-V_s(Q_sy)}\bigr)
     \dd\mathsf N(0,\mathbf C_\infty)
  =-\frac1{2Z_0}\int\nabla_y\psi\cdot\nabla_ye^{-V_s(Q_sy)}
     \dd\mathsf N(0,\mathbf C_\infty)\\
  &=\frac12\int Q_s\nabla V_s(Q_sy)\cdot\nabla_y\psi
                         \dd\boldsymbol\mu_s,
\end{align*}
which is the second equality tested against $\psi$.

Finally, we integrate \eqref{eq:ent-normalized-OU} against
$\mathsf N(0,\mathbf C_\infty)$. By the invariance of
$\mathsf N(0,\mathbf C_\infty)$ for $\mathcal G$, the first term on the right-hand side integrates to zero, and \eqref{eq:ent-OU-equation} gives
\[
  \int\mathsf L_s\psi_s\dd\boldsymbol\mu_s
  =-Z_0^{-1}\int\psi_s\,\partial_s e^{-V_s(Q_sy)}
                        \dd\mathsf N(0,\mathbf C_\infty).
\]
Differentiating under the integral and inserting this identity,
\[
  \frac\dd{\dd s}\int\psi_s\dd\boldsymbol\mu_s
  =Z_0^{-1}\int\bigl(e^{-V_s(Q_sy)}\partial_s\psi_s
     +\psi_s\,\partial_se^{-V_s(Q_sy)}\bigr)\dd\mathsf N(0,\mathbf C_\infty)
  =\int(\partial_s\psi_s-\mathsf L_s\psi_s)\dd\boldsymbol\mu_s,
\]
which is \eqref{eq:ent-dual-scale}.
\end{proof}

\paragraph{Carr\'e du champ of the Polchinski generator.}
The \emph{carr\'e du champ} of $\mathsf L_s$, with the convention of \cite{BGL}, is the bilinear form
\begin{equation*}
  \Gamma_s(\psi,\chi)=\frac12\bigl[\mathsf L_s(\psi\chi)-\psi\,\mathsf L_s\chi-\chi\,\mathsf L_s\psi\bigr],\qquad \Gamma_s(\psi)=\Gamma_s(\psi,\psi).
\end{equation*}
Note from \eqref{eq:ent-scale-generator} that only the second-order part $\frac12\Delta_y+\frac12A:\nabla_w^2$ of $\mathsf L_s$ contributes to $\Gamma_s$. We have
\begin{equation}
  \Gamma_s(\psi,\chi)=\frac12\nabla_y\psi\cdot\nabla_y\chi+\frac12(\nabla_w\psi)^\top A\nabla_w\chi,
  \label{eq:ent-carre-du-champ}
\end{equation}
and the chain rule
\begin{equation} \label{eq:chainrule}
  \mathsf L_s[a(\psi)]=a'(\psi)\,\mathsf L_s\psi+a''(\psi)\,\Gamma_s(\psi)
\end{equation}
holds for every smooth function $a\colon\R\to\R$. Since $\Gamma_s(f_s)/f_s=f_s\Gamma_s(\log f_s)$, the formula \eqref{eq:ent-carre-du-champ} and the definitions \eqref{eq:ent-scale-scores} give
\begin{equation} \label{auxeq:ident}
  \int\frac{\Gamma_s(f_s)}{f_s}\dd\boldsymbol\mu_s=\int f_s\Gamma_s(\log f_s)\dd\boldsymbol\mu_s=\frac12\bigl(\mathcal I^y_s+\mathcal I^{w,A}_s\bigr).
\end{equation}
This is the Fisher information of $f_s$ relative to $\boldsymbol\mu_s$ associated with $\Gamma_s$.

\begin{lemma}[Entropy decomposition]
\label{lem:ent-decomposition}
Let $f$ be a regular density. Then
\begin{equation}
  \frac\dd{\dd s}\Ent_{\boldsymbol\mu_s}(f_s) =-\int\frac{\Gamma_s(f_s)}{f_s}\dd\boldsymbol\mu_s =-\frac12\bigl(\mathcal I^y_s+\mathcal I^{w,A}_s\bigr),
  \label{eq:ent-decomposition-rate}
\end{equation}
and hence
\begin{equation}
  \Ent_{\boldsymbol\mu}(f)=\frac12\int_0^\infty\bigl(\mathcal I^y_s+\mathcal I^{w,A}_s\bigr)\dd s.
  \label{eq:ent-decomposition}
\end{equation}
\end{lemma}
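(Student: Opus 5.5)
The plan is to differentiate $s\mapsto\int f_s\log f_s\dd\boldsymbol\mu_s$ with the dual scale identity \eqref{eq:ent-dual-scale} of Lemma~\ref{lem:ent-scale-generator}, applied to $\psi_s=a(f_s)$ with $a(r)=r\log r$. By \eqref{eq:ent-scale-evolution}, $\partial_s\psi_s=a'(f_s)\mathsf L_sf_s$. The chain rule \eqref{eq:chainrule} gives $\mathsf L_s\psi_s=a'(f_s)\mathsf L_sf_s+a''(f_s)\Gamma_s(f_s)$. Subtracting, the first-order terms cancel and
\[
  \frac{\dd}{\dd s}\Ent_{\boldsymbol\mu_s}(f_s)=-\int a''(f_s)\Gamma_s(f_s)\dd\boldsymbol\mu_s=-\int\frac{\Gamma_s(f_s)}{f_s}\dd\boldsymbol\mu_s .
\]
The identity \eqref{auxeq:ident} then turns this into $-\frac12(\mathcal I^y_s+\mathcal I^{w,A}_s)$, which is \eqref{eq:ent-decomposition-rate}.

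The hypotheses of \eqref{eq:ent-dual-scale} must be checked for $\psi_s=f_s\log f_s$. Since $f$ is regular, $\mathsf P_s$ is a conditional expectation, so $\inf f\le f_s\le\sup f$ and $\log f_s$ is smooth with bounded values. Derivatives of $f_s$ come from differentiating \eqref{eq:ent-coarse-operator} in $(y,w)$. Each derivative brings out factors of $Q_s$ together with derivatives of $V_0$, $V_s$ and $f$. All of these are bounded locally uniformly in $s$, because $V_s\in C_b^\infty$ with bounds inherited from $V_0$ via \eqref{eq:polchinski2}. Hence $\nabla^k_{(y,w)}f_s$ for $k\le2$ are bounded. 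The derivative $\partial_sf_s=\mathsf L_sf_s$ grows at most linearly through the drift $\frac12Ay$, $\frac12Aw$. This gives the polynomial growth required, and the derivative computation is justified on every compact $s$-interval.

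To obtain \eqref{eq:ent-decomposition}, integrate the rate over $[0,S]$ and let $S\to\infty$. At $s=0$ we have $\boldsymbol\mu_0=\boldsymbol\mu$ and $f_0=f$. So it remains to show $\Ent_{\boldsymbol\mu_S}(f_S)\to0$, and I expect this step to be the main obstacle. As $S\to\infty$, $\mathbf Q_S\to0$ and $\mathbf C_S\to\mathbf C_\infty$. By \eqref{eq:ent-coarse-operator}, $f_S(y,w)$ converges pointwise to $\E_{\boldsymbol\mu}f=1$, because the posterior becomes $\boldsymbol\mu$ itself. The bounds $\inf f\le f_S\le\sup f$ make $f_S\log f_S$ uniformly bounded. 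Since $\boldsymbol\mu_S$ has density $Z_0^{-1}e^{-V_S(Q_Sy)}$ relative to $\mathsf N(0,\mathbf C_\infty)$, uniformly bounded above and below, dominated convergence gives $\int f_S\log f_S\dd\boldsymbol\mu_S\to0$. Dominated convergence is available because $V_S$ has oscillation at most $\operatorname{osc}V_0$. If pointwise convergence is delicate, an alternative is monotonicity: the rate is nonpositive, so the entropy has a limit, and the data-processing inequality combined with the Gaussian identity identifies this limit as $0$. Finally, monotone convergence handles the integral of the nonnegative integrand over $[0,\infty)$.
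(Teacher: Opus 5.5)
Your proposal is correct and follows the paper's proof essentially step for step. You apply the dual scale identity \eqref{eq:ent-dual-scale} to $\psi_s=f_s\log f_s$ and use the chain rule \eqref{eq:chainrule} and \eqref{auxeq:ident} to get the rate, with the growth conditions justified as in Lemma~\ref{lem:ent-fixed-cutoff-regularity}. You then obtain $\Ent_{\boldsymbol\mu_s}(f_s)\to0$ exactly as in Lemma~\ref{lem:ent-scale-endpoints}: pointwise convergence $f_s\to1$, the bounds $\inf f\le f_s\le\sup f$ and $\inf V_0\le V_s\le\sup V_0$, and dominated convergence.
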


\begin{proof}
By Lemma~\ref{lem:ent-fixed-cutoff-regularity}, $\psi_s=f_s\log f_s$ satisfies the growth condition of Lemma~\ref{lem:ent-scale-generator}. By \eqref{eq:ent-dual-scale}, \eqref{eq:ent-scale-evolution}, and the chain rule \eqref{eq:chainrule} with $a(u)=u\log u$,
\begin{align*}
  \frac\dd{\dd s}\Ent_{\boldsymbol\mu_s}(f_s) &=\int(\partial_s\psi_s-\mathsf L_s\psi_s)\dd\boldsymbol\mu_s\\
  &=\int\bigl[(1+\log f_s)\mathsf L_sf_s-\mathsf L_s(f_s\log f_s)\bigr]\dd\boldsymbol\mu_s =-\int\frac{\Gamma_s(f_s)}{f_s}\dd\boldsymbol\mu_s.
\end{align*}
The identity \eqref{auxeq:ident} gives the second equality in \eqref{eq:ent-decomposition-rate}. Integrating \eqref{eq:ent-decomposition-rate} over $[0,\infty)$, with $\Ent_{\boldsymbol\mu_0}(f_0)=\Ent_{\boldsymbol\mu}(f)$ and $\Ent_{\boldsymbol\mu_s}(f_s)\to0$ as $s\to\infty$ (Lemma~\ref{lem:ent-scale-endpoints}), proves \eqref{eq:ent-decomposition}.
\end{proof}

\subsection{The force defect and the coarse kinetic equation}
\label{subsec:matched-kinetic-equation}

In this subsection, we identify the evolution of the observation law in kinetic time (the lower row of Figure~\ref{fig:kinetic-observation}). For the free dynamics, the matched observation of \S~\ref{subsec:phase-construction} makes the diagram commute. With an interaction, the observation law evolves by the density equation of the underdamped Langevin dynamics \eqref{eq:langevindy} in the coarse variables, with $V_0$ replaced by $V_s(Q_sy)$, plus one new term: the momentum divergence of the force defect, which measures how far the interaction force averaged over the unresolved fluctuations is from its equilibrium value; see Lemma~\ref{lem:ent-intertwining}.

\paragraph{The force defect.}
Let $(q,p)$ have law $f\boldsymbol\mu$ with a regular density $f$. By \S~\ref{subsec:phase-construction}, the joint law of $(q,p)$ and its observation is $f(q,p)\,\boldsymbol\pi_s$, with $\boldsymbol\pi_s$ as in \eqref{eq:measurepis} and $(y,w)$-marginal $f_s\boldsymbol\mu_s$. Hence the posterior of $(q,p)$ given $(y_s,w_s)=(y,w)$ is the equilibrium posterior weighted by $f/f_s$,
\begin{equation}
  \frac{f(q,p)}{f_s(y,w)}\,\boldsymbol\pi_s(\dd q\,\dd p\mid y,w),
  \label{eq:ent-weighted-posterior}
\end{equation}
and the conditional interaction force under $f\boldsymbol\mu$ is
\begin{equation*}
  \E\bigl[\nabla V_0(q)\mid(y_s,w_s)=(y,w)\bigr]
  =\int\nabla V_0(q)\,\frac{f(q,p)}{f_s(y,w)}\,\boldsymbol\pi_s(\dd q\,\dd p\mid y,w)
  =\frac{\mathsf P_s(f\nabla V_0)(y,w)}{f_s(y,w)},
\end{equation*}
where $(q,p)\sim f\boldsymbol\mu$ and the last equality holds because $\mathsf P_s$ is the conditional expectation under $\boldsymbol\pi_s$. At equilibrium, where $f=1$, this force is $\mathsf P_s(\nabla V_0)$, and differentiating \eqref{eq:polchinski2} in $q$ and comparing with \eqref{eq:posterior-average} gives $\mathsf P_s(\nabla V_0)=\nabla V_s(Q_sy)$. We call $\nabla V_s(Q_sy)$, the gradient of the renormalized potential at the coarse field, the \emph{renormalized force}. The \emph{force defect} weights the difference of the two conditional forces by $f_s$ and transfers it to the coarse variable $y$ by $Q_s$, similarly to $\nabla_y[V_s(Q_sy)]=Q_s\nabla V_s(Q_sy)$:
\begin{equation}
  R_s=Q_s\bigl[\mathsf P_s(f\nabla V_0)-f_s\,\nabla V_s(Q_sy)\bigr]
  =Q_s\bigl[\mathsf P_s(f\nabla V_0)-\mathsf P_sf\,\mathsf P_s(\nabla V_0)\bigr].
  \label{eq:ent-force-defect}
\end{equation}
By the second form of \eqref{eq:ent-force-defect}, $R_s$ is $Q_s$ applied to the covariance of $f$ and $\nabla V_0$ under the posterior $\boldsymbol\pi_s(\dd q\,\dd p\mid y,w)$; in particular, $R_s=0$ if $f=1$ or $V_0=0$.

\paragraph{Coarse graining and derivatives.}
Coarse graining commutes with every derivative in the kinetic generator except the position gradient, and the defect of the position gradient is the force defect. Let $\nabla_q^*=-\diver_q+\nabla U\cdot$ and $\nabla_p^*=-\diver_p+p\cdot$ be the adjoints of $\nabla_q$ and $\nabla_p$ in $L^2(\boldsymbol\mu)$, and
\begin{equation*}
  \nabla_y^*=-\diver_y+\bigl(Ay+Q_s\nabla V_s(Q_sy)\bigr)\cdot,\qquad
  \nabla_w^*=-\diver_w+w\cdot
\end{equation*}
be the adjoints of $\nabla_y$ and $\nabla_w$ in $L^2(\boldsymbol\mu_s)$, recalling \eqref{eq:ent-abstract-law} and \eqref{eq:ent-coarse-measure}. The adjoint of $\mathsf P_s\colon L^2(\boldsymbol\mu)\to L^2(\boldsymbol\mu_s)$ is the conditional expectation given the state, $\mathsf P_s^*\psi(q,p)=\E_{\boldsymbol\xi_s}\psi\bigl(\mathbf Q_s(q,p)^\top+\boldsymbol\xi_s\bigr)$, so that $\nabla_q\mathsf P_s^*\psi=Q_s\mathsf P_s^*(\nabla_y\psi)$ and $\nabla_p\mathsf P_s^*\psi=Q_s\mathsf P_s^*(\nabla_w\psi)$. By duality,
\begin{equation}
  \mathsf P_s\nabla_q^*=\nabla_y^*Q_s\mathsf P_s,\qquad
  \mathsf P_s\nabla_p^*=\nabla_w^*Q_s\mathsf P_s.
  \label{eq:ent-adjoint-commutation}
\end{equation}
The corresponding relations for the gradients follow by differentiating \eqref{eq:ent-coarse-operator} with \eqref{eq:ent-force-defect}:
\begin{equation}
  \nabla_w \mathsf P_s f =Q_s\mathsf P_s(\nabla_pf),\qquad
  \nabla_y \mathsf P_s f = Q_s\mathsf P_s(\nabla_qf)-R_s.
  \label{eq:ent-gradient-defect}
\end{equation}
We hence see that the interaction force enters the kinetic generator through $\nabla_q^*$, which passes through coarse graining exactly because $\mathsf P_s^*$ is an unweighted Gaussian average, while the gradient $\nabla_q$ meets the weight $e^{-V_0}$ of the posterior average \eqref{eq:ent-coarse-operator}, and the resulting mismatch between $\nabla V_0$ and $\nabla V_s(Q_sy)$ is the force defect.

\paragraph{The coarse kinetic equation.}
Note from \eqref{eq:ent-coarse-measure} that $\boldsymbol\mu_s$ is the Gibbs measure of the potential $\frac12\ip y{Ay}+V_s(Q_sy)+\frac12|w|^2$.
We introduce the corresponding Hamiltonian transport operator and the momentum Ornstein--Uhlenbeck operator on the coarse space by, respectively,
\begin{equation}
  \mathcal A_s =w\cdot\nabla_y-\bigl(Ay+Q_s\nabla V_s(Q_sy)\bigr)\cdot\nabla_w, \qquad \mathcal S_w=\Delta_w-w\cdot\nabla_w,
  \label{eq:ent-coarse-kinetic-generator}
\end{equation}
as in \eqref{eq:ent-abstract-generator}. In divergence form, we can write
\begin{equation} \label{eq:divergenceform}
  \mathcal A=\nabla_p^*\nabla_q-\nabla_q^*\nabla_p,\qquad
  \mathcal S=-\nabla_p^*\nabla_p,\qquad
  \mathcal A_s=\nabla_w^*\nabla_y-\nabla_y^*\nabla_w,\qquad
  \mathcal S_w=-\nabla_w^*\nabla_w.
\end{equation}
In particular, $\mathcal A_s^*=-\mathcal A_s$ and $\mathcal S_w^*=\mathcal S_w$ in $L^2(\boldsymbol\mu_s)$.

For $g_t=\mathcal P_t^*g_0$ and $g_{t,s}=\mathsf P_sg_t$ as in \eqref{eq:ent-observed-law}, the force defect \eqref{eq:ent-force-defect} of $f=g_t$ is
\begin{equation*}
  R_{t,s}=Q_s\bigl[\mathsf P_s(g_t\nabla V_0)-g_{t,s}\,\nabla V_s(Q_sy)\bigr].
\end{equation*}
We also call $r_{t,s}=R_{t,s}/g_{t,s}$ the \emph{normalized force defect}. Since $\mathcal S_w=-\nabla_w^*\nabla_w$, the momentum divergence $-\nabla_w^*R_{t,s}=-\nabla_w^*(g_{t,s}r_{t,s})$ then has the same form as the friction term $\gamma\mathcal S_wg_{t,s}=-\nabla_w^*(g_{t,s}\,\gamma\nabla_w\log g_{t,s})$.

\begin{lemma}[Matched kinetic intertwining]
\label{lem:ent-intertwining}
Let $g_0$ be a regular density and $g_t=\mathcal P_t^*g_0$. Then, for every $s\geq0$,
\begin{equation}
  \partial_tg_{t,s} =-\mathcal A_sg_{t,s} +\gamma\mathcal S_wg_{t,s} -\nabla_w^*R_{t,s}.
  \label{eq:ent-coarse-density-equation}
\end{equation}
\end{lemma}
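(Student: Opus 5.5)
The plan is to apply $\mathsf P_s$ to the density equation $\partial_tg_t=\mathcal L^*g_t=-\mathcal Ag_t+\gamma\mathcal Sg_t$ and push $\mathsf P_s$ through each operator with the commutation relations \eqref{eq:ent-adjoint-commutation} and \eqref{eq:ent-gradient-defect}. First, since $g_0$ is regular, the regularity results of Appendix~\ref{app:regularity} keep $g_t$ regular and allow $\partial_t$ to pass under the Gaussian average \eqref{eq:ent-coarse-operator}. This gives $\partial_tg_{t,s}=\mathsf P_s(\partial_tg_t)$, and it remains to evaluate $\mathsf P_s$ on $\mathcal Ag_t$ and on $\mathcal Sg_t$.

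For the friction term, write $\mathcal S=-\nabla_p^*\nabla_p$ by \eqref{eq:divergenceform}. The second relation of \eqref{eq:ent-adjoint-commutation} gives $\mathsf P_s\nabla_p^*=\nabla_w^*Q_s\mathsf P_s$, and the first relation of \eqref{eq:ent-gradient-defect}, applied componentwise, gives $Q_s\mathsf P_s(\nabla_pg_t)=\nabla_wg_{t,s}$. Combining them, $\mathsf P_s\mathcal Sg_t=-\nabla_w^*\nabla_wg_{t,s}=\mathcal S_wg_{t,s}$. No defect appears here, because the momentum gradient does not see the weight $e^{-V_0(q)}$.

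For the transport term, write $\mathcal A=\nabla_p^*\nabla_q-\nabla_q^*\nabla_p$, so that $-\mathsf P_s\mathcal Ag_t=-\mathsf P_s\nabla_p^*\nabla_qg_t+\mathsf P_s\nabla_q^*\nabla_pg_t$. In the first piece, $\mathsf P_s\nabla_p^*(\nabla_qg_t)=\nabla_w^*Q_s\mathsf P_s(\nabla_qg_t)=\nabla_w^*(\nabla_yg_{t,s}+R_{t,s})$ by the second relation of \eqref{eq:ent-gradient-defect}. In the second piece, $\mathsf P_s\nabla_q^*(\nabla_pg_t)=\nabla_y^*Q_s\mathsf P_s(\nabla_pg_t)=\nabla_y^*\nabla_wg_{t,s}$. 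Summing the two pieces gives
\[
-\mathsf P_s\mathcal Ag_t=-(\nabla_w^*\nabla_y-\nabla_y^*\nabla_w)g_{t,s}-\nabla_w^*R_{t,s}=-\mathcal A_sg_{t,s}-\nabla_w^*R_{t,s},
\]
using the divergence form of $\mathcal A_s$ in \eqref{eq:divergenceform}. Adding $\gamma$ times the friction identity then yields \eqref{eq:ent-coarse-density-equation}.

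The main obstacle is justifying the relations \eqref{eq:ent-adjoint-commutation} and \eqref{eq:ent-gradient-defect} on the unbounded functions that occur, such as $\nabla_pg_t$, vector fields multiplied by $p$, and $\nabla U$, which grows linearly. I would check \eqref{eq:ent-gradient-defect} directly by differentiating \eqref{eq:ent-coarse-operator} under the integral:
\begin{itemize}
\item the $w$-derivative falls only on $\psi$ through $Q_sw$, which gives the first relation;
\item the $y$-derivative falls both on $\psi$ and on the weight $e^{V_s(Q_sy)-V_0(Q_sy+\zeta_q)}$, which produces $Q_s\mathsf P_s(\nabla_qf)-Q_s[\mathsf P_s(f\nabla V_0)-f_s\nabla V_s(Q_sy)]$, that is, exactly $R_s$.
\end{itemize}
I would verify \eqref{eq:ent-adjoint-commutation} by pairing both sides with a test function $\chi\in C_c^\infty$ and using $\nabla_q\mathsf P_s^*\chi=Q_s\mathsf P_s^*\nabla_y\chi$. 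The duality pairings are legitimate once $g_t$ and its first two derivatives have polynomial growth (they are in fact bounded), since both $\boldsymbol\mu$ and $\boldsymbol\mu_s$ have Gaussian tails. Hence the identity holds pointwise, since both sides are continuous.
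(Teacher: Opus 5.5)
Your proposal is correct and follows the paper's proof essentially line by line. Like the paper, you apply $\mathsf P_s$ to the density equation, write $\mathcal A$ and $\mathcal S$ in divergence form, and push $\mathsf P_s$ through with \eqref{eq:ent-adjoint-commutation} and \eqref{eq:ent-gradient-defect}, so that the only defect comes from $\nabla_q$. Your extra checks of those two relations, by differentiating \eqref{eq:ent-coarse-operator} directly and by duality against $C_c^\infty$ test functions, soundly fill in what the paper leaves to Lemma~\ref{lem:ent-kinetic-regularity} and the discussion before the lemma.
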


\begin{proof}
By Lemma~\ref{lem:ent-kinetic-regularity}, $g_t$ is a regular density for every $t\geq0$. For a regular density $f$, by the divergence form \eqref{eq:divergenceform} and the identities \eqref{eq:ent-adjoint-commutation} and \eqref{eq:ent-gradient-defect}, we have
\begin{align*}
  \mathsf P_s(\mathcal Af)
  &=\nabla_w^*Q_s\mathsf P_s(\nabla_qf)-\nabla_y^*Q_s\mathsf P_s(\nabla_pf)
   =\nabla_w^*(\nabla_yf_s+R_s)-\nabla_y^*\nabla_wf_s
   =\mathcal A_sf_s+\nabla_w^*R_s,\\
  \mathsf P_s(\mathcal Sf)
  &=-\nabla_w^*Q_s\mathsf P_s(\nabla_pf)
   =-\nabla_w^*\nabla_wf_s=\mathcal S_wf_s.
\end{align*}
Since $\mathsf P_s$ does not depend on $t$, we have $\partial_tg_{t,s}=\mathsf P_s(\partial_tg_t)$. Applying the two identities above with $f=g_t$ to the density equation $\partial_tg_t=-\mathcal Ag_t+\gamma\mathcal Sg_t$ following \eqref{eq:ent-evolving-law}, we obtain
\begin{equation*}
  \partial_tg_{t,s}
  =-\mathsf P_s(\mathcal Ag_t)+\gamma\mathsf P_s(\mathcal Sg_t)
  =-\mathcal A_sg_{t,s}+\gamma\mathcal S_wg_{t,s}-\nabla_w^*R_{t,s}.\qedhere
\end{equation*}
\end{proof}

\begin{remark}[Probabilistic derivation]
\label{rem:ent-intertwining-probabilistic}
Equation \eqref{eq:ent-coarse-density-equation} can also be derived by conditioning on the observation. Fix $s\geq0$ and $\psi\in C_c^\infty(\R^{2d})$, and write
\begin{equation*}
  \mathcal L_{\mathrm{free}}=w\cdot\nabla_y-Ay\cdot\nabla_w+\gamma\mathcal S_w
\end{equation*}
for the free kinetic generator in the coarse variables. Differentiating at $t=0$ the commutation of the Gaussian observation with the free kinetic dynamics \eqref{eq:freekinetic} shows that the generator \eqref{eq:ent-abstract-generator} with $V_0=0$ maps $\mathsf P_s^*\psi$ to $\mathsf P_s^*(\mathcal L_{\mathrm{free}}\psi)$, and $\nabla_p\mathsf P_s^*\psi=Q_s\mathsf P_s^*(\nabla_w\psi)$. Since the interaction part of \eqref{eq:ent-abstract-generator} is $-\nabla V_0(q)\cdot\nabla_p$, the observation $(y_{t,s},w_{t,s})^\top=\mathbf Q_s(q_t,p_t)^\top+\boldsymbol\xi_s$ satisfies
\begin{equation*}
  \frac\dd{\dd t}\int\psi\dd\boldsymbol\varrho_{t,s}
  =\int\mathcal L_{\mathrm{free}}\psi\dd\boldsymbol\varrho_{t,s}
   -\E\bigl[Q_s\nabla V_0(q_t)\cdot\nabla_w\psi(y_{t,s},w_{t,s})\bigr],
\end{equation*}
where the expectation is over the state and its independent observation noise. Conditioning on the observation and using \eqref{eq:ent-weighted-posterior} with $f=g_t$,
\begin{equation*}
  \E\bigl[Q_s\nabla V_0(q_t)\mid(y_{t,s},w_{t,s})=(y,w)\bigr]
  =\frac{Q_s\mathsf P_s(g_t\nabla V_0)(y,w)}{g_{t,s}(y,w)}
  =Q_s\nabla V_s(Q_sy)+r_{t,s}(y,w).
\end{equation*}
By \eqref{eq:ent-coarse-kinetic-generator}, the renormalized force combines with $\mathcal L_{\mathrm{free}}$ into $\mathcal A_s+\gamma\mathcal S_w$, and the remainder gives the force defect $R_{t,s}=g_{t,s}r_{t,s}$:
\begin{equation*}
  \frac\dd{\dd t}\int\psi\,g_{t,s}\dd\boldsymbol\mu_s
  =\int(\mathcal A_s+\gamma\mathcal S_w)\psi\,g_{t,s}\dd\boldsymbol\mu_s
   -\int\nabla_w\psi\cdot R_{t,s}\dd\boldsymbol\mu_s.
\end{equation*}
Since $\boldsymbol\mu_s$ does not depend on $t$, the adjoint relations of $\mathcal A_s$, $\mathcal S_w$, and $\nabla_w^*$ turn this weak form into \eqref{eq:ent-coarse-density-equation}. In this derivation the force defect appears as the difference between the conditional interaction forces; in the proof of Lemma~\ref{lem:ent-intertwining}, it appears as the defect of the position gradient under coarse graining \eqref{eq:ent-gradient-defect}.
\end{remark}

\section{The corrector and entropy decay}
\label{sec:modified-entropy}

This section is devoted to the proofs of Theorem~\ref{thm:phase-space-criterion} and Corollary~\ref{cor:abstract-L2}. We add to the entropy a corrector, built from the entropy decomposition of Lemma~\ref{lem:ent-decomposition}, and reduce both results to a bound on the kinetic derivative of the corrector (Proposition~\ref{prop:ent-corrector-differential}). This bound rests on the exact identity for the kinetic derivative of the cross term at each scale (Lemma~\ref{lem:ent-exact-cross-evolution}), and two energy estimates along the Polchinski flow, one carrying the control of the momentum Fisher information by friction from scale zero to all scales (Lemma~\ref{lem:ent-velocity-fisher}), the other bounding the force defect by the entropy (Proposition~\ref{prop:ent-force-energy}).

Throughout, $g_0$ is a regular density, that is, \eqref{eq:ent-regular-class} holds with $f=g_0$; we write $g_t=\mathcal P_t^*g_0$ and $g_{t,s}=\mathsf P_sg_t$. By Lemma~\ref{lem:ent-kinetic-regularity}, $g_t$ is a regular density for every $t\geq0$, so at each fixed kinetic time the estimates for a fixed density apply with $f=g_t$ and $f_s=g_{t,s}$. At each fixed scale $s$, the coarse-grained density $g_{t,s}$ evolves in kinetic time by the coarse kinetic equation \eqref{eq:ent-coarse-density-equation}, relative to the reference measure $\boldsymbol\mu_s$, which does not depend on $t$.

\subsection{The corrector and the proof of \texorpdfstring{Theorem~\ref{thm:phase-space-criterion}}{the phase-space criterion}}
\label{subsec:corrector-size}
Following the modified entropy method of hypocoercivity~\cite{Villani}, we add to the entropy a term coupling the position and momentum scores, which captures the transfer of dissipation from momentum to position by Hamiltonian transport.

Specifically, let $f$ be a regular density and $f_s=\mathsf P_sf$. At each scale $s$, we define the \emph{cross term} by
\begin{equation*}
  \mathcal J_s(f_s)=\int f_s\,\nabla_y\log f_s\cdot\nabla_w\log f_s\dd\boldsymbol\mu_s,
\end{equation*}
and the \emph{corrector} by integrating the cross term over all scales:
\begin{equation}
  \mathcal J(f)=\int_0^\infty\mathcal J_s(f_s)\dd s,
  \label{eq:ent-corrector}
\end{equation}
as the entropy decomposition of Lemma~\ref{lem:ent-decomposition} integrates Fisher informations over scales. Then, we introduce the \emph{modified entropy}, for $0<\theta<1$,
\begin{equation}
  \mathcal E_\theta(f)=\Ent_{\boldsymbol\mu}(f)+\theta\,\mathcal J(f).
  \label{eq:ent-Lyapunov}
\end{equation}
We first bound the size of the corrector in Lemma~\ref{lem:ent-corrector-size}, which gives the equivalence between the modified entropy $\mathcal E_\theta(f)$ and the entropy $\Ent_{\boldsymbol\mu}(f)$:
\begin{equation}
  (1-\theta)\Ent_{\boldsymbol\mu}(f)\leq\mathcal E_\theta(f)\leq(1+\theta)\Ent_{\boldsymbol\mu}(f).
  \label{eq:ent-Lyapunov-equivalence}
\end{equation}

\begin{lemma}[Size of the corrector]
\label{lem:ent-corrector-size}
For a regular density $f$, the integral \eqref{eq:ent-corrector} is well-defined, and satisfies
\begin{equation}
  \abs{\mathcal J(f)}\leq \Ent_{\boldsymbol\mu}(f).
  \label{eq:ent-corrector-size}
\end{equation}
\end{lemma}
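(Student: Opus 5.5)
The plan is to bound the cross term at each scale pointwise in $s$ by the Fisher informations appearing in the entropy decomposition, and then integrate using Lemma~\ref{lem:ent-decomposition}. By Cauchy--Schwarz in $L^2(f_s\boldsymbol\mu_s)$, followed by Young's inequality $ab\le\frac12(a^2+b^2)$,
\begin{equation*}
  \abs{\mathcal J_s(f_s)}\le\int f_s\abs{\nabla_y\log f_s}\,\abs{\nabla_w\log f_s}\dd\boldsymbol\mu_s\le\bigl(\mathcal I^y_s\bigr)^{1/2}\bigl(\mathcal I^w_s\bigr)^{1/2}\le\frac12\bigl(\mathcal I^y_s+\mathcal I^w_s\bigr).
\end{equation*}
Since $A\succeq\Id$, we have $\mathcal I^w_s\le\mathcal I^{w,A}_s$, as noted after \eqref{eq:ent-scale-scores}. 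Hence
\begin{equation*}
  \abs{\mathcal J_s(f_s)}\le\frac12\bigl(\mathcal I^y_s+\mathcal I^{w,A}_s\bigr)=-\frac{\dd}{\dd s}\Ent_{\boldsymbol\mu_s}(f_s)
\end{equation*}
by \eqref{eq:ent-decomposition-rate}.

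For well-definedness, I would first check that $s\mapsto\mathcal J_s(f_s)$ is measurable, indeed continuous. This follows from the regularity of $f_s$ in $(s,y,w)$ supplied by Lemma~\ref{lem:ent-fixed-cutoff-regularity}: $f_s$ is smooth and positive, with controlled growth locally uniformly in $s$, and $\inf f_s\ge\inf f>0$ because $\mathsf P_s$ is a conditional expectation. Continuity then follows by dominated convergence. Since the right-hand side of the pointwise bound is nonnegative and integrates over $[0,\infty)$ to $\Ent_{\boldsymbol\mu}(f)<\infty$ by \eqref{eq:ent-decomposition}, the integral in \eqref{eq:ent-corrector} converges absolutely. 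This gives
\begin{equation*}
  \abs{\mathcal J(f)}\le\int_0^\infty\abs{\mathcal J_s(f_s)}\dd s\le\Ent_{\boldsymbol\mu}(f).
\end{equation*}

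The only step requiring care is this measurability and integrability check, which amounts to confirming that the Fisher informations are finite for each $s$. Both points should follow from the appendix regularity lemmas already invoked in the proof of Lemma~\ref{lem:ent-decomposition}. The analytic inequality itself is elementary. The key point is that the matched coarse graining places the $A$-weighted momentum Fisher information in the decomposition, and this dominates the unweighted one appearing in the cross term.
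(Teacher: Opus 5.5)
Your proof is correct and follows essentially the paper's argument: the paper applies Young's inequality pointwise, $2|\nabla_y\log f_s\cdot\nabla_w\log f_s|\le|\nabla_y\log f_s|^2+(\nabla_w\log f_s)^\top A\,\nabla_w\log f_s$, then integrates against $f_s\boldsymbol\mu_s$ and over $s$ using \eqref{eq:ent-decomposition}, whereas you insert an intermediate Cauchy--Schwarz step in $L^2(f_s\boldsymbol\mu_s)$ that is harmless but unnecessary. Your measurability remark for $s\mapsto\mathcal J_s(f_s)$ is a small addition that the paper leaves implicit, since its bound on the absolute double integral already gives absolute convergence of \eqref{eq:ent-corrector}.
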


\begin{proof}
By Young's inequality and $A\succeq\Id$, the integrand of $\mathcal J_s(f_s)$ obeys the pointwise bound
\begin{equation*}
  2\bigl|\nabla_y\log f_s\cdot\nabla_w\log f_s\bigr|\leq|\nabla_y\log f_s|^2+|\nabla_w\log f_s|^2\leq|\nabla_y\log f_s|^2+(\nabla_w\log f_s)^\top A\,\nabla_w\log f_s.
\end{equation*}
Multiplying by $f_s$, integrating against $\boldsymbol\mu_s$ and over $s$, and using \eqref{eq:ent-decomposition} gives
\begin{equation*}
  \int_0^\infty\!\int f_s\bigl|\nabla_y\log f_s\cdot\nabla_w\log f_s\bigr|\dd\boldsymbol\mu_s\dd s\leq\frac12\int_0^\infty\bigl(\mathcal I^y_s+\mathcal I^{w,A}_s\bigr)\dd s=\Ent_{\boldsymbol\mu}(f).
\end{equation*}
This proves \eqref{eq:ent-corrector-size}, completing the proof.
\end{proof}

We now prove Theorem~\ref{thm:phase-space-criterion} by showing that the modified entropy $\mathcal E_\theta(g_t)$ decays exponentially for a suitable weight $\theta$, and deduce Corollary~\ref{cor:abstract-L2} by linearization.  By \eqref{eq:ent-physical-dissipation}, the friction part dissipates entropy only through the momentum Fisher information. The time derivative of $\mathcal J(g_t)$ provides the missing dissipation in the position variable.

\begin{proposition}[Bound on the kinetic derivative of the corrector]
\label{prop:ent-corrector-differential}
Let $g_t=\mathcal P_t^*g_0$ with $g_0$ a regular density. Then
\begin{equation}
  \begin{aligned}
  \frac\dd{\dd t}\mathcal J(g_t) \leq{}& -2(1-\eta)\Ent_{\boldsymbol\mu}(g_t) +\left(\max\{2,\gamma\}+\frac{\eta}{\sqrt2}\right)\mathcal I_p(g_t)\\
  &+\sqrt2(\gamma+\eta)\sqrt{\mathcal I_p(g_t)\,\Ent_{\boldsymbol\mu}(g_t)}.
  \end{aligned}
  \label{eq:ent-corrector-differential}
\end{equation}
\end{proposition}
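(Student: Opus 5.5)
The plan is to differentiate the corrector scale by scale, $\frac{\dd}{\dd t}\mathcal J(g_t)=\int_0^\infty\partial_t\mathcal J_s(g_{t,s})\dd s$, and to compute each $\partial_t\mathcal J_s$ from the coarse kinetic equation \eqref{eq:ent-coarse-density-equation} of Lemma~\ref{lem:ent-intertwining}. Exchanging $\partial_t$ and $\int_0^\infty\dd s$ is justified by the regularity of Lemma~\ref{lem:ent-kinetic-regularity} together with the bound of Lemma~\ref{lem:ent-corrector-size}. Write $h=\log g_{t,s}$. The reference measure $\boldsymbol\mu_s$ is independent of $t$, so the computation at fixed $s$ is the classical Villani cross-term calculation for the coarse Gibbs measure of $U_s(y)=\frac12\ip y{Ay}+V_s(Q_sy)$, plus one extra forcing term.

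\textbf{Transport part.} The skew-adjoint part $-\mathcal A_s$ uses the commutators $[\nabla_w,\mathcal A_s]=\nabla_y$ and $[\nabla_y,\mathcal A_s]=-\nabla^2U_s\nabla_w$. I expect it to yield
\[
-\mathcal I^y_s+\int g_{t,s}(\nabla_wh)^\top\bigl(A+Q_s\nabla^2V_s(Q_sy)Q_s\bigr)\nabla_wh\dd\boldsymbol\mu_s
\le-\mathcal I^y_s+\mathcal I^{w,A}_s+b(s)\,\mathcal I^w_s .
\]
This is the content of Lemma~\ref{lem:ent-exact-cross-evolution}. Rewriting
\[
-\mathcal I^y_s+\mathcal I^{w,A}_s=-(\mathcal I^y_s+\mathcal I^{w,A}_s)+2\mathcal I^{w,A}_s
\]
and integrating in $s$ with the entropy decomposition \eqref{eq:ent-decomposition} gives
\[
-2\Ent_{\boldsymbol\mu}(g_t)+2\int_0^\infty\mathcal I^{w,A}_s\dd s .
\]
Lemma~\ref{lem:ent-velocity-fisher} controls the remaining integral. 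In $w$ the flow is an Ornstein--Uhlenbeck smoothing with diffusion matrix $A$, and I expect it to give $\partial_s\mathcal I^w_s\le-\mathcal I^{w,A}_s$. Hence $\int_0^\infty\mathcal I^{w,A}_s\dd s\le\mathcal I^w_0=\mathcal I_p(g_t)$, which produces the constant $2$ in $\max\{2,\gamma\}$.

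\textbf{Friction part.} The friction term $\gamma\mathcal S_w$ acting on the cross term gives, after integration by parts, a negative Hessian-type term, a term $-\gamma\int g_{t,s}\nabla_yh\cdot\nabla_wh$, and a mixed term $\int g_{t,s}\nabla_wh\cdot\nabla_y\nabla_wh$. I plan to drop the negative part and absorb the mixed term into it by Young's inequality, keeping only $\gamma\int g|\nabla_yh||\nabla_wh|$. By Cauchy--Schwarz in $(s,\boldsymbol\mu_s)$, $\int_0^\infty\mathcal I^y_s\dd s\le2\Ent$, and the velocity lemma, this is at most $\gamma\sqrt{2\,\Ent_{\boldsymbol\mu}(g_t)\,\mathcal I_p(g_t)}$. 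Whatever is left over is bounded by $\gamma\,\mathcal I_p$; together with the transport part this accounts for $\max\{2,\gamma\}$.

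\textbf{Force-defect part.} The forcing $-\nabla_w^*R_{t,s}=-\nabla_w^*(g_{t,s}r_{t,s})$ produces terms pairing $r_{t,s}$ with $\nabla_yh$, and with $\nabla_wh$ through a derivative of $r$. By Cauchy--Schwarz these are controlled by $\int_0^\infty\!\int g_{t,s}|r_{t,s}|^2\dd\boldsymbol\mu_s\dd s$ against $\int\mathcal I^y_s\dd s$ and $\int\mathcal I^w_s\dd s$. The input here is Proposition~\ref{prop:ent-force-energy}. The defect vanishes at $s=0$ and is generated by $Q_s\nabla^2V_sQ_s$ acting on position scores, so a Grönwall/Hardy-type bound should give its weighted $L^2$ norm as at most $\eta$ times the square root of $\int\mathcal I^y_s\dd s\le2\Ent$. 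The curvature term $b(s)\mathcal I^w_s$ from the transport part is handled similarly, using $\int b\,\mathcal I^w\le\eta(\int\mathcal I^w)^{1/2}\cdots$. Together these contribute $2\eta\,\Ent$, which turns $-2$ into $-2(1-\eta)$, together with $\frac{\eta}{\sqrt2}\mathcal I_p$ and $\sqrt2\,\eta\sqrt{\mathcal I_p\Ent}$.

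I expect the main obstacle to be this last step: the derivative of $r_{t,s}$ is not controlled directly. I would first try to integrate by parts in $w$ so that only $r$ itself appears, and then rely on the force-energy proposition. Matching the exact constants in \eqref{eq:ent-corrector-differential} will require careful choices of the Young weights.
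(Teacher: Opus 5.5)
Your skeleton matches the paper: differentiate scale by scale, use the exact identity of Lemma~\ref{lem:ent-exact-cross-evolution}, the entropy decomposition and $\mathcal I^w_s\le e^{-s}\mathcal I_p$, then apply Cauchy--Schwarz against Proposition~\ref{prop:ent-force-energy}. Your treatment of the free transport term, the curvature term, the friction cross term $-\gamma\mathcal J_s$ and the pairing $-\int g_{t,s}r_{t,s}\cdot\nabla_yh\dd\boldsymbol\mu_s$ is what the paper does. The gap is in the two second-order terms of \eqref{eq:ent-exact-cross-evolution}.

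\textbf{The friction mixed-Hessian term.} Friction acting on the cross term produces no negative Hessian term. Its exact contribution is $-\gamma\mathcal J_s-2\gamma\int g_{t,s}\sum_{i,j}\partial_{y_i}\partial_{w_j}h\,\partial_{w_i}\partial_{w_j}h\dd\boldsymbol\mu_s$. The second piece pairs the mixed Hessian with the momentum Hessian and is indefinite, because the cross term is an indefinite quadratic form in the scores. In Villani's method such a term is absorbed by negative Hessian terms coming from Fisher informations placed in the functional. Here the functional is only $\Ent_{\boldsymbol\mu}+\theta\mathcal J$, and the entropy supplies only $-\gamma\mathcal I_p(g_t)$, with no Hessian. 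So there is nothing for your Young's inequality to absorb into.

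The missing ingredient is the part of Lemma~\ref{lem:ent-velocity-fisher} you discarded. The identity \eqref{eq:ent-velocity-fisher-identity} is $\frac{\dd}{\dd s}\mathcal I^w_s=-\mathcal D_s(\nabla_w\log f_s)-\mathcal I^{w,A}_s$. By \eqref{eq:ent-dissipation-Hessian}, $\mathcal D_s(\nabla_w\log f_s)$ dominates $\int f_s(\norm{\nabla_y\nabla_w\log f_s}_{\HS}^2+\norm{\nabla_w^2\log f_s}_{\HS}^2)\dd\boldsymbol\mu_s$. Hence the Hessians of $\log g_{t,s}$, integrated over scales, are bounded by the momentum Fisher information at scale zero, which is \eqref{eq:ent-score-Hessian-bound}. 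The friction mixed term is then at most $\gamma\mathcal D_s(\nabla_w\log g_{t,s})$ at each scale. The constant $\max\{2,\gamma\}$ comes from the joint bound $2\int_0^\infty\mathcal I^{w,A}_s\dd s+\gamma\int_0^\infty\mathcal D_s\dd s\le\max\{2,\gamma\}\,\mathcal I_p(g_t)$, via \eqref{eq:ent-velocity-fisher-dissipation}. Bounding the transport and friction parts separately, as you propose, gives at best $2+\gamma$.

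\textbf{The force-defect derivative terms.} These pair $\nabla_yr_{t,s}$ and $\nabla_wr_{t,s}$ with $\nabla_w^2h$ and $\nabla_y\nabla_wh$, not with $\nabla_wh$. Your fallback of integrating by parts so that only $r$ remains would fail. One derivative on $r$ is in $y$, and moving it or the $w$-derivative onto $g_{t,s}\,\partial\partial h$ creates third derivatives of $h$. Through $\nabla_y^*$ it also brings in the unbounded weight $Ay+Q_s\nabla V_s(Q_sy)$. None of these is controlled by $\Ent_{\boldsymbol\mu}(g_t)$ or $\mathcal I_p(g_t)$.

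The manipulation is also unnecessary, because the derivatives of $r$ are controlled. The energy identity for $\int f_s|r_s|^2\dd\boldsymbol\mu_s$ carries the dissipation $\mathcal D_s(r_s)$. So \eqref{eq:ent-force-energy}--\eqref{eq:ent-normalized-force-bound} bound $\int_0^\infty\int f_s(\norm{\nabla_yr_s}_{\HS}^2+\norm{\nabla_wr_s}_{\HS}^2)\dd\boldsymbol\mu_s\dd s$ by $2\eta^2\Ent_{\boldsymbol\mu}(f)$. Cauchy--Schwarz against the scale-integrated Hessian bound then gives $\sqrt2\,\eta\sqrt{\mathcal I_p(g_t)\Ent_{\boldsymbol\mu}(g_t)}$.

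In both places the missing idea is the same. Second-order quantities are not controlled by the kinetic dissipation at all. They are controlled by the dissipation terms $\mathcal D_s$ of the energy identities along the Polchinski flow, integrated over scales.
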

The first term on the right-hand side of \eqref{eq:ent-corrector-differential} provides dissipation whenever $\eta<1$, where the loss of the fraction $\eta$ comes from the force defect. We choose the weight of the corrector $\theta$ so that the momentum dissipation $\gamma\mathcal I_p(g_t)$ absorbs the positive terms, leaving a negative multiple of the entropy. We defer the proof of Proposition~\ref{prop:ent-corrector-differential} to \S~\ref{subsec:corrector-bound} and first deduce Theorem~\ref{thm:phase-space-criterion} and Corollary~\ref{cor:abstract-L2} from it.

\begin{proof}[Proof of Theorem~\ref{thm:phase-space-criterion}]
Let $g_0$ be regular and $g_t=\mathcal P_t^*g_0$. Choose $\theta=\theta_{\gamma,\eta_*}$ from \eqref{eq:ent-general-theta}. Since $\eta\leq\eta_*<1$, \eqref{eq:ent-physical-dissipation} and Proposition~\ref{prop:ent-corrector-differential} give
\begin{align*}
  -\frac\dd{\dd t}\mathcal E_\theta(g_t) &\geq (\gamma-\theta K_{\gamma,\eta_*})\mathcal I_p(g_t)+2\theta(1-\eta_*)\Ent_{\boldsymbol\mu}(g_t) -\theta\sqrt2(\gamma+\eta_*)\sqrt{\mathcal I_p(g_t)\,\Ent_{\boldsymbol\mu}(g_t)}\\
  &\geq\frac\gamma2\mathcal I_p(g_t)+2\theta(1-\eta_*)\Ent_{\boldsymbol\mu}(g_t) -\theta\sqrt2(\gamma+\eta_*)\sqrt{\mathcal I_p(g_t)\,\Ent_{\boldsymbol\mu}(g_t)}.
\end{align*}
The last line uses $\theta\leq\gamma/(2K_{\gamma,\eta_*})$ from \eqref{eq:ent-general-theta}. Young's inequality and $\theta\leq\gamma(1-\eta_*)/\bigl(2(\gamma+\eta_*)^2\bigr)$ from \eqref{eq:ent-general-theta} yield
\begin{equation*}
  \theta\sqrt2(\gamma+\eta_*)\sqrt{\mathcal I_p(g_t)\,\Ent_{\boldsymbol\mu}(g_t)} \leq\frac\gamma4\mathcal I_p(g_t) +\frac{2\theta^2(\gamma+\eta_*)^2}{\gamma}\Ent_{\boldsymbol\mu}(g_t) \leq\frac\gamma4\mathcal I_p(g_t)+\theta(1-\eta_*)\Ent_{\boldsymbol\mu}(g_t).
\end{equation*}
It follows that
\begin{equation}
  -\frac\dd{\dd t}\mathcal E_\theta(g_t) \geq\frac\gamma4\mathcal I_p(g_t)+\theta(1-\eta_*)\Ent_{\boldsymbol\mu}(g_t) \geq c_{\gamma,\eta_*}\mathcal E_\theta(g_t).
  \label{eq:ent-Lyapunov-decay}
\end{equation}
The last inequality uses $\mathcal E_\theta(g_t)\leq(1+\theta)\Ent_{\boldsymbol\mu}(g_t)$ from \eqref{eq:ent-Lyapunov-equivalence} and the definition of $c_{\gamma,\eta_*}$ in \eqref{eq:ent-general-constants}.
Then, by Gronwall's inequality and \eqref{eq:ent-Lyapunov-equivalence}, we have
\begin{equation*}
  \Ent_{\boldsymbol\mu}(g_t)\leq\frac{\mathcal E_\theta(g_t)}{1-\theta} \leq\frac{e^{-c_{\gamma,\eta_*}t}\,\mathcal E_\theta(g_0)}{1-\theta} \leq C_{\gamma,\eta_*}e^{-c_{\gamma,\eta_*}t}\Ent_{\boldsymbol\mu}(g_0),
\end{equation*}
which is \eqref{eq:ent-abstract-decay} for regular densities. The approximation argument in Appendix~\ref{app:regularity} extends the estimate to every finite-entropy initial density.
\end{proof}

\begin{proof}[Proof of Corollary~\ref{cor:abstract-L2}]
Let $\psi$ be bounded and centered. For small $|\kappa|$, Theorem~\ref{thm:phase-space-criterion} applied to $1+\kappa\psi$, with $\mathcal P_t^*(1+\kappa\psi)=1+\kappa\mathcal P_t^*\psi$, gives
\begin{equation*}
  \Ent_{\boldsymbol\mu}(1+\kappa\mathcal P_t^*\psi)\leq C_{\gamma,\eta_*}e^{-c_{\gamma,\eta_*}t}\Ent_{\boldsymbol\mu}(1+\kappa\psi).
\end{equation*}
Since $\norm{\mathcal P_t^*\psi}_\infty\le\norm\psi_\infty$, the expansion $\Ent_{\boldsymbol\mu}(1+\kappa\psi)=\frac{\kappa^2}{2}\norm\psi_2^2+o(\kappa^2)$ holds uniformly on both sides; dividing by $\kappa^2/2$ and letting $\kappa\to0$ gives
\begin{equation*}
  \norm{\mathcal P_t^*\psi}_2^2\le C_{\gamma,\eta_*} e^{-c_{\gamma,\eta_*}t}\norm\psi_2^2.
\end{equation*}
The standard density argument extends this estimate to centered $\psi\in L^2(\boldsymbol\mu)$, and then by duality, the same result holds for $\mathcal P_t$.
\end{proof}

\subsection{Bound on the kinetic derivative of the corrector}
\label{subsec:corrector-bound}
To prove Proposition~\ref{prop:ent-corrector-differential}, we integrate the kinetic derivative of the cross term over scales and bound the resulting terms by energy estimates along the Polchinski flow, applied with $f=g_t$. For this, we need the following auxiliary results: the exact evolution of the cross term at a fixed scale (Lemma~\ref{lem:ent-exact-cross-evolution}), proved in \S~\ref{subsec:corrector-derivative}, and the momentum and force-defect energy estimates for a fixed density (Lemma~\ref{lem:ent-velocity-fisher} and Proposition~\ref{prop:ent-force-energy}), proved in \S~\ref{subsec:scale-energy-estimates}.

\begin{lemma}[Exact evolution of the cross term]
\label{lem:ent-exact-cross-evolution}
Let $g_t=\mathcal P_t^*g_0$ with $g_0$ a regular density. At each fixed scale $s\geq0$, with $h=\log g_{t,s}$,
\begin{align}
  \partial_t\mathcal J_s(g_{t,s})
  ={}&\underbrace{-\mathcal I^y_s(g_{t,s})+\mathcal I^{w,A}_s(g_{t,s})}_{\text{free Hamiltonian contribution}}
     +\underbrace{\int g_{t,s}(\nabla_wh)^\top Q_s\nabla^2V_s(Q_sy)Q_s\nabla_wh\dd\boldsymbol\mu_s}_{\text{smoothed interaction Hessian}}\notag\\
  &\underbrace{-\gamma\mathcal J_s(g_{t,s})}_{\text{friction, cross term}}
     \;\underbrace{-2\gamma\int g_{t,s}\sum_{i,j}\partial_{y_i}\partial_{w_j}h\,\partial_{w_i}\partial_{w_j}h\dd\boldsymbol\mu_s}_{\text{friction, mixed Hessians}}\notag\\
  &\underbrace{-\int g_{t,s}\,r_{t,s}\cdot\nabla_yh\dd\boldsymbol\mu_s}_{\text{force defect, position score}}
     \;\underbrace{-\int g_{t,s}\sum_{i,j}\bigl[(\partial_{y_i}r_{t,s,j})\,\partial_{w_i}\partial_{w_j}h+(\partial_{w_i}r_{t,s,j})\,\partial_{y_i}\partial_{w_j}h\bigr]\dd\boldsymbol\mu_s}_{\text{force defect, derivatives}}.
  \label{eq:ent-exact-cross-evolution}
\end{align}
\end{lemma}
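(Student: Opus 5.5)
The plan is a direct computation at fixed scale $s$. The reference measure $\boldsymbol\mu_s$ does not depend on $t$, and $g_{t,s}$ obeys the coarse kinetic equation \eqref{eq:ent-coarse-density-equation}. Write $f=g_{t,s}$ and $h=\log f$, and note that $\mathcal J_s(f)=\int f^{-1}\nabla_yf\cdot\nabla_wf\dd\boldsymbol\mu_s$. Regularity from Lemma~\ref{lem:ent-kinetic-regularity} and Lemma~\ref{lem:ent-fixed-cutoff-regularity} justifies differentiating under the integral and integrating by parts. Differentiating gives
\[
  \partial_t\mathcal J_s=\int\bigl(\nabla_y h\cdot\nabla_w\partial_tf+\nabla_wh\cdot\nabla_y\partial_tf-\partial_tf\,\nabla_yh\cdot\nabla_wh\bigr)\dd\boldsymbol\mu_s.
\]
I then insert $\partial_tf=-\mathcal A_sf+\gamma\mathcal S_wf-\nabla_w^*R$ and treat the three contributions separately, since the right-hand side is linear in $\partial_tf$.

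\emph{Hamiltonian part.} Here $\mathcal A_s$ is a first-order derivation with $\mathcal A_s^*=-\mathcal A_s$, so $\mathcal A_s f=f\,\mathcal A_sh$. Its contribution is therefore $-\int f\,\bigl(\nabla_yh\cdot[\nabla_w,\mathcal A_s]h+\nabla_wh\cdot[\nabla_y,\mathcal A_s]h\bigr)$, after the transport terms cancel by skew-adjointness. This is the standard Villani computation. The commutators are
\[
  [\nabla_w,\mathcal A_s]=\nabla_y,\qquad [\nabla_y,\mathcal A_s]=-\bigl(A+Q_s\nabla^2V_s(Q_sy)Q_s\bigr)\nabla_w.
\]
Substituting them produces $-\mathcal I^y_s+\mathcal I^{w,A}_s$ together with the smoothed-Hessian quadratic form in $\nabla_wh$.

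\emph{Friction part.} I use the Bakry--\'Emery type commutation $[\nabla_w,\mathcal S_w]=-\nabla_w$ and $[\nabla_y,\mathcal S_w]=0$. In Eulerian form, $\mathcal S_wf=f(\mathcal S_wh+|\nabla_wh|^2)$. Integrating by parts in $w$ against $\boldsymbol\mu_s$, the Gaussian factor in $w$ generates the commutator term $-\gamma\mathcal J_s$. The second-order terms combine into $-2\gamma\int f\,\nabla_y\nabla_wh:\nabla_w^2h$, exactly as in the usual $\Gamma_2$ calculation for the cross term $\ip{\nabla_yh}{\nabla_wh}$.

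\emph{Force defect.} This part comes from $-\nabla_w^*(f r)$, and it has the same structure as the friction drift with $\gamma\nabla_wh$ replaced by $r$. Its contribution is
\[
  \int\bigl(\nabla_yh\cdot\nabla_w+\nabla_wh\cdot\nabla_y-\nabla_yh\cdot\nabla_wh\bigr)\bigl(-\nabla_w^*(fr)\bigr).
\]
Moving $\nabla_w^*$ onto the other factors and using $[\nabla_w,\nabla_w^*]=\Id$ and $[\nabla_y,\nabla_w^*]=0$, the commutator term gives $-\int f\,r\cdot\nabla_yh$. The remaining terms pair $fr$ with second derivatives of $h$. After one further integration by parts, they take the stated form involving $\partial_{y_i}r_j\,\partial_{w_i}\partial_{w_j}h+\partial_{w_i}r_j\,\partial_{y_i}\partial_{w_j}h$, and the cubic terms in $\nabla h$ cancel.

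The main obstacle is this last bookkeeping: one has to arrange the integrations by parts so that all derivatives land on $r$ and on second derivatives of $h$, with no leftover cubic terms. I would carry out this step via the weak formulation $\int\nabla_w\phi\cdot fr$ with $\phi=\partial$-derivatives of $h$, checking the cancellations index by index. The friction computation serves as a template for this step, because setting $r=\gamma\nabla_wh$ must reproduce the friction terms.
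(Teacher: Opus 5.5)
Your proposal is correct and follows essentially the same route as the paper. You split $\partial_t g_{t,s}$ by linearity, handle $-\mathcal A_s g_{t,s}$ through the commutators $[\partial_{w_i},\mathcal A_s]=\partial_{y_i}$ and $[\partial_{y_i},\mathcal A_s]$ together with skew-adjointness, and handle the momentum divergence through $\partial_{w_i}\nabla_w^*v=\nabla_w^*\partial_{w_i}v+v_i$, $[\partial_{y_i},\nabla_w^*]=0$, integration by parts in $w$, and cancellation of the cubic terms. The only difference is that you treat friction by a separate $\Gamma_2$-type computation with $[\nabla_w,\mathcal S_w]=-\nabla_w$, whereas the paper writes $\gamma\mathcal S_w g_{t,s}=-\nabla_w^*(g_{t,s}\,\gamma\nabla_w h)$ and reads off both friction terms as the case $r=\gamma\nabla_w h$ of the force-defect computation, which is exactly the consistency check you propose.
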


The \emph{free Hamiltonian contribution} in \eqref{eq:ent-exact-cross-evolution} comes from the part of the transport $-\mathcal A_s$ generated by the free Hamiltonian $\frac12\ip y{Ay}+\frac12|w|^2$, the quadratic part of the Hamiltonian, and the \emph{smoothed interaction Hessian} term from the part generated by the interaction $V_s(Q_sy)$. The latter has the form of $\mathcal I^{w,A}_s(g_{t,s})$ with $A$ replaced by $Q_s\nabla^2V_s(Q_sy)Q_s$; together with $\mathcal I^{w,A}_s(g_{t,s})$, it weights the momentum score by $A+Q_s\nabla^2V_s(Q_sy)Q_s$, the Hessian of $\frac12\ip y{Ay}+V_s(Q_sy)$. For $V_s=0$, the free Hamiltonian transport rotates the scores $(A^{-1/2}\nabla_y\log g_{t,s},\nabla_w\log g_{t,s})$ at angular frequency $\lambda^{1/2}$ on each eigenmode of $A$ with eigenvalue $\lambda$, and the cross term changes at the rate $\mathcal I^{w,A}_s(g_{t,s})-\mathcal I^y_s(g_{t,s})$, as the product $ab$ does under the rotation $\dot a=b$, $\dot b=-a$. After integration over scales, the entropy decomposition \eqref{eq:ent-decomposition} turns $-\mathcal I^y_s(g_{t,s})$ into the negative entropy $-2\Ent_{\boldsymbol\mu}(g_t)$, up to the integral of $\mathcal I^{w,A}_s(g_{t,s})$, which friction controls.

The other terms come from friction (the terms carrying $\gamma$) and from the force defect. In the coarse kinetic equation \eqref{eq:ent-coarse-density-equation}, both enter as momentum divergences $-\nabla_w^*(g_{t,s}r)$, with $r=\gamma\nabla_w\log g_{t,s}$ and $r=r_{t,s}$, respectively. Intuitively, for constant $r$, such a term translates the law $g_{t,s}\boldsymbol\mu_s$ in momentum with velocity $-r$; since the momentum marginal of $\boldsymbol\mu_s$ is $\mathsf N(0,\Id)$, the momentum score also shifts, $\partial_t\nabla_w\log g_{t,s}=(r\cdot\nabla_w)\nabla_w\log g_{t,s}-r$. This shift is the origin of the pairings $-\gamma\mathcal J_s(g_{t,s})$ and $-\int g_{t,s}\,r_{t,s}\cdot\nabla_y\log g_{t,s}\dd\boldsymbol\mu_s$ in \eqref{eq:ent-exact-cross-evolution}.

Recall from \eqref{eq:ent-physical-dissipation} that friction dissipates entropy only through $\mathcal I_p(g_t)$, the momentum Fisher information at scale zero, whereas $\frac\dd{\dd t}\mathcal J(g_t)$ involves the momentum score at every scale. Lemma~\ref{lem:ent-velocity-fisher} bounds the momentum score with its derivatives, square integrated over all scales, by $\mathcal I_p(f)$, and Proposition~\ref{prop:ent-force-energy} bounds the normalized force defect with its derivatives in the same way by $2\eta^2\Ent_{\boldsymbol\mu}(f)$. For a regular density $f$ and $f_s=\mathsf P_sf$, both estimates involve the following dissipation functional, for a smooth vector function $\psi=(\psi_1,\dots,\psi_d)$ on the coarse space,
\begin{equation}
  \mathcal D_s(\psi)=2\sum_i\int f_s\Gamma_s(\psi_i)\dd\boldsymbol\mu_s =\int f_s\left[ \sum_i\abs{\partial_{y_i}\psi}^2 +\sum_{i,j}A_{ij} \ip{\partial_{w_i}\psi}{\partial_{w_j}\psi} \right]\dd\boldsymbol\mu_s.
  \label{eq:ent-relative-dissipation}
\end{equation}

\begin{lemma}[Dissipation of momentum Fisher information]
\label{lem:ent-velocity-fisher}
For a regular density $f$, the momentum Fisher information of $f_s=\mathsf P_sf$ satisfies the dissipation identity
\begin{equation}
  \frac\dd{\dd s}\mathcal I^w_s=-\mathcal D_s(\nabla_w\log f_s)-\mathcal I^{w,A}_s.
  \label{eq:ent-velocity-fisher-identity}
\end{equation}
Consequently,
\begin{equation}
  \mathcal I^w_s\leq e^{-s}\mathcal I_p(f),\qquad s\geq0,
  \label{eq:ent-velocity-fisher-decay}
\end{equation}
and
\begin{equation}
  \int_0^\infty\bigl(\mathcal D_s(\nabla_w\log f_s)+\mathcal I^{w,A}_s\bigr)\dd s\leq \mathcal I_p(f).
  \label{eq:ent-velocity-fisher-dissipation}
\end{equation}
Moreover, for every $s\geq0$,
\begin{equation}
  \mathcal D_s(\nabla_w\log f_s)\geq\int f_s\bigl[\norm{\nabla_y\nabla_w\log f_s}_{\HS}^2+\norm{\nabla_w^2\log f_s}_{\HS}^2\bigr]\dd\boldsymbol\mu_s,
  \label{eq:ent-dissipation-Hessian}
\end{equation}
and hence
\begin{equation}
  \int_0^\infty\!\int f_s \left[ \norm{\nabla_y\nabla_w\log f_s}_{\HS}^2 +\norm{\nabla_w^2\log f_s}_{\HS}^2 \right]\dd\boldsymbol\mu_s\dd s \leq \mathcal I_p(f).
  \label{eq:ent-score-Hessian-bound}
\end{equation}
\end{lemma}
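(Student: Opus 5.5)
The dissipation identity \eqref{eq:ent-velocity-fisher-identity} is a $\Gamma_2$-type computation along the Polchinski flow, restricted to momentum derivatives. I would first derive the scale equation for $h_s=\log f_s$. From \eqref{eq:ent-scale-evolution} and the chain rule \eqref{eq:chainrule} with $a=\log$,
\begin{equation*}
  \partial_sh_s=\mathsf L_sh_s+\Gamma_s(h_s).
\end{equation*}
Next, I differentiate in $w_j$. The drift of $\mathsf L_s$ in the $y$ variable is independent of $w$, and the $w$-part $\frac12A:\nabla_w^2-\frac12Aw\cdot\nabla_w$ has constant coefficients apart from the linear drift. Hence
\begin{equation*}
  \partial_{w_j}\mathsf L_s\psi=\mathsf L_s\partial_{w_j}\psi-\tfrac12(A\nabla_w\psi)_j,\qquad \partial_{w_j}\Gamma_s(h)=2\Gamma_s(h,\partial_{w_j}h).
\end{equation*}
So $u=\nabla_wh_s$ solves
\begin{equation*}
  \partial_su=\mathsf L_su-\tfrac12Au+2\Gamma_s(h_s,u)
\end{equation*}
componentwise.

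I would then write $\mathcal I^w_s=\int f_s|u|^2\dd\boldsymbol\mu_s$ and differentiate using \eqref{eq:ent-dual-scale} with $\psi_s=f_s|u|^2$. The growth conditions come from Lemma~\ref{lem:ent-fixed-cutoff-regularity}, taken as given. This gives
\begin{equation*}
  \frac{\dd}{\dd s}\mathcal I^w_s=\int\bigl(\partial_s(f_s|u|^2)-\mathsf L_s(f_s|u|^2)\bigr)\dd\boldsymbol\mu_s.
\end{equation*}
Expanding $\mathsf L_s(f_s|u|^2)$ by the product rule produces $f_s\mathsf L_s|u|^2+|u|^2\mathsf L_sf_s+2\Gamma_s(f_s,|u|^2)$. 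The $|u|^2\mathsf L_sf_s$ term cancels against $|u|^2\partial_sf_s$. Then $\mathsf L_s|u|^2=2u\cdot\mathsf L_su+2\sum_i\Gamma_s(u_i)$, and $\Gamma_s(f_s,\cdot)=f_s\Gamma_s(h_s,\cdot)$. The terms $2f_su\cdot 2\Gamma_s(h,u)$ cancel against $-2\Gamma_s(f_s,|u|^2)=-4f_s\sum_iu_i\Gamma_s(h,u_i)$. What remains is
\begin{equation*}
  \int f_s\bigl(-u^\top Au-2\textstyle\sum_i\Gamma_s(u_i)\bigr)\dd\boldsymbol\mu_s=-\mathcal I^{w,A}_s-\mathcal D_s(\nabla_wh_s),
\end{equation*}
which is \eqref{eq:ent-velocity-fisher-identity}. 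This cancellation bookkeeping is the step I expect to be the main obstacle: one must check that no term involving $\nabla V_s$ survives. It should not, because the interaction enters $\mathsf L_s$ only through a $w$-independent $y$-drift, which commutes with $\partial_w$.

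The consequences are then short. Since $\mathcal D_s\ge0$ and $\mathcal I^{w,A}_s\ge\mathcal I^w_s$ (from $A\succeq\Id$), the identity gives $\frac{\dd}{\dd s}\mathcal I^w_s\le-\mathcal I^w_s$. Gronwall's inequality with $\mathcal I^w_0=\mathcal I_p(f)$ gives \eqref{eq:ent-velocity-fisher-decay}. Integrating the identity over $[0,S]$ and dropping $\mathcal I^w_S\ge0$ gives \eqref{eq:ent-velocity-fisher-dissipation} as $S\to\infty$. For \eqref{eq:ent-dissipation-Hessian}, I apply the definition \eqref{eq:ent-relative-dissipation} with $\psi=\nabla_wh$. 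The $y$-part is exactly $\|\nabla_y\nabla_wh\|_{\HS}^2$. In the $w$-part, $A\succeq\Id$ gives $\sum_{i,j}A_{ij}\langle\partial_{w_i}\psi,\partial_{w_j}\psi\rangle\ge\sum_i|\partial_{w_i}\psi|^2=\|\nabla_w^2h\|_{\HS}^2$; this is the trace of $A$ against the Gram matrix of the vectors $\partial_{w_i}\psi$. Combining this with \eqref{eq:ent-velocity-fisher-dissipation} gives \eqref{eq:ent-score-Hessian-bound}.
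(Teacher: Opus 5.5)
Your proposal is correct and follows the paper's proof. It differentiates the same functional $f_s|\nabla_w\log f_s|^2=|\nabla_wf_s|^2/f_s$ via \eqref{eq:ent-dual-scale}, uses the commutator $[\nabla_w,\mathsf L_s]=-\frac12A\nabla_w$ (so no $\nabla V_s$ term appears), and finishes with Gronwall and the bound $\operatorname{tr}(AG)\geq\operatorname{tr}G$ for the Gram matrix $G$, exactly as the paper does. The only cosmetic difference is that the paper commutes $\nabla_w$ with the linear equation $\partial_sf_s=\mathsf L_sf_s$ to get $\partial_s\nabla_wf_s=\mathsf L_s\nabla_wf_s-\frac12A\nabla_wf_s$, which avoids the pair of cancelling $\Gamma_s(h_s,u)$ terms that you track for the log-score.
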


The force defect satisfies an energy estimate of the same form, with the smoothed interaction Hessian as a source.

\begin{proposition}[Energy estimate for the force defect]
\label{prop:ent-force-energy}
Let $f$ be a regular density. The force defect $R_s$ of \eqref{eq:ent-force-defect} satisfies
\begin{equation}
  \partial_sR_s =\mathsf L_sR_s-\frac12AR_s+Q_s\nabla^2V_s(Q_sy)Q_s\nabla_yf_s, \qquad R_0=0.
  \label{eq:ent-force-evolution}
\end{equation}
Moreover, the normalized force defect $r_s=R_s/f_s$ satisfies
\begin{equation}
  \int_0^\infty \left[ \mathcal D_s(r_s) +\int f_s r_s^\top Ar_s\dd\boldsymbol\mu_s \right]\dd s \leq2\eta^2\Ent_{\boldsymbol\mu}(f).
  \label{eq:ent-force-energy}
\end{equation}
In particular,
\begin{equation}
  \int_0^\infty\!\int f_s \left[ \norm{\nabla_yr_s}_{\HS}^2 +\norm{\nabla_wr_s}_{\HS}^2+\abs{r_s}^2 \right]\dd\boldsymbol\mu_s\dd s \leq2\eta^2\Ent_{\boldsymbol\mu}(f).
  \label{eq:ent-normalized-force-bound}
\end{equation}
\end{proposition}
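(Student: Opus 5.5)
The plan has two parts: first derive the evolution equation \eqref{eq:ent-force-evolution} from Lemma~\ref{lem:ent-scale-generator}, then obtain \eqref{eq:ent-force-energy} from an energy identity for $X(s)=\int f_s|r_s|^2\dd\boldsymbol\mu_s$, integrated in $s$.

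\emph{Evolution equation.} Write $G_s=\mathsf P_s(\nabla V_0)=\nabla V_s(Q_sy)$, a function of $y$ alone, so that $R_s=Q_s[\mathsf P_s(f\nabla V_0)-f_sG_s]$.
\begin{itemize}
\item Applied componentwise to $f\nabla V_0$ and to $\nabla V_0$, Lemma~\ref{lem:ent-scale-generator} gives $\partial_s\mathsf P_s(f\nabla V_0)=\mathsf L_s\mathsf P_s(f\nabla V_0)$ and $\partial_sG_s=\mathsf L_sG_s$. I will justify this by smooth bounded approximation.
\item For the product $f_sG_s$, the carr\'e du champ \eqref{eq:ent-carre-du-champ} gives $\partial_s(f_sG_s)=\mathsf L_s(f_sG_s)-2\Gamma_s(f_s,G_s)$.
\item Since $G_s$ depends only on $y$, we have $2\Gamma_s(f_s,G_{s,j})=\nabla_yf_s\cdot\nabla_yG_{s,j}$, and $\nabla_yG_s=\nabla^2V_s(Q_sy)Q_s$.
\item The constant matrix $Q_s$ commutes with $\mathsf L_s$, which acts componentwise, and $\partial_sQ_s=-\frac12AQ_s$.
\end{itemize}
Combining these gives \eqref{eq:ent-force-evolution}, with source $Q_s\nabla^2V_s(Q_sy)Q_s\nabla_yf_s$. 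Finally, $R_0=0$ because $\mathsf P_0=\Id$.

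\emph{Energy identity.} Set $\psi_s=|R_s|^2/f_s=\Phi(R_s,f_s)$, where $\Phi(R,f)=|R|^2/f$. Lemma~\ref{lem:ent-fixed-cutoff-regularity}-type bounds should give the growth needed in \eqref{eq:ent-dual-scale}, so that $\frac{\dd}{\dd s}X=\int(\partial_s\psi_s-\mathsf L_s\psi_s)\dd\boldsymbol\mu_s$. Use the chain rule for $\mathsf L_s$ with the two-variable function $\Phi$, together with $\partial_sf_s=\mathsf L_sf_s$ and \eqref{eq:ent-force-evolution}. The second-order part of $\mathsf L_s\Phi$ involves the Hessian of $\Phi$ in the $\Gamma_s$-form. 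A direct computation identifies it as $f_s\sum_i2\Gamma_s(r_{s,i})$, which yields $-\mathcal D_s(r_s)$. The term $-\frac12AR_s$ contributes $-\int f_sr_s^\top Ar_s$, and the source contributes $2\int f_s\,r_s\cdot H_s\nabla_y\log f_s$, where $H_s=Q_s\nabla^2V_s(Q_sy)Q_s$ and $\|H_s\|_\op\le b(s)$. Hence
\[
X'(s)=-\mathcal D_s(r_s)-\int f_sr_s^\top Ar_s\dd\boldsymbol\mu_s+2\int f_s\,r_s\cdot H_s\nabla_y\log f_s\dd\boldsymbol\mu_s
\le -\bigl[\mathcal D_s(r_s)+\textstyle\int f_sr_s^\top Ar_s\bigr]+2b(s)\sqrt{X(s)\,\mathcal I^y_s}.
\]

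\emph{Integration in scale (the main obstacle).} A naive Young inequality produces $\int b(s)^2\mathcal I^y_s\dd s$, which the entropy decomposition does not control. Instead, set $F(s)=\int_0^sb(\sigma)\sqrt{\mathcal I^y_\sigma}\dd\sigma$. By Cauchy--Schwarz and \eqref{eq:ent-decomposition}, $F(\infty)\le\eta\sqrt{2\Ent_{\boldsymbol\mu}(f)}$.
\begin{itemize}
\item Dropping the dissipative terms gives $(\sqrt X)'\le b\sqrt{\mathcal I^y}$. With $X(0)=0$ this yields $\sqrt{X(s)}\le F(s)$. I will handle $X=0$ by applying the argument to $\sqrt{X+\delta}$ and letting $\delta\to0$.
\item Integrating the full inequality over $[0,S]$ and using $X(S)\ge0$ gives
\[
\int_0^S\Bigl[\mathcal D_s(r_s)+\int f_sr_s^\top Ar_s\Bigr]\dd s\le2\int_0^Sb\sqrt{\mathcal I^y}\,F\dd s=F(S)^2\le2\eta^2\Ent_{\boldsymbol\mu}(f).
\]
\end{itemize}
Letting $S\to\infty$ proves \eqref{eq:ent-force-energy}. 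Then \eqref{eq:ent-normalized-force-bound} follows from \eqref{eq:ent-relative-dissipation} and $A\succeq\Id$: the $y$-part of $\mathcal D_s(r_s)$ is $\int f_s\|\nabla_yr_s\|_\HS^2$, the $A$-weighted $w$-part dominates $\int f_s\|\nabla_wr_s\|_\HS^2$, and $\int f_sr_s^\top Ar_s\ge\int f_s|r_s|^2$.

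The delicate points are the scale-integration trick via $F$, which gives exactly the constant $2\eta^2$, and justifying differentiation under the integral, where $f_s\ge\inf f>0$ keeps $\psi_s$ smooth.
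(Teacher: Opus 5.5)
Your proposal is correct and follows the paper's proof. It uses the same energy identity for $X(s)=\int f_s|r_s|^2\dd\boldsymbol\mu_s$ via \eqref{eq:ent-dual-scale}, and the same bound $\sqrt{X(s)}\le F(s)$ that turns the integrated source into $F(S)^2\le2\eta^2\Ent_{\boldsymbol\mu}(f)$. The only variation is in deriving \eqref{eq:ent-force-evolution}: you use the covariance form of $R_s$ with the carr\'e du champ product rule, whereas the paper uses $R_s=Q_s\mathsf P_s(\nabla_qf)-\nabla_yf_s$ and the commutator $[\nabla_y,\mathsf L_s]$. Both work, and no approximation is needed, since $f\nabla V_0$ and $\nabla V_0$ already belong to $C_b^\infty(\R^{2d})$.
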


\begin{proof}[Proof of Proposition~\ref{prop:ent-corrector-differential}]
Fix $t$ and write $f=g_t$, $f_s=g_{t,s}$, $r_s=r_{t,s}$, and $\mathcal I=\mathcal I_p(g_t)$. By \eqref{eq:ent-corrector-differentiation}, $\frac\dd{\dd t}\mathcal J(g_t)=\int_0^\infty\partial_t\mathcal J_s(g_{t,s})\dd s$, and we integrate the terms of \eqref{eq:ent-exact-cross-evolution} in $s$ one at a time.

For the free Hamiltonian contribution, by \eqref{eq:ent-decomposition} we have
\begin{equation*}
  \int_0^\infty(-\mathcal I^y_s+\mathcal I^{w,A}_s)\dd s =-2\Ent_{\boldsymbol\mu}(g_t)+2\int_0^\infty \mathcal I^{w,A}_s\dd s.
\end{equation*}
For the force-defect term with the position score, the Cauchy--Schwarz inequality, \eqref{eq:ent-normalized-force-bound}, and \eqref{eq:ent-decomposition} give
\begin{equation*}
  \left| \int_0^\infty\!\int f_sr_s\cdot\nabla_y\log f_s\dd\boldsymbol\mu_s\dd s \right| \leq\sqrt{2\eta^2\Ent_{\boldsymbol\mu}(g_t)}\sqrt{2\Ent_{\boldsymbol\mu}(g_t)}=2\eta \Ent_{\boldsymbol\mu}(g_t).
\end{equation*}
For the smoothed interaction Hessian, the definition of $b(s)$, \eqref{eq:ent-velocity-fisher-decay}, the Cauchy--Schwarz inequality in $s$, and the definition \eqref{eq:ent-abstract-budget} of $\eta$ give
\begin{align*}
  &\left| \int_0^\infty\!\int f_s(\nabla_w\log f_s)^\top Q_s\nabla^2V_s(Q_sy)Q_s\nabla_w\log f_s\dd\boldsymbol\mu_s\dd s \right|\\
  &\qquad\leq\int_0^\infty b(s)\mathcal I^w_s\dd s \leq \mathcal I\int_0^\infty b(s)e^{-s}\dd s \leq\frac{\eta}{\sqrt2}\mathcal I.
\end{align*}
For the mixed Hessians, by the Cauchy--Schwarz inequality in the indices $i,j$, Young's inequality, and \eqref{eq:ent-dissipation-Hessian}, we obtain
\begin{equation} \label{auxeq3}
  2\left| \int f_s \sum_{i,j}(\partial_{y_i}\partial_{w_j}\log f_s)(\partial_{w_i}\partial_{w_j}\log f_s) \dd\boldsymbol\mu_s \right|\leq\mathcal D_s(\nabla_w\log f_s).
\end{equation}
We estimate the friction term containing the mixed Hessians together with the momentum term of the free Hamiltonian contribution. By \eqref{auxeq3} and \eqref{eq:ent-velocity-fisher-dissipation},
\begin{align*}
  &2\int_0^\infty\mathcal I^{w,A}_s\dd s-2\gamma\int_0^\infty\!\int f_s\sum_{i,j}(\partial_{y_i}\partial_{w_j}\log f_s)(\partial_{w_i}\partial_{w_j}\log f_s)\dd\boldsymbol\mu_s\dd s\\
  \leq & 2\int_0^\infty\mathcal I^{w,A}_s\dd s+\gamma\int_0^\infty\mathcal D_s(\nabla_w\log f_s)\dd s\\
 \leq & \max\{2,\gamma\}\int_0^\infty\bigl(\mathcal D_s(\nabla_w\log f_s)+\mathcal I^{w,A}_s\bigr)\dd s\leq\max\{2,\gamma\}\,\mathcal I.
\end{align*}
For the force-defect derivative terms, the Cauchy--Schwarz inequality, \eqref{eq:ent-normalized-force-bound}, and \eqref{eq:ent-score-Hessian-bound} bound the absolute value of their scale integral by
\begin{align*}
  &\left(\int_0^\infty\!\int f_s [\norm{\nabla_yr_s}_{\HS}^2+\norm{\nabla_wr_s}_{\HS}^2] \dd\boldsymbol\mu_s\dd s\right)^{1/2}\\
  &\quad\times \left(\int_0^\infty\!\int f_s [\norm{\nabla_w^2\log f_s}_{\HS}^2+\norm{\nabla_y\nabla_w\log f_s}_{\HS}^2] \dd\boldsymbol\mu_s\dd s\right)^{1/2} \leq\sqrt2\eta\sqrt{\mathcal I\,\Ent_{\boldsymbol\mu}(g_t)}.
\end{align*}
For the friction cross term, by \eqref{eq:ent-decomposition}, $\mathcal I^w_s\leq\mathcal I^{w,A}_s$, and \eqref{eq:ent-velocity-fisher-dissipation}, we have
\begin{align*}
  \abs{\mathcal J(g_t)} &\leq \Bigl(\int_0^\infty \mathcal I^y_s\dd s\Bigr)^{1/2} \Bigl(\int_0^\infty \mathcal I^w_s\dd s\Bigr)^{1/2}\\
  &\leq\sqrt{2\Ent_{\boldsymbol\mu}(g_t)}\,\Bigl(\int_0^\infty \mathcal I^{w,A}_s\dd s\Bigr)^{1/2} \leq\sqrt{2\mathcal I\,\Ent_{\boldsymbol\mu}(g_t)},
\end{align*}
implying $-\gamma\mathcal J(g_t)\leq\sqrt2\gamma\sqrt{\mathcal I\,\Ent_{\boldsymbol\mu}(g_t)}$. Inserting these bounds into the scale integral of \eqref{eq:ent-exact-cross-evolution} gives
\begin{align*}
  \frac\dd{\dd t}\mathcal J(g_t) \leq{}& -2\Ent_{\boldsymbol\mu}(g_t)+2\eta\Ent_{\boldsymbol\mu}(g_t) +\max\{2,\gamma\}\,\mathcal I+\frac{\eta}{\sqrt2}\mathcal I\\
  &+\sqrt2\eta\sqrt{\mathcal I\,\Ent_{\boldsymbol\mu}(g_t)} +\sqrt2\gamma\sqrt{\mathcal I\,\Ent_{\boldsymbol\mu}(g_t)},
\end{align*}
which is \eqref{eq:ent-corrector-differential}. The proof is complete.
\end{proof}

\subsection{Kinetic derivative of the cross term}
\label{subsec:corrector-derivative}
We prove Lemma~\ref{lem:ent-exact-cross-evolution}, which computes the kinetic derivative of the cross term at a fixed scale.

\begin{proof}[Proof of Lemma~\ref{lem:ent-exact-cross-evolution}]
We fix $s\geq0$ and let $h=\log g_{t,s}$. Note that all integrals are against $\boldsymbol\mu_s$, which does not depend on $t$. Moreover, Lemma~\ref{lem:ent-kinetic-regularity} justifies the differentiation under the integral and the integrations by parts below.

We first differentiate the integrand $\nabla_yg_{t,s}\cdot\nabla_wg_{t,s}/g_{t,s}$ of $\mathcal J_s(g_{t,s})$ and find
\begin{equation}
  \partial_t\mathcal J_s(g_{t,s})=\int \nabla_y\partial_tg_{t,s}\cdot\nabla_wh+\nabla_w\partial_tg_{t,s}\cdot\nabla_yh-\partial_tg_{t,s}\,\nabla_yh\cdot\nabla_wh  \, \dd\boldsymbol\mu_s.
  \label{eq:ent-cross-time-derivative}
\end{equation}
The right-hand side is linear in $\partial_tg_{t,s}$, and we split it into three terms as in \eqref{eq:ent-coarse-density-equation}. As noted before Lemma~\ref{lem:ent-intertwining}, $\mathcal S_w=-\nabla_w^*\nabla_w$ by \eqref{eq:divergenceform} and $R_{t,s}=g_{t,s}r_{t,s}$ make the friction and force-defect terms momentum divergences of the same form,
\begin{equation*}
  \gamma\mathcal S_wg_{t,s}=-\nabla_w^*\bigl(g_{t,s}\,\gamma\nabla_wh\bigr),\qquad
  -\nabla_w^*R_{t,s}=-\nabla_w^*\bigl(g_{t,s}\,r_{t,s}\bigr),
\end{equation*}
so we next treat the Hamiltonian term $-\mathcal A_sg_{t,s}$ and the momentum-divergence part separately.

\smallskip

\noindent \emph{Hamiltonian transport term.} We substitute $\partial_tg_{t,s}=-\mathcal A_sg_{t,s}$ in \eqref{eq:ent-cross-time-derivative}. Since $\mathcal A_s$ is a first-order operator without zeroth-order term, we have the commutator relations
\begin{equation*}
  [\partial_{y_i},\mathcal A_s] =-\sum_j\bigl(A+Q_s\nabla^2V_s(Q_sy)Q_s\bigr)_{ij}\partial_{w_j}, \qquad [\partial_{w_i},\mathcal A_s]=\partial_{y_i},
\end{equation*}
which, along with $\nabla g_{t,s}=g_{t,s}\nabla h$ and the product rule $\mathcal A_s(\psi\chi)=\psi\,\mathcal A_s\chi+\chi\,\mathcal A_s\psi$, gives
\begin{align*}
  &\int\bigl(\nabla_y\partial_tg_{t,s}\cdot\nabla_wh+\nabla_w\partial_tg_{t,s}\cdot\nabla_yh-\partial_tg_{t,s}\,\nabla_yh\cdot\nabla_wh\bigr)\\
  &\qquad=-\int\mathcal A_s\bigl(g_{t,s}\,\nabla_yh\cdot\nabla_wh\bigr)
   +\int g_{t,s}\bigl[(\nabla_wh)^\top\bigl(A+Q_s\nabla^2V_s(Q_sy)Q_s\bigr)\nabla_wh-|\nabla_yh|^2\bigr].
\end{align*}
Since $\mathcal A_s^*=-\mathcal A_s$ in $L^2(\boldsymbol\mu_s)$ and $\mathcal A_s1=0$, the first integral vanishes. By \eqref{eq:ent-scale-scores}, the Hamiltonian transport term therefore gives the free Hamiltonian contribution and the smoothed interaction Hessian term in \eqref{eq:ent-exact-cross-evolution}:
\begin{equation*}
  -\mathcal I^y_s(g_{t,s})+\mathcal I^{w,A}_s(g_{t,s})+\int g_{t,s}(\nabla_wh)^\top Q_s\nabla^2V_s(Q_sy)Q_s\nabla_wh.
\end{equation*}

\smallskip

\noindent \emph{Momentum-divergence term.} We consider $\partial_tg_{t,s}=-\nabla_w^*(g_{t,s}r)$, where $r=\gamma\nabla_wh$ or $r=r_{t,s}$. Since $\nabla_w^*=-\diver_w+w\cdot$, we have
\begin{equation*}
  \partial_{y_i}\nabla_w^*(g_{t,s}r)=\nabla_w^*\bigl(\partial_{y_i}(g_{t,s}r)\bigr), \qquad \partial_{w_i}\nabla_w^*(g_{t,s}r)=\nabla_w^*\bigl(\partial_{w_i}(g_{t,s}r)\bigr)+g_{t,s}r_i.
\end{equation*}
Inserting these identities in \eqref{eq:ent-cross-time-derivative} and integrating by parts in $w$ gives
\begin{equation*}
  -\int\sum_{i,j}\bigl[\partial_{y_i}(g_{t,s}r_j)\,\partial_{w_i}\partial_{w_j}h+\partial_{w_i}(g_{t,s}r_j)\,\partial_{y_i}\partial_{w_j}h\bigr]
  -\int g_{t,s}\,r\cdot\nabla_yh+\int g_{t,s}\,r\cdot\nabla_w(\nabla_yh\cdot\nabla_wh).
\end{equation*}
The extra term $g_{t,s}r_i$ produces the pairing $-\int g_{t,s}\,r\cdot\nabla_yh$.
In the first integral, $\partial(g_{t,s}r_j)=g_{t,s}(\partial r_j+r_j\,\partial h)$ for $\partial=\partial_{y_i},\partial_{w_i}$, and the terms containing $r_j\,\partial h$ cancel the last integral, by
\begin{equation*}
  r\cdot\nabla_w(\nabla_yh\cdot\nabla_wh)=\sum_{i,j}r_j\bigl[(\partial_{y_i}h)\,\partial_{w_i}\partial_{w_j}h+(\partial_{w_i}h)\,\partial_{y_i}\partial_{w_j}h\bigr].
\end{equation*}
The momentum divergence $-\nabla_w^*(g_{t,s}r)$ therefore contributes
\begin{equation*}
  -\int g_{t,s}\Bigl[r\cdot\nabla_yh+\sum_{i,j}(\partial_{y_i}r_j)\,\partial_{w_i}\partial_{w_j}h+\sum_{i,j}(\partial_{w_i}r_j)\,\partial_{y_i}\partial_{w_j}h\Bigr].
\end{equation*}
For $r=r_{t,s}$, these are the two force-defect terms in \eqref{eq:ent-exact-cross-evolution}. For $r=\gamma\nabla_wh$, the first pairing contributes $-\gamma\mathcal J_s(g_{t,s})$, and the two derivative pairings coincide and together give
\begin{equation*}
  -2\gamma\int g_{t,s}\sum_{i,j}\partial_{y_i}\partial_{w_j}h\,\partial_{w_i}\partial_{w_j}h,
\end{equation*}
which are the two friction terms in \eqref{eq:ent-exact-cross-evolution}.

\smallskip

\noindent
Combining the above computations completes the proof of \eqref{eq:ent-exact-cross-evolution}.
\end{proof}

\subsection{Energy estimates along the Polchinski flow}
\label{subsec:scale-energy-estimates}

We prove Lemma~\ref{lem:ent-velocity-fisher} and Proposition~\ref{prop:ent-force-energy}. Throughout this subsection, $f$ is a fixed regular density, $f_s=\mathsf P_sf$, and the Fisher informations are given in \eqref{eq:ent-scale-scores}.

Both proofs rely on the same computational route. Specifically, for the proof of Lemma~\ref{lem:ent-velocity-fisher}, we note
that in the momentum variable, the Polchinski generator is an Ornstein--Uhlenbeck generator independent of the interaction, so that $\partial_s\nabla_wf_s=(\mathsf L_s-\frac12A)\nabla_wf_s$; see \eqref{eq:ent-momentum-gradient-evolution} below. Consequently, the momentum Fisher information decreases from $\mathcal I_p(f)$, and its rate of decrease bounds the derivatives $\nabla_y\nabla_w\log f_s$ and $\nabla_w^2\log f_s$ of the momentum score.

For the proof of Proposition~\ref{prop:ent-force-energy}, \eqref{eq:ent-force-evolution} shows that the force defect $R_s$ vanishes at scale zero and satisfies the equation of $\nabla_wf_s$ with an additional source term $Q_s\nabla^2V_s(Q_sy)Q_s\nabla_yf_s$, i.e., the smoothed interaction Hessian applied to $\nabla_yf_s$. The same computation then bounds the total dissipation of $R_s$ over all scales by $\bigl(\int_0^\infty b(s)(\mathcal I^y_s)^{1/2}\dd s\bigr)^2\leq2\eta^2\Ent_{\boldsymbol\mu}(f)$.

\begin{proof}[Proof of Lemma~\ref{lem:ent-velocity-fisher}]
Let $\psi_s=|\nabla_wf_s|^2/f_s$, and then $\mathcal I^w_s=\int\psi_s\dd\boldsymbol\mu_s$.
By Lemma~\ref{lem:ent-fixed-cutoff-regularity}, $\psi_s$ satisfies the growth condition of Lemma~\ref{lem:ent-scale-generator}, and \eqref{eq:ent-dual-scale} gives $\frac\dd{\dd s}\mathcal I^w_s=\int(\partial_s-\mathsf L_s)\psi_s\dd\boldsymbol\mu_s$. Noting that the only term of $\mathsf L_s$ in \eqref
{eq:ent-scale-generator} with coefficient that depends on $w$ is the drift $-\frac12Aw\cdot\nabla_w$, we have $[\nabla_w,\mathsf L_s]=-\frac12A\nabla_w$. Then, applying $\nabla_w$ to $\partial_sf_s=\mathsf L_sf_s$ gives
\begin{equation}
  \partial_s\nabla_wf_s=\mathsf L_s\nabla_wf_s-\frac12A\nabla_wf_s.
  \label{eq:ent-momentum-gradient-evolution}
\end{equation}
By \eqref{eq:ent-momentum-gradient-evolution} above and $\partial_sf_s=\mathsf L_sf_s$, and by the chain rule \eqref{eq:chainrule} for $\mathsf L_s$,
\begin{align*}
  \partial_s\psi_s&=2\nabla_w\log f_s\cdot\Bigl(\mathsf L_s\nabla_wf_s-\frac12A\nabla_wf_s\Bigr)-|\nabla_w\log f_s|^2\,\mathsf L_sf_s,\\
  \mathsf L_s\psi_s&=2\nabla_w\log f_s\cdot\mathsf L_s\nabla_wf_s-|\nabla_w\log f_s|^2\,\mathsf L_sf_s+2f_s\sum_i\Gamma_s(\partial_{w_i}\log f_s).
\end{align*}
The terms containing $\mathsf L_s$ cancel in the difference $ (\partial_s-\mathsf L_s)\psi_s$:
\begin{equation} \label{eq:evolutionpsis}
  (\partial_s-\mathsf L_s)\psi_s=-f_s(\nabla_w\log f_s)^\top A\,\nabla_w\log f_s-2f_s\sum_i\Gamma_s(\partial_{w_i}\log f_s).
\end{equation}
Integrating against $\boldsymbol\mu_s$ and using \eqref{eq:ent-relative-dissipation} gives \eqref{eq:ent-velocity-fisher-identity}.

Since $\mathcal I^{w,A}_s\geq\mathcal I^w_s$ for $A\succeq\Id$, it implies $\frac\dd{\dd s}\mathcal I^w_s\leq-\mathcal I^w_s$. Then, using Gronwall's inequality with $\mathcal I^w_0=\mathcal I_p(f)$ gives \eqref{eq:ent-velocity-fisher-decay}. Integrating \eqref{eq:ent-velocity-fisher-identity} over $[0,S]$ gives
\begin{equation*}
  \mathcal I^w_S+\int_0^S\bigl(\mathcal D_s(\nabla_w\log f_s)+\mathcal I^{w,A}_s\bigr)\dd s=\mathcal I_p(f),
\end{equation*}
which implies \eqref{eq:ent-velocity-fisher-dissipation} by $\mathcal I^w_S\geq0$ and letting $S\to\infty$.
Finally, \eqref{eq:ent-dissipation-Hessian} follows from \eqref{eq:ent-relative-dissipation}, by $A\succeq\Id$. Integrating it in $s$ and using \eqref{eq:ent-velocity-fisher-dissipation} with $\mathcal I^{w,A}_s\geq0$ proves \eqref{eq:ent-score-Hessian-bound}.
\end{proof}

\begin{proof}[Proof of Proposition~\ref{prop:ent-force-energy}]
We first derive the evolution \eqref{eq:ent-force-evolution} of $R_s$. By \eqref{eq:ent-gradient-defect}, $R_s=Q_s\mathsf P_s(\nabla_qf)-\nabla_yf_s$, and we will see
\begin{align}
  \partial_s\bigl(Q_s\mathsf P_s(\nabla_qf)\bigr)&=\mathsf L_s\bigl(Q_s\mathsf P_s(\nabla_qf)\bigr)-\frac12A\,Q_s\mathsf P_s(\nabla_qf), \label{auxeqa}\\
  \partial_s \nabla_y f_s & = \mathsf L_s\nabla_yf_s-\frac12A\nabla_yf_s-Q_s\nabla^2V_s(Q_sy)Q_s\nabla_yf_s. \label{auxeqb}
\end{align}
Subtracting the second equation from the first proves \eqref{eq:ent-force-evolution}.
To obtain the first equation \eqref{auxeqa}, it suffices to apply Lemma~\ref{lem:ent-scale-generator} componentwise to $\nabla_qf\in C_b^\infty(\R^{2d})$, noting that the matrix $Q_s$ does not depend on $(y,w)$, so it commutes with $\mathsf L_s$, and $\frac\dd{\dd s}Q_s=-\frac12AQ_s$. For the second one \eqref{auxeqb}, noting that the only coefficient of $\mathsf L_s$ that depends on $y$ is the drift $-\bigl(\frac12Ay+Q_s\nabla V_s(Q_sy)\bigr)\cdot\nabla_y$, we have
\begin{equation*}
  [\nabla_y,\mathsf L_s] = - \frac12A\nabla_y-Q_s\nabla^2V_s(Q_sy)Q_s\nabla_y.
\end{equation*}
Applying $\nabla_y$ to \eqref{eq:ent-scale-evolution}, with $\nabla_y\partial_sf_s=\partial_s\nabla_yf_s$ by Lemma~\ref{lem:ent-fixed-cutoff-regularity}, gives \eqref{auxeqb}. The initial value $R_0=0$ follows from \eqref{eq:ent-force-defect}, since $Q_0=\Id$, $\mathsf P_0$ is the identity, and $V_s|_{s=0}=V_0$.

Next we derive the energy estimate. For convenience, we write
\begin{equation} \label{auxdef}
  Y_s := \int\frac{\abs{R_s}^2}{f_s}\dd\boldsymbol\mu_s = \int f_s|r_s|^2\dd\boldsymbol\mu_s, \qquad \sigma_s := b(s)(\mathcal I^y_s)^{1/2}.
\end{equation}
Comparing \eqref{eq:ent-momentum-gradient-evolution} with \eqref{eq:ent-force-evolution}, a similar computation
as in the proof of Lemma~\ref{lem:ent-velocity-fisher} for \eqref{eq:evolutionpsis}, with $R_s$ in place of $\nabla_wf_s$, gives
\begin{equation*}
  (\partial_s-\mathsf L_s)\Bigl(\frac{|R_s|^2}{f_s}\Bigr) =-f_sr_s^\top Ar_s-2f_s\sum_i\Gamma_s(r_{s,i})+\underbrace{2r_s^\top Q_s\nabla^2V_s(Q_sy)Q_s\nabla_yf_s}_{\text{source}}.
\end{equation*}
By Lemma~\ref{lem:ent-fixed-cutoff-regularity}, $|R_s|^2/f_s$ satisfies the growth condition of Lemma~\ref{lem:ent-scale-generator}, and then integration against $\boldsymbol\mu_s$ implies,  by \eqref{eq:ent-dual-scale},
\begin{equation}
  \frac\dd{\dd s}Y_s =-\mathcal D_s(r_s)-\int f_s r_s^\top Ar_s\dd\boldsymbol\mu_s +2\int f_s r_s^\top Q_s\nabla^2V_s(Q_sy)Q_s\nabla_y\log f_s\dd\boldsymbol\mu_s.
  \label{eq:ent-Y-energy}
\end{equation}
Since $\norm{Q_s\nabla^2V_s(Q_sy)Q_s}_\op\leq b(s)$ for any $y$, the Cauchy--Schwarz inequality bounds the last term of \eqref{eq:ent-Y-energy}, with $\sigma_s$ given in \eqref{auxdef},
\begin{equation}
  2\int f_s r_s^\top Q_s\nabla^2V_s(Q_sy)Q_s\nabla_y\log f_s\dd\boldsymbol\mu_s\leq2b(s)\sqrt{Y_s}\,(\mathcal I^y_s)^{1/2}=2\sqrt{Y_s}\,\sigma_s.
  \label{eq:ent-source-bound}
\end{equation}
It follows that
\begin{equation}
 \frac\dd{\dd s}Y_s\leq2\sqrt{Y_s}\,\sigma_s\,, \qquad \sqrt{Y_s}\leq\int_0^s\sigma_{u}\dd u,
  \label{eq:ent-root-Y}
\end{equation}
where the second inequality follows by integrating $\frac\dd{\dd s}\sqrt{Y_s}\leq\sigma_s$ from $Y_0=0$. Now,
integrating \eqref{eq:ent-Y-energy} over $[0,S]$, using $Y_0=0$ and $Y_S\geq0$, as well as \eqref{eq:ent-source-bound}--\eqref{eq:ent-root-Y}, we obtain
\begin{align}
  \int_0^S \left[\mathcal D_s(r_s)+\int f_s r_s^\top Ar_s\dd\boldsymbol\mu_s\right]\dd s & \leq 2\int_0^S\sigma_s \left(\int_0^s\sigma_{u}\dd u\right)\dd s =\left(\int_0^S\sigma_s\dd s\right)^2 \notag \\
  & \leq \left(\int_0^Sb(s)^2\dd s\right) \left(\int_0^S\mathcal I^y_s\dd s\right) \leq2\eta^2\Ent_{\boldsymbol\mu}(f).
  \label{eq:ent-force-integrated-proof}
\end{align}
 The last line follows from the Cauchy--Schwarz inequality in $s$, the definition \eqref{eq:ent-abstract-budget} of $\eta$, and the bound $\int_0^\infty\mathcal I^y_s\dd s\leq2\Ent_{\boldsymbol\mu}(f)$ given by \eqref{eq:ent-decomposition}. Letting $S\to\infty$ gives \eqref{eq:ent-force-energy}.

 Finally, since $A\succeq\Id$, \eqref{eq:ent-relative-dissipation} gives
\begin{equation*}
  \mathcal D_s(r_s)+\int f_sr_s^\top Ar_s\dd\boldsymbol\mu_s \geq\int f_s\bigl[\norm{\nabla_yr_s}_{\HS}^2+\norm{\nabla_wr_s}_{\HS}^2+|r_s|^2\bigr]\dd\boldsymbol\mu_s,
\end{equation*}
and \eqref{eq:ent-normalized-force-bound} follows.
\end{proof}

\section{Application to the lattice sine--Gordon model}
\label{sec:SG-application}

In this section, we prove Theorem~\ref{thm:bulk-entropy} by rescaling to unit mass and bounding $\eta$ uniformly with the charge-expansion estimates of Bauerschmidt and Bodineau~\cite{BB}. We then show that the mass order of the decay rate is optimal and compare it with the static logarithmic Sobolev inequality in~\cite{BB}.

\subsection{Uniform bound on \texorpdfstring{$\eta$}{eta} and entropy decay}
\label{subsec:scaling}

\paragraph{Unit-mass coordinates.}
To apply Theorem~\ref{thm:phase-space-criterion}, we measure lengths and times in units of $m^{-1}$, rescaling the lattice by $x\mapsto mx$ and time by $t\mapsto mt$, and use Euclidean coordinates for the rescaled field and momentum:
\begin{equation}
  \tau=mt, \qquad q^{\mathrm{sc}}=m\varepsilon\varphi, \qquad p^{\mathrm{sc}}=\varepsilon\varpi.
  \label{eq:phase-scaling}
\end{equation}
The rescaling maps \eqref{eq:langevin} to the same equation at unit mass, with parameters
\begin{equation}
  (\varepsilon,L,z,m,\gamma) \longmapsto \bigl(m\varepsilon,\,mL,\,z\,m^{-(2-\alpha)},\,1,\,\gamma/m\bigr).
  \label{eq:scaling}
\end{equation}
Indeed, $\Delta_\varepsilon=m^2\Delta_{m\varepsilon}$,
\begin{equation*}
  z\varepsilon^{2-\alpha} =\bigl(z\,m^{-(2-\alpha)}\bigr)(m\varepsilon)^{2-\alpha},
\end{equation*}
and Brownian scaling turns the noise into that of the rescaled friction.
Let $\mathbf T$ be the linear phase-space map $(\varphi,\varpi)\mapsto(q^{\mathrm{sc}},p^{\mathrm{sc}})$ in \eqref{eq:phase-scaling}. It sends a density $f$ relative to $\boldsymbol\mu_{\varepsilon,L}$ to the density $f\circ\mathbf T^{-1}$ relative to $\mathbf T_\#\boldsymbol\mu_{\varepsilon,L}$, and therefore preserves the $L^2$ norm and relative entropy:
\begin{equation}
  \Ent_{\mathbf T_\#\boldsymbol\mu_{\varepsilon,L}}(f\circ\mathbf T^{-1}) =\Ent_{\boldsymbol\mu_{\varepsilon,L}}(f).
  \label{eq:scaling-entropy}
\end{equation}

As in \S~\ref{subsec:SG-main}, with a slight abuse of notation, we retain $\varepsilon$, $L$, $z$, and $\gamma$ for the rescaled parameters, set $m=1$, and rename $(\tau,q^{\mathrm{sc}},p^{\mathrm{sc}})$ as $(t,q,p)$. The dynamics become
\begin{equation}
  \dd q_t=p_t\dd t, \qquad \dd p_t=-\nabla U(q_t)\dd t-\gamma p_t\dd t +\sqrt{2\gamma}\dd W_t,
  \label{eq:unit-langevin}
\end{equation}
with standard Euclidean Brownian motion and
\begin{equation}
  U(q)=\frac12\ip q{Aq}+V_0(q), \qquad A=-\Delta_\varepsilon+\Id\succeq\Id,
  \label{eq:euclidean-energy}
\end{equation}
\begin{equation}
  V_0(q)=2z\varepsilon^{2-\alpha} \sum_x\cos\left(\frac{\sqrt\beta}{\varepsilon}q_x\right).
  \label{eq:euclidean-interaction}
\end{equation}
The invariant measure is $\boldsymbol\mu=\nu\otimes\mathsf N(0,\Id)$, where $\nu(\dd q)\propto e^{-U(q)}\dd q$. At each fixed lattice, $V_0\in C_b^\infty$, so $U$ belongs to the class \eqref{eq:potentialclass} and the regularity results of Appendix~\ref{app:regularity} apply.

\paragraph{Bound on the parameter $\eta$.}
For the unit-mass interaction \eqref{eq:euclidean-interaction}, the function $b$ and the parameter $\eta$ of \S~\ref{subsec:entropy-criterion} are
\begin{equation}
  b_{\varepsilon,L}(s) =\sup_q\norm{Q_s\nabla^2V_s(q)Q_s}_{\op}, \qquad \eta_{\varepsilon,L} =\left(\int_0^\infty b_{\varepsilon,L}(s)^2\dd s\right)^{1/2},
  \label{eq:SG-square-budget-definition}
\end{equation}
which we will estimate for the sine--Gordon model.
Let $z_{\mathrm{ce}}(\beta)>0$ be the constant $\delta_\beta$ of \cite[(3.5)]{BB}, so that at unit mass the smallness condition there reads $L\ge1$ and $|z|\le z_{\mathrm{ce}}(\beta)$. The following bound on $\eta_{\varepsilon,L}$ is uniform in the lattice spacing $\varepsilon$ and the side length $L$ for $\beta<4\pi$.

\begin{proposition}[Uniform bound on $\eta$]
\label{prop:SG-square-budget}
Let $0<\beta<4\pi$. There exists $C_\beta'>0$, depending only on $\beta$, such that, for $0<\varepsilon\leq1$, $L\geq1$, and $\abs z\leq z_{\mathrm{ce}}(\beta)$,
\begin{equation}
  \eta_{\varepsilon,L} \leq \frac{C_\beta'}{\sqrt{1-\alpha}} \abs z, \qquad \alpha=\frac{\beta}{4\pi}.
  \label{eq:SG-square-budget}
\end{equation}
Consequently, $\abs z\leq z_*(\beta)$ implies $\eta_{\varepsilon,L}\leq1/2$, where
\begin{equation}
  z_*(\beta) =\min\left\{ z_{\mathrm{ce}}(\beta), \frac{\sqrt{1-\beta/(4\pi)}}{2C_\beta'} \right\}.
  \label{eq:ent-activity-threshold}
\end{equation}
\end{proposition}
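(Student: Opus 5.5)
The plan is to establish the three-regime profile \eqref{eq:SG-square-profile} for $b_{\varepsilon,L}(s)$ and then integrate its square. Two facts drive everything. First, the Hessian norm is invariant under the isometry $q=\varepsilon\varphi$. Second, for $s\ge0$ the operator $Q_s$ is a contraction with $\norm{Q_s}_\op=e^{-s/2}$, because $A\succeq\Id$. This gives
\[
b_{\varepsilon,L}(s)\le e^{-s}\sup_q\norm{\nabla^2V_s(q)}_\op ,
\]
so it suffices to bound $\sup_q\norm{\nabla^2V_s}_\op$ in each regime. The factor $e^{-s}$ already supplies the decay needed at large scales.

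\emph{Small scales, $0\le s\le\varepsilon^2$.} Here I would use a soft argument. By \eqref{eq:polchinski2}, $e^{-V_s}$ is a Gaussian average of $e^{-V_0}$. Differentiating twice gives
\[
\nabla^2V_s=\mathsf E[\nabla^2V_0]-\mathrm{Cov}(\nabla V_0),
\]
where both the expectation and the covariance are taken under the posterior average \eqref{eq:posterior-average}.

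The Hessian $\nabla^2V_0$ is diagonal, with entries of size $2\beta|z|\varepsilon^{-\alpha}$ in the Euclidean coordinates. Hence $\norm{\mathsf E[\nabla^2V_0]}_\op\le2\beta|z|\varepsilon^{-\alpha}$.

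The covariance term needs more care. I would bound it through the quadratic form $\mathrm{Var}(v\cdot\nabla V_0)$. Each component of $\nabla V_0$ has size $|z|\varepsilon^{1-\alpha}$, so this naive bound is of order $z^2\varepsilon^{2-2\alpha}d$, which is not uniform in the volume. The way out is to use the charge-expansion estimates of \cite{BB} at every scale, including $s\le\varepsilon^2$. Concretely, \cite[Sections~3.3--3.4]{BB} supply a Fourier/charge representation of $V_s$ together with a bound on $\norm{\nabla^2V_s}_{\op}$ (in lattice units, after conversion) of order
\[
C_\beta|z|\,\varepsilon^{-\alpha}\min\{1,(\varepsilon^2/s)^{\alpha/2}\}\,e^{-cs}.
\]
This holds for $L\ge1$ and $|z|\le z_{\mathrm{ce}}(\beta)$. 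In the regime $s\le\varepsilon^2$ it reduces to the required $C_\beta|z|\varepsilon^{-\alpha}$.

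\emph{Intermediate and large scales.} For $\varepsilon^2\le s\le1$ I would invoke the same estimates. The first-order part of the renormalized potential is $z\varepsilon^{-\alpha}e^{-\frac\beta2\Var(\zeta)}\cos(\cdot)$, where $\zeta\sim\mathsf N(0,C_s)$. One has
\[
\Var(\zeta_x)=\tfrac1{4\pi}\log(s/\varepsilon^2)+O(1),
\]
so the Wick factor equals $(\varepsilon^2/s)^{\alpha/2}$. The prefactor $\varepsilon^{-\alpha}$ cancels, leaving $s^{-\alpha/2}$. Higher-order charge terms are controlled by the summability condition \cite[(3.5)]{BB}; this is why the assumption $|z|\le z_{\mathrm{ce}}(\beta)$ is needed. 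Two further checks remain. One must keep track of the rescaling between the inner product $\ip\cdot\cdot_\varepsilon$ and Euclidean coordinates, so that the Hessian is a local, essentially diagonal-dominated operator whose operator norm is bounded by a row-sum (Schur) estimate uniformly in $L$. One must also confirm that the $s$-parametrization of \cite{BB} agrees with $C_s=\int_0^se^{-uA}\dd u$ used here. For $s\ge1$, the factor $e^{-s}$ from $Q_s$, combined with the uniform bound $\sup\norm{\nabla^2V_s}_\op\le C_\beta|z|$, gives the third regime.

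\emph{Integration.} Squaring the profile and integrating over the three regimes gives
\[
\eta_{\varepsilon,L}^2\le C_\beta^2z^2\Bigl(\varepsilon^{2-2\alpha}+\int_0^1s^{-\alpha}\dd s+\tfrac12\Bigr)\le C_\beta^2z^2\Bigl(2+\frac1{1-\alpha}\Bigr)\le\frac{3C_\beta^2z^2}{1-\alpha},
\]
where the middle integral was extended from $[\varepsilon^2,1]$ to $[0,1]$ and we used $\varepsilon\le1$ together with $\alpha<1$. This proves \eqref{eq:SG-square-budget} with $C_\beta'=\sqrt3\,C_\beta$. The threshold \eqref{eq:ent-activity-threshold} then follows directly.

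The main obstacle is the middle step: extracting from \cite{BB} a bound on $\sup_q\norm{\nabla^2V_s}_\op$ that is uniform in $L$, in the operator norm, and with the sharp $s^{-\alpha/2}$ decay. Their estimates are stated for norms of Fourier coefficients and charge densities. My first attempt would be to bound the operator norm by the maximal row sum $\sup_x\sum_{x'}|\partial_x\partial_{x'}V_s|$, using the exponential decay of the charge activities in the distance between $x$ and $x'$.
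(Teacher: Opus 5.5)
Your proposal takes essentially the same route as the paper: it imports the three-regime bound \eqref{eq:SG-square-profile} on $b_{\varepsilon,L}$ from the charge expansion of Bauerschmidt and Bodineau, then squares and integrates it (your constant $\sqrt3\,C_\beta$ in place of $C_\beta\sqrt{2+e^{-2}/2}$ is immaterial). The paper cites \cite[(3.53)]{BB}, which already bounds the smoothed Hessian in operator norm, so after the unit conversion $s=\varepsilon^2t$, $\vartheta_t^2=e^{-s}$, $\varepsilon\ell_t=(\varepsilon\vee\sqrt s)\wedge1$ no Schur-type extraction is needed; also, because the length scale saturates at the mass scale, the unsmoothed Hessian does not decay for $s\ge1$ as your formula with $\min\{1,(\varepsilon^2/s)^{\alpha/2}\}e^{-cs}$ suggests, although your third-regime step correctly uses only the uniform bound $C_\beta|z|$ together with $\norm{Q_s}_\op^2=e^{-s}$.
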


\begin{proof}
For $0<\beta<4\pi$ and $|z|\le z_{\mathrm{ce}}(\beta)$, Bauerschmidt and Bodineau bound the smoothed interaction Hessian at every scale \cite[(3.53)]{BB}. They work on the unit lattice, with the field $\varphi$ and the precision matrix $\varepsilon^2A$. Hence their scale $t$ corresponds to $s=\varepsilon^2t$, their semigroup $Q_t$ is our $Q_s$, and, since $q=\varepsilon\varphi$, their Hessian of $V_t$ is $\varepsilon^2\nabla^2V_s$. In these variables, their decay factor is $\vartheta_t^2=e^{-s}$, their length scale \cite[(3.15)]{BB} satisfies $\varepsilon\ell_t=\ell(s)$ with
\begin{equation*}
  \ell(s)=(\varepsilon\vee\sqrt s)\wedge1,
\end{equation*}
and their microscopic coupling is $O_\beta(|z|\varepsilon^{2-\alpha}\ell_t^{-\alpha})$ by \cite[(3.19)--(3.20)]{BB}. Therefore \cite[(3.53)]{BB} gives
\begin{equation*}
  b_{\varepsilon,L}(s)\leq C_\beta\abs z\,\ell(s)^{-\alpha}e^{-s},
\end{equation*}
with $C_\beta$ depending only on $\beta$. Here $\ell(s)$ is the fluctuation length at scale $s$, cut off at the lattice spacing and at the unit mass length. On the intervals $[0,\varepsilon^2]$, $[\varepsilon^2,1]$, and $[1,\infty)$, $\ell(s)$ equals $\varepsilon$, $\sqrt s$, and $1$, respectively; with $e^{-s}\leq1$ on the first two intervals, this gives the bound \eqref{eq:SG-square-profile}.
Since $0<\alpha<1$, squaring and integrating \eqref{eq:SG-square-profile} gives
\begin{align}
  \int_0^\infty b_{\varepsilon,L}(s)^2\dd s \leq C_\beta^2z^2 \left[ \varepsilon^{2-2\alpha} +\frac{1-\varepsilon^{2-2\alpha}}{1-\alpha} +\frac{e^{-2}}2 \right]
  \leq \frac{(C_\beta')^2}{1-\alpha} z^2,
  \label{eq:SG-square-integral}
\end{align}
with $(C_\beta')^2=C_\beta^2(2+e^{-2}/2)$, since $\varepsilon^{2-2\alpha}\leq1\leq(1-\alpha)^{-1}$. This proves \eqref{eq:SG-square-budget}. If $\abs z\leq z_*(\beta)$, then \eqref{eq:SG-square-budget} and \eqref{eq:ent-activity-threshold} give $\eta_{\varepsilon,L}\leq1/2$.
\end{proof}

\paragraph{Entropy decay in physical time.}
We combine the rescaling with Proposition~\ref{prop:SG-square-budget} to prove Theorem~\ref{thm:bulk-entropy}, and deduce the $L^2$ estimate.

\begin{proof}[Proof of Theorem~\ref{thm:bulk-entropy}]
At each fixed lattice, the unit-mass model \eqref{eq:unit-langevin} satisfies the assumptions of Theorem~\ref{thm:phase-space-criterion}, and Proposition~\ref{prop:SG-square-budget} gives $\eta_{\varepsilon,L}\leq1/2$ for $\abs z\leq z_*(\beta)$. Theorem~\ref{thm:phase-space-criterion} with $\eta_*=1/2$ therefore gives \eqref{eq:ent-abstract-decay} with the constants \eqref{eq:bulk-entropy-constants} in unit-mass coordinates.

The map $\mathbf T$ conjugates \eqref{eq:langevin} at time $t$ to \eqref{eq:unit-langevin} at time $\tau=mt$, with $\mathbf T_\#\boldsymbol\mu_{\varepsilon,L}=\boldsymbol\mu$. Applying \eqref{eq:scaling-entropy} at times $0$ and $t$ converts the factor $e^{-c_\gamma\tau}$ in \eqref{eq:ent-abstract-decay} to $e^{-c_\gamma mt}$. This proves \eqref{eq:bulk-entropy-decay} at $m=1$ and the stated physical-time bound for general $m$.
\end{proof}

\begin{corollary}[$L^2$ decay for lattice sine--Gordon]
\label{cor:SG-L2}
Under the assumptions of Theorem~\ref{thm:bulk-entropy}, every $\psi\in L^2(\boldsymbol\mu_{\varepsilon,L})$ with $\E_{\boldsymbol\mu_{\varepsilon,L}}\psi=0$ satisfies
\begin{equation}
  \norm{\mathcal P_t^{\varepsilon,L}\psi}_2 \le\sqrt{C_\gamma}\,e^{-c_\gamma t/2}\norm\psi_2, \qquad t\ge0.
  \label{eq:SG-L2-corollary}
\end{equation}
\end{corollary}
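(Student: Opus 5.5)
The plan is to derive Corollary~\ref{cor:SG-L2} from Theorem~\ref{thm:bulk-entropy} in the same way that Corollary~\ref{cor:abstract-L2} follows from Theorem~\ref{thm:phase-space-criterion}. The only new point is transporting the estimate between physical units and unit-mass units. I will first work in unit-mass coordinates. Under the hypotheses of Theorem~\ref{thm:bulk-entropy}, Proposition~\ref{prop:SG-square-budget} gives $\eta_{\varepsilon,L}\le1/2$. The rescaled potential \eqref{eq:euclidean-energy}--\eqref{eq:euclidean-interaction} lies in the class \eqref{eq:potentialclass}. Corollary~\ref{cor:abstract-L2} then applies directly with $\eta_*=1/2$, giving
\[
\norm{\mathcal P_\tau\psi}_{L^2(\boldsymbol\mu)}\le\sqrt{C_\gamma}\,e^{-c_\gamma\tau/2}\norm\psi_{L^2(\boldsymbol\mu)}
\]
for every centered $\psi$ and every unit-mass time $\tau\ge0$.

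Next I will transfer this back through the linear map $\mathbf T$ of \eqref{eq:phase-scaling}. Since $\mathbf T_\#\boldsymbol\mu_{\varepsilon,L}=\boldsymbol\mu$, composition $\psi\mapsto\psi\circ\mathbf T^{-1}$ is an isometry from $L^2(\boldsymbol\mu_{\varepsilon,L})$ onto $L^2(\boldsymbol\mu)$. It also preserves centering. The map $\mathbf T$ conjugates the dynamics \eqref{eq:langevin} at time $t$ to \eqref{eq:unit-langevin} at time $\tau=mt$, so the semigroups satisfy $(\mathcal P^{\varepsilon,L}_t\psi)\circ\mathbf T^{-1}=\mathcal P_{mt}(\psi\circ\mathbf T^{-1})$. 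Combining the isometry with this intertwining gives \eqref{eq:SG-L2-corollary}, with $e^{-c_\gamma mt/2}$ in physical time. At $m=1$ this is exactly the stated bound.

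Alternatively, I could linearize Theorem~\ref{thm:bulk-entropy} directly, as in the proof of Corollary~\ref{cor:abstract-L2}. In that route I would apply the theorem to $1+\kappa\psi$ with $\psi$ bounded and centered. Using the expansion $\Ent(1+\kappa\psi)=\frac{\kappa^2}2\norm\psi_2^2+o(\kappa^2)$ and the $L^\infty$ contractivity of the semigroup, letting $\kappa\to0$ gives the bound for $(\mathcal P^{\varepsilon,L}_t)^*$. A density argument then extends it to all centered $\psi\in L^2$, and duality transfers it to $\mathcal P^{\varepsilon,L}_t$.

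The only step needing care is the intertwining of the semigroups under $\mathbf T$ together with the time change. This amounts to the scaling computation \eqref{eq:scaling} already carried out in the proof of Theorem~\ref{thm:bulk-entropy}, including the identification of the Brownian noise under the time change. No genuine obstacle is expected.
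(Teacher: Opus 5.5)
Your proposal is correct and follows the paper's own proof: apply Corollary~\ref{cor:abstract-L2} in unit-mass coordinates with $\eta_*=1/2$, which Proposition~\ref{prop:SG-square-budget} provides, then transfer the bound back through $\mathbf T$, which is an $L^2$ isometry and intertwines the two semigroups. Your alternative route, linearizing Theorem~\ref{thm:bulk-entropy} directly, is also valid and simply repeats the argument of Corollary~\ref{cor:abstract-L2} in physical coordinates.
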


\begin{proof}
Apply Corollary~\ref{cor:abstract-L2} in unit-mass coordinates and use that $\mathbf T$ preserves the $L^2$ norm.
\end{proof}

\subsection{Optimality of the mass scaling}
\label{subsec:mass-calibration}
We keep $m$ explicit and use the physical parameters $\varepsilon$, $L$, $z$, and $\gamma$ of \eqref{eq:langevin}. When $\gamma$ is proportional to $m$, the constant mode of the free model evolves on the time scale $m^{-1}$. Translations of its Gaussian equilibrium law show that the order $m$ of the physical-time rate is optimal for a decay estimate uniform over the family of Theorem~\ref{thm:bulk-entropy}, which includes $z=0$. The same mode shows that, for large friction, the rate of Theorem~\ref{thm:bulk-entropy} has the optimal order in the mass and the friction. We then compare the kinetic rate with the static Poincar\'e and logarithmic Sobolev constants of the position law and with the rate of the overdamped dynamics.

Let $u$ be the constant field with $\norm u_\varepsilon=1$, so that $A_\varepsilon u=m^2u$. For $z=0$, the coordinates $(m\varphi^u,\varpi^u)$ of the constant mode, with $\varphi^{u}=\ip\varphi{u}_\varepsilon$ and $\varpi^{u}=\ip\varpi{u}_\varepsilon$, form an Ornstein--Uhlenbeck process with invariant law $\mathsf N(0,\Id_2)$ and drift matrix
\begin{equation}
  \mathbf M_{m,\gamma}=\begin{pmatrix}0&m\\-m&-\gamma\end{pmatrix},\qquad \lambda_\pm(\mathbf M_{m,\gamma})=\frac{-\gamma\pm\sqrt{\gamma^2-4m^2}}2.
  \label{eq:free-drift-matrix}
\end{equation}
Translating the equilibrium law by $h\in\R^2$ in these coordinates, and leaving every other mode at equilibrium, gives at time $t$ the equilibrium law translated by $e^{t\mathbf M_{m,\gamma}}h$, with entropy $|e^{t\mathbf M_{m,\gamma}}h|^2/2$. An estimate $\Ent(\boldsymbol\varrho_t\mid\boldsymbol\mu_{\varepsilon,L})\le Ce^{-\lambda t}\Ent(\boldsymbol\varrho_0\mid\boldsymbol\mu_{\varepsilon,L})$ valid for all these laws therefore requires $\norm{e^{t\mathbf M_{m,\gamma}}}_\op^2\le Ce^{-\lambda t}$ for all $t\geq0$, and hence
\begin{equation}
\lambda\le-2\operatorname{Re}\lambda_+(\mathbf M_{m,\gamma})=\gamma-\operatorname{Re}\sqrt{\gamma^2-4m^2}.
  \label{eq:free-rate-obstruction}
\end{equation}
When $\gamma$ is proportional to $m$, the right-hand side is proportional to $m$.

For large friction, \eqref{eq:bulk-entropy-constants} gives $c_\gamma\sim1/(8\gamma)$. Since $c_\gamma$ is evaluated at the rescaled friction, the physical-time rate $c_\gamma m$ is asymptotically $m^2/(8\gamma)$ in the physical friction, which is of the same order $m^2/\gamma$ as the right-hand side of \eqref{eq:free-rate-obstruction}.

We compare the kinetic rate with the static functional inequalities of the position law. Let $\rho_{\mathrm{LSI}}$ and $\rho_{\mathrm{PI}}$ be the optimal constants of the position law $\nu_{\varepsilon,L}$ in the conventions
\begin{align}
  \Ent_{\nu_{\varepsilon,L}}(\psi^2) &\le \frac{2}{\rho_{\mathrm{LSI}}}\mathcal D_{\varepsilon,L}(\psi),\notag\\
  \operatorname{Var}_{\nu_{\varepsilon,L}}(\psi) &\le \frac{1}{\rho_{\mathrm{PI}}}\mathcal D_{\varepsilon,L}(\psi),
  \label{eq:static-functional-conventions}
\end{align}
where
\begin{equation}
  \mathcal D_{\varepsilon,L}(\psi) :=\int\norm{\nabla_\varepsilon \psi}_\varepsilon^2\dd\nu_{\varepsilon,L} =\varepsilon^{-2}\sum_x\int|\partial_{\varphi_x}\psi|^2 \dd\nu_{\varepsilon,L}.
  \label{eq:static-dirichlet-form}
\end{equation}
For unnormalized functions, $\Ent_\nu(\psi^2)=\int\psi^2\log(\psi^2/\E_\nu\psi^2)\dd\nu$. It is known that $\rho_{\mathrm{LSI}}\le\rho_{\mathrm{PI}}$ by linearization. For $z=0$, Gaussian functional inequalities and the constant lattice mode give $\rho_{\mathrm{LSI}}=\rho_{\mathrm{PI}}=m^2$.

For small nonzero coupling, both constants remain of order $m^2$: the lower bound below is the logarithmic Sobolev inequality of Bauerschmidt and Bodineau \cite{BB}, and the upper bound follows from a variance lower bound for the spatial average of the field. Hence, for friction proportional to $m$, the physical rate of Theorem~\ref{thm:bulk-entropy} is of the order of the square root of either static constant.

\begin{proposition}[Two-sided bounds on the static constants]
\label{prop:static-mass-order}
For each $0<\beta<4\pi$, there exists $z_{\mathrm{stat}}(\beta)>0$ such that, if $0<m\varepsilon\le1$, $mL\ge1$, and
\begin{equation}
  |z|\,m^{-(2-\alpha)}\le \min\{z_*(\beta),z_{\mathrm{stat}}(\beta)\},
  \label{eq:static-comparison-regime}
\end{equation}
then
\begin{equation}
  \frac{m^2}{2}\le\rho_{\mathrm{LSI}}\le\rho_{\mathrm{PI}} \le v_*^{-1}m^2,\qquad v_*=(1-e^{-1/4})(2-e^{1/2})>0.
  \label{eq:static-mass-order}
\end{equation}
All constants are uniform in the lattice spacing and the volume.
\end{proposition}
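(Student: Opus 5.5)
Three inequalities have to be proved. The middle one, $\rho_{\mathrm{LSI}}\le\rho_{\mathrm{PI}}$, is the standard linearization: apply the logarithmic Sobolev inequality to $\psi=1+\kappa\phi$ and let $\kappa\to0$. The lower bound will come from the Bauerschmidt--Bodineau logarithmic Sobolev inequality \cite[Theorem~1.6]{BB}, transported to unit mass by the scaling \eqref{eq:scaling}. Under $q=m\varepsilon\varphi$ and $x\mapsto mx$, the Dirichlet form \eqref{eq:static-dirichlet-form} picks up a factor $m^2$. So it suffices to show that the unit-mass measure $\nu$ satisfies the logarithmic Sobolev inequality with constant at least $1/2$, uniformly in $\varepsilon\le1$, $L\ge1$, and small $|z|$.

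For this I will use the multiscale Bakry--\'Emery criterion \cite[Theorem~1.2]{BB}, written in the coarse variables. It gives a constant determined by the lower bounds $\dot\lambda_s$ with $Q_s\nabla^2V_s Q_s\succeq\dot\lambda_s\Id$. We may take $\dot\lambda_s=-b(s)$, and the bound \eqref{eq:SG-square-profile} controls $b$. For the unit-mass Gaussian, whose smallest eigenvalue is $1$ on the constant mode, the criterion gives a constant of the form
\[
\Bigl(\int_0^\infty e^{-s+2\int_0^s b(u)\dd u}\dd s\Bigr)^{-1}.
\]
Since $\alpha<1$, $b$ is integrable: $\int_0^\infty b\le C_\beta|z|(\varepsilon^{2-\alpha}+\tfrac{2}{2-\alpha}+e^{-1})$. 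Choosing $z_{\mathrm{stat}}(\beta)$ small makes $e^{2\int b}\le 2$, which gives $\rho_{\mathrm{LSI}}\ge1/2$ at unit mass.

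For the upper bound on $\rho_{\mathrm{PI}}$, I will test the Poincar\'e inequality with the spatial average. At unit mass take $\psi(q)=\ip qu$, where $u$ is the normalized constant field. Then the Dirichlet form equals $1$, so $\rho_{\mathrm{PI}}\le 1/\Var_\nu(\ip qu)$, and it remains to show $\Var_\nu(\ip qu)\ge v_*$. I plan to bound this variance from below through the Polchinski decomposition. By the law of total variance with respect to the observation at scale $s$, $\Var_\nu\ge\E\Var_{\pi_s(\cdot\mid y)}(\ip qu)$. Take $s=1/4$. The Gaussian part of the posterior variance on the constant mode is $1-e^{-s}=1-e^{-1/4}$, the first factor of $v_*$. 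The factor $2-e^{1/2}$ should come from a Brascamp--Lieb-type lower bound on the posterior variance (the Cram\'er--Rao form $\Var\ge \ip u{(\E\nabla^2H)^{-1}u}$, or a comparison along the flow $s'\le s$). In that bound the posterior curvature on the constant mode is controlled through $\int_0^s b$, uniformly in $\varepsilon$ and $L$ once $|z|$ is small; the factor $e^{1/2}=e^{2s}$ is the natural Gronwall factor from the attenuation $Q_s^2=e^{-s}$ over $[0,s]$. Finally, undoing the unit-mass scaling gives $\rho_{\mathrm{PI}}\le v_*^{-1}m^2$.

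The main obstacle is this variance lower bound. The microscopic Hessian is unbounded, so a naive Cram\'er--Rao bound on the constant mode fails; its expectation involves $\E\cos$, which is not small uniformly. I will first try to write $\frac{\dd}{\dd s}$ of the posterior variance of $\ip qu$ as the Gaussian rate $e^{-s}$ minus a correction. That correction is expressed through the force defect of Proposition~\ref{prop:ent-force-energy}, specialized to $f$ linear in $q$, and bounded by $b(s)$. Integrating over $[0,1/4]$ should then produce a constant of the form $(1-e^{-1/4})(2-e^{1/2})$ after choosing $z_{\mathrm{stat}}$ small.
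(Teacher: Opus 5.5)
Your linearization step is fine, and your lower bound is a legitimate alternative to the paper's. The paper simply quotes \cite[(1.20)]{BB}, $\rho_{\mathrm{LSI}}\ge m^2-C^{\mathrm{stat}}_\beta m^{\alpha}|z|$, and sets $z_{\mathrm{stat}}=1/(2C^{\mathrm{stat}}_\beta)$. You instead rerun the multiscale Bakry--\'Emery criterion \cite[Theorem~1.2]{BB} with the two-sided bound $b$ and the integrable profile \eqref{eq:SG-square-profile}. That works and is self-contained once the profile is granted.

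The genuine gap is the upper bound on $\rho_{\mathrm{PI}}$. You flag the lower bound on the posterior variance of $\hat u\cdot q$ as the main obstacle, and it is never proved. Neither route you suggest delivers it as stated:
\begin{itemize}
\item Cram\'er--Rao would need an upper bound on $\E_{\pi_s(\cdot\mid y)}\hat u^\top\nabla^2V_0\hat u=\hat u^\top\nabla^2V_s\hat u+\Var_{\pi_s(\cdot\mid y)}(\hat u\cdot\nabla V_0)$, and the variance term is not controlled by $b$.
\item The energy estimate \eqref{eq:ent-force-energy} of Proposition~\ref{prop:ent-force-energy} does not specialize to a linear $f$. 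It is an $f_s$-weighted estimate for probability densities, with right-hand side $2\eta^2\Ent_{\boldsymbol\mu}(f)$.
\end{itemize}
Your reading of $e^{1/2}$ as a Gronwall factor is a guess.

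The missing idea is the exact posterior covariance identity. Differentiating $-V_s(x)=\log\int e^{-V_0(q)}\,\mathsf N(x,C_s)(\dd q)$ twice in the mean $x$ gives $\operatorname{Cov}_{\pi_s(\cdot\mid y)}(q)=C_s-C_s\nabla^2V_s(Q_sy)C_s$. Since $C_s\hat u=(1-e^{-s})\hat u$ and $\hat u^\top\nabla^2V_s\hat u=e^s\,\hat u^\top Q_s\nabla^2V_sQ_s\hat u\le e^sb(s)$, this yields, pointwise in $y$,
\[ \hat u^\top\operatorname{Cov}_{\pi_s(\cdot\mid y)}(q)\,\hat u\ge(1-e^{-s})\bigl[1-(e^s-1)b(s)\bigr]. \]
So the posterior variance is controlled by $b$ at a single scale, with no integration in scale. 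The paper then picks $s\in[1/4,1/2]$ with $b(s)\le1$, which exists because $\int_{1/4}^{1/2}b^2\le\eta^2\le1/4$. This is where $2-e^{1/2}=1-(e^{1/2}-1)$ comes from. Your fixed $s=1/4$ would instead need $b(1/4)\le1$. The bound $\eta\le1/2$ alone does not give this, although the profile does after shrinking $z_{\mathrm{stat}}$.

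Your scale-differential route can also be completed, but it needs an unweighted estimate in $L^2(\nu_s)$ in place of \eqref{eq:ent-force-energy}. For $F=\hat u\cdot q$, one has $\frac{\dd}{\dd s}\E_{\nu_s}\Var_{\pi_s(\cdot\mid y)}(F)=\E_{\nu_s}|e^{-s/2}\hat u-R_s|^2$. The evolution equation \eqref{eq:ent-force-evolution} still holds, with source $Q_s\nabla^2V_sQ_s(e^{-s/2}\hat u-R_s)$, and it gives $\|R_s\|_{L^2(\nu_s)}\le e^{-s/2}\bigl(e^{\int_0^sb}-1\bigr)$. Since $\int_0^{1/4}b\le\eta/2\le1/4$, this yields $\Var_\nu(F)\ge(1-e^{-1/4})(2-e^{1/4})^2\ge v_*$.
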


\begin{proof}
The Hamiltonian \eqref{eq:energy}, the Dirichlet form \eqref{eq:static-dirichlet-form}, and the normalization \eqref{eq:static-functional-conventions} are those of \cite[(1.15)--(1.17) and (1.19)]{BB}. Since $z_*(\beta)\le z_{\mathrm{ce}}(\beta)$, the regime \eqref{eq:static-comparison-regime} satisfies the smallness condition of \cite[Theorem~1.6]{BB}, which is also \cite[(3.5)]{BB}, and \cite[(1.20)]{BB} gives
\begin{equation*}
  \rho_{\mathrm{LSI}}\geq m^2-C_\beta^{\mathrm{stat}}m^{\alpha}|z| =m^2\bigl(1-C_\beta^{\mathrm{stat}}|z|m^{-(2-\alpha)}\bigr),
\end{equation*}
with $C_\beta^{\mathrm{stat}}$ depending only on $\beta$. Thus $z_{\mathrm{stat}}(\beta)=1/(2C_\beta^{\mathrm{stat}})$ gives the lower bound in \eqref{eq:static-mass-order}.

For the upper bound, we bound from below the variance of the constant-mode coordinate $\varphi^u=\ip\varphi u_\varepsilon$, which satisfies $\norm{\nabla_\varepsilon\varphi^u}_\varepsilon=\norm u_\varepsilon=1$. We use the unit-mass variables $q=m\varepsilon\varphi$ of \eqref{eq:phase-scaling}. The rescaled model has parameters $(m\varepsilon,mL,z\,m^{-(2-\alpha)})$ and Gaussian precision $A=m^{-2}A_\varepsilon=-\Delta_{m\varepsilon}+\Id$. In the rest of this argument, $\nu$, $V_s$, $Q_s$, and $C_s$ refer to this model, and $b_{m\varepsilon,mL}$ and $\eta_{m\varepsilon,mL}$ are evaluated at the rescaled coupling. Let $\hat u$ be the constant Euclidean unit vector, so that $A\hat u=\hat u$ and $\varphi^u=m^{-1}\hat u\cdot q$. Under \eqref{eq:static-comparison-regime}, Proposition~\ref{prop:SG-square-budget} gives $\eta_{m\varepsilon,mL}\le1/2$, and we will use this bound to show $\operatorname{Var}_\nu(\hat u\cdot q)\ge v_*$.

Specifically, we bound this variance through the posterior covariance at a suitable scale. For $s>0$, let $\pi_s(\dd q\mid y)$ be the posterior of \S~\ref{subsec:polchinski-flow}. Averaging the posteriors against $\nu_s$ recovers $\nu$, as noted there. By \eqref{eq:polchinski2} at $Q_sy$, the posterior is the normalized integrand of
\begin{equation*}
e^{-V_s(Q_sy)}=\int e^{-V_0(q)}\,\mathsf N(Q_sy,C_s)(\dd q).
\end{equation*}
Differentiating the logarithm of this Gaussian integral twice in its mean $Q_sy$ gives
\begin{equation*}
-\nabla^2V_s(Q_sy)=C_s^{-1}\operatorname{Cov}_{\pi_s(\cdot\mid y)}(q)\,C_s^{-1}-C_s^{-1},
\end{equation*}
that is (see also \cite[Lemma~3.12, (3.61)]{BBD2024}),
\begin{equation}
  \operatorname{Cov}_{\pi_s(\cdot\mid y)}(q) =C_s-C_s\nabla^2V_s(Q_sy)C_s.
  \label{eq:coarse-conditional-covariance}
\end{equation}
Since $A\hat u=\hat u$, we have $C_s\hat u=(1-e^{-s})\hat u$ and $Q_s\hat u=e^{-s/2}\hat u$. The definition of $b_{m\varepsilon,mL}(s)$ therefore gives, for every $y$,
\begin{equation*}
\hat u^\top\nabla^2V_s(Q_sy)\hat u
  =e^s\hat u^\top Q_s\nabla^2V_s(Q_sy)Q_s\hat u
  \le e^sb_{m\varepsilon,mL}(s).
\end{equation*}
The law of total variance and \eqref{eq:coarse-conditional-covariance} give
\begin{equation}
  \begin{aligned}
  \operatorname{Var}_\nu(\hat u\cdot q)
  &\geq\E_{\nu_s}\bigl[\hat u^\top\operatorname{Cov}_{\pi_s(\cdot\mid y)}(q)\,\hat u\bigr]\\
  &\geq(1-e^{-s})-(1-e^{-s})^2e^sb_{m\varepsilon,mL}(s)\\
  &=(1-e^{-s})\bigl[1-(e^s-1)b_{m\varepsilon,mL}(s)\bigr].
  \end{aligned}
  \label{eq:coarse-static-variance-scale}
\end{equation}
Since $\int_{1/4}^{1/2}b_{m\varepsilon,mL}(s)^2\dd s\le\eta_{m\varepsilon,mL}^2\le1/4$, there exists $s\in[1/4,1/2]$ with $b_{m\varepsilon,mL}(s)\le1$. For this choice the right-hand side of \eqref{eq:coarse-static-variance-scale} is at least $v_*$, so $\operatorname{Var}_\nu(\hat u\cdot q)\ge v_*$.

Thanks to the variance lower bound shown above, noting $\varphi^u=m^{-1}\hat u\cdot q$, we obtain $\operatorname{Var}_{\nu_{\varepsilon,L}}(\varphi^u)\ge v_*m^{-2}$. As $\mathcal D_{\varepsilon,L}(\varphi^u)=1$, the Poincar\'e inequality in \eqref{eq:static-functional-conventions} gives
\begin{equation*}
  \rho_{\mathrm{PI}}\leq\frac{\mathcal D_{\varepsilon,L}(\varphi^u)}{\operatorname{Var}_{\nu_{\varepsilon,L}}(\varphi^u)}\leq v_*^{-1}m^2.\qedhere
\end{equation*}
\end{proof}

Finally, we compare with the overdamped dynamics, whose rate depends on a choice of time normalization; we fix it by a mobility $\omega>0$. The overdamped equation with mobility $\omega$ and the same position law is
\begin{equation*}
  \dd\varphi_t=-\omega\nabla_\varepsilon H_{\varepsilon,L}(\varphi_t)\dd t +\sqrt{2\omega}\dd W_t^\varepsilon.
\end{equation*}
Its logarithmic Sobolev estimate gives entropy decay with exponent $2\omega\rho_{\mathrm{LSI}}$. In the free model this exponent is sharp: translations of equilibrium in the constant mode have entropy proportional to $e^{-2\omega m^2t}$. With unit mobility $\omega=1$, the overdamped exponent is of order $m^2$, and the kinetic rate $m$ is of the order of its square root. The mobility $\omega=1/\gamma$ is that of the large-friction limit of \eqref{eq:langevin}; with this choice, the overdamped exponent $2m^2/\gamma$ of the constant mode matches the large-friction asymptotics of the right-hand side of \eqref{eq:free-rate-obstruction} and is of order $m$ when the friction is proportional to $m$.

\appendix

\section{Regularity, scale limits, and approximation}
\label{app:regularity}

In this appendix, we justify the differentiations and scale limits in Sections~\ref{sec:scale-calculus}--\ref{sec:modified-entropy}, and extend the entropy estimate to finite-entropy initial laws by approximation. All regularity bounds are at a fixed cutoff and may depend on the model parameters.

\subsection{Regularity and the scale limit}
The kinetic stochastic flow and the Gaussian convolutions defining the coarse-grained densities provide the bounds needed for differentiation and integration by parts, and for the limit $s\to\infty$ in the entropy decomposition. Throughout, the matrix $A$, potential $V_0$, and parameter $\gamma>0$ are as in \S~\ref{subsec:kinetic-dynamics}.

\begin{lemma}[Regularity along the Polchinski flow]
\label{lem:ent-fixed-cutoff-regularity}
The renormalized potential satisfies $$|\partial_s[V_s(Q_sy)]|\leq C(1+|y|)$$ for $s\geq0$ and $y\in\R^d$. Let $f$ be a regular density with $0<c_-\leq f\leq c_+$. Then $c_-\leq f_s\leq c_+$, and for $k=0,1,\ldots$, $s\geq0$, and $(y,w)\in\R^{2d}$,
\begin{equation}
  |\nabla_{(y,w)}^kf_s(y,w)| \leq C_ke^{-ks/2},\qquad
  |\partial_s\nabla_{(y,w)}^kf_s(y,w)| \leq C_ke^{-ks/2}(1+|(y,w)|),
  \label{eq:ent-core-scale-decay}
\end{equation}
where $C_k$ depends on $f$ only through $c_-$ and the norms $\|\nabla_{(q,p)}^jf\|_\infty$. The force defect $R_s$ and the normalized force defect $r_s$ satisfy the same bounds with an additional factor $e^{-s/2}$. As a result, $f_s\log f_s$, $|\nabla_wf_s|^2/f_s$, and $|R_s|^2/f_s$ satisfy the growth condition of Lemma~\ref{lem:ent-scale-generator}.
\end{lemma}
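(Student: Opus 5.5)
The argument works directly from the Gaussian representation \eqref{eq:ent-coarse-operator}. After the change of variables $(q,p)=\mathbf Q_s(y,w)^\top+\boldsymbol\zeta$, the dependence on $(y,w)$ enters only through the shift $\mathbf Q_s(y,w)^\top$. Consequently every $(y,w)$-derivative produces a factor $Q_s$, whose operator norm is at most $e^{-s/2}$ because $A\succeq\Id$.

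\emph{Renormalized potential.} Write $e^{-V_s(Q_sy)}=\E[e^{-V_0(Q_sy+\zeta_q)}]$ with $\zeta_q\sim\mathsf N(0,C_s)$. Since $V_0\in C_b^\infty$, this quantity lies between $e^{-\sup V_0}$ and $e^{-\inf V_0}$. By \eqref{eq:ent-OU-equation}, its $s$-derivative equals $\mathcal G$ applied to it. The second-order part of $\mathcal G$ is bounded, because $\nabla_y^k e^{-V_0(Q_sy+\zeta)}$ is bounded after each $Q_s$ factor. The drift part is bounded by $\frac12|Ay|\,|Q_s\nabla V_0|\le C|y|$. Dividing by the positive lower bound gives $|\partial_s[V_s(Q_sy)]|\le C(1+|y|)$.

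\emph{Bounds on $f_s$.} The bounds $c_-\le f_s\le c_+$ hold because $\mathsf P_s$ is a conditional expectation. For the derivative bounds, write $f_s=n_s/d_s$, where $n_s=\E[e^{-V_0(Q_sy+\zeta_q)}f(Q_sy+\zeta_q,Q_sw+\zeta_p)]$ and $d_s=e^{-V_s(Q_sy)}$. Differentiating under the expectation, $\nabla^k_{(y,w)}n_s$ is a finite sum of terms, each containing $k$ factors of $Q_s$ and bounded derivatives of $e^{-V_0}f$. Hence $|\nabla^k n_s|\le Ce^{-ks/2}$, and the same bound holds for $d_s$. Since $d_s$ is bounded below, the quotient rule (Fa\`a di Bruno) yields $|\nabla^kf_s|\le C_ke^{-ks/2}$. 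For the time derivative, use $\partial_s\nabla^k f_s=\nabla^k\mathsf L_sf_s$ from Lemma~\ref{lem:ent-scale-generator}. The commutators $[\nabla_w,\mathsf L_s]=-\frac12A\nabla_w$ and $[\nabla_y,\mathsf L_s]$ (computed in the proof of Proposition~\ref{prop:ent-force-energy}) involve only derivatives of order at least one. The drift of $\mathsf L_s$ grows linearly: its coefficient $Q_s\nabla V_s(Q_sy)$ is bounded because $Q_s\nabla V_s(Q_sy)=\nabla_y[V_s(Q_sy)]$, which is a ratio of bounded Gaussian averages. All terms of $\nabla^k\mathsf L_sf_s$ carry at least $k$ derivatives of $f_s$, so $|\partial_s\nabla^kf_s|\le C_ke^{-ks/2}(1+|(y,w)|)$. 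The same applies to derivatives of $Q_s\nabla V_s(Q_sy)$, whose derivatives carry extra $Q_s$ factors and decay.

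\emph{Force defect.} This is the main step; the extra factor is the only subtle point. By \eqref{eq:ent-gradient-defect}, $R_s=Q_s\mathsf P_s(\nabla_qf)-\nabla_yf_s$. The first term equals $Q_s$ times the $\mathsf P_s$-average of the regular function $\nabla_qf$. Applying the previous step componentwise (with lower bound replaced by boundedness) gives $|\nabla^k\mathsf P_s(\nabla_qf)|\le C_ke^{-ks/2}$, and the prefactor $Q_s$ supplies one more $e^{-s/2}$. The second term is bounded by $C_{k+1}e^{-(k+1)s/2}$ by the previous step with $k+1$. Therefore $|\nabla^kR_s|\le C_ke^{-(k+1)s/2}$, and the $\partial_s$ bounds follow identically. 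For $r_s=R_s/f_s$, the quotient rule with $f_s\ge c_-$ and the derivative bounds on $f_s$ transfer the estimate.

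\emph{Growth condition.} Finally, $f_s\log f_s$, $|\nabla_wf_s|^2/f_s$, and $|R_s|^2/f_s$ are smooth functions of $f_s\in[c_-,c_+]$, $\nabla f_s$, and $R_s$. Their derivatives up to order two are therefore bounded, and their $s$-derivatives grow at most linearly, uniformly in $s$. This is polynomial growth locally uniformly in $s$, as Lemma~\ref{lem:ent-scale-generator} requires.
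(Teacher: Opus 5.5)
Your proposal is correct and follows essentially the same route as the paper: the Gaussian representation \eqref{eq:ent-coarse-operator} with one factor $Q_s$ per $(y,w)$-derivative, the quotient rule against the lower bound $e^{-\sup V_0}$ (and $f_s\ge c_-$ for $r_s$), and $\partial_sf_s=\mathsf L_sf_s$ with its linearly growing drift for the scale derivatives. The differences are cosmetic. You obtain $\partial_s[V_s(Q_sy)]$ from the Ornstein--Uhlenbeck equation \eqref{eq:ent-OU-equation} instead of the Polchinski equation, and you read off the extra $e^{-s/2}$ for $R_s$ from $R_s=Q_s\mathsf P_s(\nabla_qf)-\nabla_yf_s$ instead of from its defining formula. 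One point of bookkeeping: in $\nabla^k\mathsf L_sf_s$ the right count is the total number of $Q_s$ factors, not the number of derivatives on $f_s$. Derivatives that fall on $Q_s\nabla V_s(Q_sy)$ lower the derivative count on $f_s$ but supply their own decay, which your closing remark there already covers.
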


\begin{proof}
Conditional expectation preserves $c_-\leq f_s\leq c_+$. In \eqref{eq:ent-coarse-operator}, the denominator is at least $e^{-\sup V_0}$, and each derivative in $(y,w)$ contributes a factor $Q_s$ with $\norm{Q_s}_\op\leq e^{-s/2}$. The differentiated Gaussian averages are bounded, proving the first bound in \eqref{eq:ent-core-scale-decay}.

\paragraph{Differentiation in scale.}
The Polchinski equation \eqref{eq:polchinski-equation} and the chain rule, with $\frac\dd{\dd s}Q_s=-\frac12AQ_s$, give
\begin{equation}
  \partial_s[V_s(Q_sy)] =\frac12Q_s^2:\nabla^2V_s(Q_sy) -\frac12\abs{Q_s\nabla V_s(Q_sy)}^2 -\frac12Ay\cdot Q_s\nabla V_s(Q_sy).
  \label{eq:ent-W-evolution}
\end{equation}
Together with \eqref{eq:ent-scale-evolution}, this provides the required scale derivatives. In particular, $\nabla V_s(Q_sy)$ is the posterior average \eqref{eq:posterior-average} of $\nabla V_0$, so $\norm{\nabla V_s}_\infty\leq\norm{\nabla V_0}_\infty$ and $|Q_s\nabla V_s(Q_sy)|\leq e^{-s/2}\norm{\nabla V_0}_\infty$.

The second-order coefficients of $\mathsf L_s$ are constant, and its drift grows at most linearly in $(y,w)$. Differentiating the Gaussian formula for $e^{-V_s}$ and using $e^{-V_s}\geq e^{-\sup V_0}$ bounds every derivative of $V_s$ uniformly in $s$. Thus $|\nabla_y^k[Q_s\nabla V_s(Q_sy)]|\leq C_ke^{-(k+1)s/2}$. Applying $\nabla_{(y,w)}^k$ to $\partial_sf_s=\mathsf L_sf_s$ gives the second bound in \eqref{eq:ent-core-scale-decay}; the linear factor comes from the drift. Equation \eqref{eq:ent-W-evolution} gives the stated bound on $\partial_s[V_s(Q_sy)]$.

\paragraph{Force defect.}
The force defect $R_s$ is a difference of Gaussian averages with an additional outer factor $Q_s$. The same quotient calculation, using $f_s\geq c_-$, gives
\begin{equation*}
  |\nabla_{(y,w)}^kR_s|+|\nabla_{(y,w)}^kr_s| \leq C_ke^{-(k+1)s/2},\qquad k\geq0,
\end{equation*}
and \eqref{eq:ent-force-evolution} gives the corresponding scale derivatives.

\paragraph{Differentiation under the integral.}
The scale derivative of $Z_0^{-1}e^{-V_s(Q_sy)}$ is bounded by $C(1+|y|)$ times this density. Together with $f_s\geq c_-$ and the force-defect bounds, the derivative estimates show that $f_s\log f_s$, $|\nabla_wf_s|^2/f_s$, and $|R_s|^2/f_s$, with their scale derivatives and their $(y,w)$-derivatives of order at most two, grow at most linearly in $(y,w)$, uniformly in $s\geq0$, which gives the growth condition of Lemma~\ref{lem:ent-scale-generator}. The same estimates give polynomial majorants for the other integrands in Sections~\ref{sec:scale-calculus}--\ref{sec:modified-entropy} and their scale derivatives, locally uniformly in $s$. Gaussian moments justify differentiation under the integral, and compact spatial cutoffs and dominated convergence justify integration by parts.
\end{proof}

\begin{lemma}[Regularity along the kinetic evolution]
\label{lem:ent-kinetic-regularity}
Let $g_0$ be a regular density with $0<c_-\leq g_0\leq c_+$, and let $T<\infty$. Then $g_t=\mathcal P_t^*g_0$ is a regular density for $0\leq t\leq T$, with
\begin{equation}
  c_-\leq g_t\leq c_+, \qquad \sup_{0\leq t\leq T}\norm{\nabla_{(q,p)}^kg_t}_\infty<\infty \quad(k\geq1),
  \label{eq:ent-core-derivatives}
\end{equation}
and hence Lemma~\ref{lem:ent-fixed-cutoff-regularity} applies to $f=g_t$ with constants uniform in $t\in[0,T]$. Moreover, for $k=0,1,\ldots$,
\begin{equation}
  \begin{gathered}
  \sup_{0\leq t\leq T}|\partial_t\nabla_{(q,p)}^kg_t(q,p)| \leq C_{T,k}(1+|q|+|p|),\\
  \sup_{0\leq t\leq T}|\partial_t\nabla_{(y,w)}^kg_{t,s}(y,w)| \leq C_{T,k}e^{-ks/2}(1+|(y,w)|),
  \end{gathered}
  \label{eq:ent-kinetic-time-derivatives}
\end{equation}
and $R_{t,s}$ and $r_{t,s}$ satisfy the second bound with an additional factor $e^{-s/2}$.
\end{lemma}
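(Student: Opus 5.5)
The plan is to realize $g_t$ through the stochastic flow of \eqref{eq:langevindy}. Since the generator of the density equation $\partial_tg_t=\mathcal L^*g_t=-\mathcal Ag_t+\gamma\mathcal Sg_t$ is the generator of the time-reversed dynamics (momentum flipped), we have $g_t(q,p)=\E\,g_0(\tilde X_t^{(q,p)})$. Here $\tilde X$ solves the SDE with generator $-\mathcal A+\gamma\mathcal S$, that is, $\dd q=-p\dd t$, $\dd p=\nabla U(q)\dd t-\gamma p\dd t+\sqrt{2\gamma}\dd W$. Its drift is globally Lipschitz with all higher derivatives bounded, because $\nabla U=Aq+\nabla V_0$ and $V_0\in C_b^\infty$.

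The bounds $c_-\le g_t\le c_+$ follow immediately from this representation, since $g_t$ is an average of values of $g_0$. For the derivatives, differentiate the flow in the initial condition. The first variation $J_t=\nabla\tilde X_t$ solves a linear ODE whose coefficient matrix is bounded (it contains $A$, $\nabla^2V_0$ and $\gamma$), so $\|J_t\|\le e^{Ct}$ almost surely. The higher variations solve linear ODEs forced by polynomials in the lower variations and bounded derivatives of $\nabla U$, so by induction each is bounded by $C_{T,k}$ on $[0,T]$. The chain rule (Faà di Bruno) then gives $\sup_{t\le T}\|\nabla^kg_t\|_\infty<\infty$, and smoothness of the flow gives $g_t\in C_b^\infty$. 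Mass is conserved because $\boldsymbol\mu$ is invariant, so $g_t$ is a regular density. Consequently Lemma~\ref{lem:ent-fixed-cutoff-regularity} applies to $f=g_t$, and its constants, which depend only on $c_-$ and $\|\nabla^jg_t\|_\infty$, are uniform in $t\in[0,T]$.

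For the time derivatives, use the equation directly: $\partial_t\nabla^kg_t=\nabla^k(-\mathcal Ag_t+\gamma\mathcal Sg_t)$. Every coefficient of $\mathcal A$ and $\mathcal S$ ($p$, $\nabla U$, $\gamma p$) grows at most linearly, and its derivatives are bounded. Each term is therefore a bounded derivative of $g_t$ of order at most $k+2$ times a factor of size at most $C(1+|q|+|p|)$, which gives the first bound in \eqref{eq:ent-kinetic-time-derivatives}. The only delicate point is justifying that $t\mapsto g_t$ is $C^1$ with this formula, i.e.\ that $g_t$ is a classical solution. I would obtain this from Itô's formula applied to $g_0(\tilde X_t)$ together with the derivative bounds above, using the Kolmogorov backward equation for smooth data with bounded derivatives.

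For the coarse-grained densities, $\partial_tg_{t,s}=\mathsf P_s(\partial_tg_t)$ by \eqref{eq:ent-coarse-operator}. Differentiation under the Gaussian average is justified because $\partial_tg_t$ has linear growth and Gaussian moments are finite. Each $(y,w)$-derivative passes through the formula \eqref{eq:ent-coarse-operator} and produces a factor $Q_s$ with $\|Q_s\|_\op\le e^{-s/2}$; it also produces quotient terms, bounded as in Lemma~\ref{lem:ent-fixed-cutoff-regularity} using $e^{-V_s}\ge e^{-\sup V_0}$. The linear growth in $(Q_sy+\zeta_q,Q_sw+\zeta_p)$ averages to at most $C(1+|(y,w)|)$, which gives the second bound. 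Alternatively, apply $\nabla^k$ to the coarse kinetic equation \eqref{eq:ent-coarse-density-equation} and use Lemma~\ref{lem:ent-fixed-cutoff-regularity}.

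The force defect $R_{t,s}=Q_s[\mathsf P_s(g_t\nabla V_0)-g_{t,s}\nabla V_s(Q_sy)]$ is handled in the same way. It is linear in $g_t$, so its time derivative is obtained by replacing $g_t$ with $\partial_tg_t$, and the outer $Q_s$ contributes the extra factor $e^{-s/2}$. For $r_{t,s}=R_{t,s}/g_{t,s}$, the quotient rule with $g_{t,s}\ge c_-$ and the bounds already proven give the same estimate.

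I expect the main obstacle to be the first step: rigorously establishing the classical smoothness and differentiability of the backward Kolmogorov equation with unbounded, linearly growing coefficients. I would handle this by standard SDE flow theory for globally Lipschitz drifts with bounded higher derivatives, which is available at a fixed cutoff.
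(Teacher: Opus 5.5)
Your proposal is correct and follows essentially the same route as the paper. Both arguments represent $\mathcal P_t^*$ by the stochastic flow with reversed drift $(-p,\nabla U(q)-\gamma p)$, bound the flow derivatives by induction and Gronwall, use the density equation with linearly growing coefficients for the time derivatives, and replace $g_t$ by $\partial_tg_t$ in the Gaussian average \eqref{eq:ent-coarse-operator}, with quotient differentiation for $g_{t,s}$, $R_{t,s}$, and $r_{t,s}$. One detail the paper states explicitly and you leave implicit is that the Gaussian moments are bounded uniformly in $s$, because $C_s\preceq A^{-1}\preceq\Id$ and $AC_s\preceq\Id$.
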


\begin{proof}
The reversed drift
\begin{equation*}
  (q,p)\longmapsto(-p,\nabla U(q)-\gamma p)
\end{equation*}
is smooth and globally Lipschitz, with bounded derivatives of positive order. With momentum noise $\sqrt{2\gamma}\dd W_t$, it generates $\mathcal P_t^*$, whose Markov property preserves $c_-\leq g_t\leq c_+$. The derivatives of its stochastic flow satisfy linear differential equations; induction and Gronwall bound them on $[0,T]$, uniformly in the initial point. Differentiating the expectation defining $\mathcal P_t^*g_0$ proves \eqref{eq:ent-core-derivatives}.
Since the drift grows at most linearly, the density equation, with the bounded spatial derivatives of $g_t$, gives the first bound in \eqref{eq:ent-kinetic-time-derivatives}. For the time derivative of $g_{t,s}$, replace $g_t$ by $\partial_tg_t$ in the numerator of \eqref{eq:ent-coarse-operator}. Its derivatives have at most linear growth, while $C_s\preceq A^{-1}\preceq\Id$ and $AC_s\preceq\Id$ give uniformly bounded Gaussian moments. Quotient differentiation proves the bound on $\partial_t\nabla_{(y,w)}^kg_{t,s}$, and the same calculation gives the bounds on the time derivatives of $R_{t,s}$ and $r_{t,s}$.
\end{proof}

\begin{lemma}[Differentiation in kinetic time]
\label{lem:ent-kinetic-differentiation}
Let $g_0$ be a regular density and $T<\infty$. The integral defining the corrector $\mathcal J(g_t)$ converges absolutely, $t\mapsto\mathcal J(g_t)$ belongs to $C^1([0,T])$, and
\begin{equation}
  \frac\dd{\dd t}\mathcal J(g_t) =\int_0^\infty\!\int \partial_tj_{t,s}\dd\boldsymbol\mu_s\dd s,
  \label{eq:ent-corrector-differentiation}
\end{equation}
where $j_{t,s}=g_{t,s}\nabla_y\log g_{t,s}\cdot\nabla_w\log g_{t,s}$ is the integrand of the cross term $\mathcal J_s(g_{t,s})$. Moreover, $t\mapsto\Ent_{\boldsymbol\mu}(g_t)$ belongs to $C^1([0,T])$ and satisfies \eqref{eq:ent-physical-dissipation}.
\end{lemma}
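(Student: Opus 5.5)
The plan is to get both claims from uniform domination on $[0,T]\times[0,\infty)$. The pointwise bounds come from Lemmas~\ref{lem:ent-fixed-cutoff-regularity} and~\ref{lem:ent-kinetic-regularity}. The two key ingredients are the uniform lower bound $g_{t,s}\ge c_-$ and the derivative decay $|\nabla_{(y,w)}g_{t,s}|\le C_1e^{-s/2}$, with $C_1$ uniform in $t\in[0,T]$ because the constants of Lemma~\ref{lem:ent-fixed-cutoff-regularity} depend on $g_t$ only through $c_-$ and $\sup_{t\le T}\|\nabla^j g_t\|_\infty$, which are finite by \eqref{eq:ent-core-derivatives}. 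These give
\[
|j_{t,s}|\le \frac{|\nabla_yg_{t,s}|\,|\nabla_wg_{t,s}|}{g_{t,s}}\le c_-^{-1}C_1^2e^{-s}.
\]
Since $\boldsymbol\mu_s$ is a probability measure, $\int|j_{t,s}|\dd\boldsymbol\mu_s\le Ce^{-s}$, and the corrector integral converges absolutely, uniformly in $t$.

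For differentiability, I will write
\[
\partial_tj_{t,s}=\frac{\nabla_y\partial_tg_{t,s}\cdot\nabla_wg_{t,s}+\nabla_yg_{t,s}\cdot\nabla_w\partial_tg_{t,s}}{g_{t,s}}-\frac{\partial_tg_{t,s}\,\nabla_yg_{t,s}\cdot\nabla_wg_{t,s}}{g_{t,s}^2}.
\]
By \eqref{eq:ent-kinetic-time-derivatives} with $k=0,1$, together with the spatial bounds above, this gives
\[
|\partial_tj_{t,s}|\le C_Te^{-s}(1+|(y,w)|).
\]
I then need a uniform first moment of $\boldsymbol\mu_s$. From \eqref{eq:ent-coarse-measure}, $\boldsymbol\mu_s$ has density $Z_0^{-1}e^{-V_s(Q_sy)}\le Z_0^{-1}e^{\operatorname{osc}V_0}$ with respect to $\mathsf N(0,\mathbf C_\infty)$, and $A^{-1}\preceq\Id$. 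Hence $\int(1+|(y,w)|)\dd\boldsymbol\mu_s\le C$ uniformly in $s$, and so
\[
\int|\partial_tj_{t,s}|\dd\boldsymbol\mu_s\le C_Te^{-s}.
\]
Since $\boldsymbol\mu_s$ does not depend on $t$, for each fixed $s$ I will write $\mathcal J_s(g_{t,s})-\mathcal J_s(g_{t_0,s})=\int_{t_0}^t\int\partial_tj_{\tau,s}\dd\boldsymbol\mu_s\dd\tau$. This uses Fubini, justified by the integrable majorant, and the pointwise continuity of $\partial_tj$. Integrating in $s$ and using Fubini again on $[t_0,t]\times[0,\infty)$ with the majorant $C_Te^{-s}$ gives \eqref{eq:ent-corrector-differentiation} once the inner function is continuous in $t$.

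I expect the main obstacle to be proving the continuity in $t$ of $\tau\mapsto\int_0^\infty\int\partial_tj_{\tau,s}\dd\boldsymbol\mu_s\dd s$, which is needed for the $C^1$ claim. The plan is to show pointwise continuity of $\partial_tj_{t,s}$ in $t$ and then apply dominated convergence with the majorant $C_Te^{-s}(1+|(y,w)|)$. Pointwise continuity requires continuity of $\partial_t\nabla^k g_{t,s}$ in $t$. I would obtain it by differentiating the Gaussian formula \eqref{eq:ent-coarse-operator} with $g_t$ replaced by $\partial_tg_t=\mathcal L^*g_t$. The ingredients are continuity of $t\mapsto\nabla^k g_t$ locally uniformly, coming from the smooth stochastic flow as in Lemma~\ref{lem:ent-kinetic-regularity}, and dominated convergence inside the Gaussian average, with the linear-growth bounds as majorant.

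The entropy claim goes the same way at scale $0$ only. The integrand $g_t\log g_t$ has time derivative $(1+\log g_t)\partial_tg_t$, which is bounded by $C_T(1+|q|+|p|)$ thanks to $c_-\le g_t\le c_+$ and \eqref{eq:ent-kinetic-time-derivatives}. The measure $\boldsymbol\mu$ has Gaussian moments, so dominated convergence gives $C^1$ regularity with derivative $\int\log g_t\,\mathcal L^*g_t\dd\boldsymbol\mu$, after discarding $\int\mathcal L^*g_t\dd\boldsymbol\mu=0$. To identify this with $-\gamma\mathcal I_p(g_t)$, I will use $\mathcal L^*=-\mathcal A+\gamma\mathcal S$. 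The transport part vanishes: $\int\log g\,\mathcal Ag\dd\boldsymbol\mu=\int\mathcal A(g\log g-g)\dd\boldsymbol\mu=0$ by skew-adjointness and $\mathcal A1=0$. The friction part gives $-\gamma\int\nabla_p\log g\cdot\nabla_pg\dd\boldsymbol\mu=-\gamma\mathcal I_p(g)$ by Gaussian integration by parts in $p$. Both integrations by parts are justified with compact cutoffs, using the bounded derivatives of $g_t$ and the at most linear growth of the coefficients.
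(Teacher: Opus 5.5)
Your proposal is correct and follows the paper's proof. Both use the majorant $C_Te^{-s}(1+|(y,w)|)$ for $j_{t,s}$ and $\partial_tj_{t,s}$ from Lemmas~\ref{lem:ent-fixed-cutoff-regularity} and~\ref{lem:ent-kinetic-regularity}, a uniform bound on $\dd\boldsymbol\mu_s/\dd\mathsf N(0,\mathbf C_\infty)$, dominated differentiation, and skew-adjointness plus Gaussian integration by parts in momentum for the entropy identity. You are more explicit than the paper about the pointwise continuity in $t$ of $\partial_tj_{t,s}$ needed for the $C^1$ claim. There is one harmless slip: the density bound should read $Z_0^{-1}e^{-V_s(Q_sy)}\le e^{\operatorname{osc}V_0}$ (using $Z_0\ge e^{-\sup V_0}$), not $Z_0^{-1}e^{\operatorname{osc}V_0}$, though any uniform bound suffices.
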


\begin{proof}
Divide the first derivatives of $g_{t,s}$ by its positive lower bound and use Lemma~\ref{lem:ent-kinetic-regularity} to obtain
\begin{equation*}
  |\nabla_{(y,w)}\log g_{t,s}|\leq C_Te^{-s/2},\qquad |\partial_t\nabla_{(y,w)}\log g_{t,s}| \leq C_Te^{-s/2}(1+|(y,w)|).
\end{equation*}
By the product rule, the integrand $j_{t,s}$ of the cross term satisfies
\begin{equation}
  \sup_{0\leq t\leq T} \bigl(|j_{t,s}(y,w)|+|\partial_tj_{t,s}(y,w)|\bigr) \leq C_Te^{-s}(1+|(y,w)|).
  \label{eq:ent-cross-domination}
\end{equation}
Moreover,
\begin{equation*}
  e^{-\operatorname{osc}V_0} \leq\frac{\dd\boldsymbol\mu_s}{\dd\mathsf N(0,\mathbf C_\infty)} \leq e^{\operatorname{osc}V_0},\qquad s\geq0.
\end{equation*}
Thus \eqref{eq:ent-cross-domination} provides a common integrable majorant on $[0,\infty)\times\R^{2d}$, relative to $\dd s\,\mathsf N(0,\mathbf C_\infty)(\dd y\,\dd w)$. Dominated differentiation proves \eqref{eq:ent-corrector-differentiation} and continuity of its right-hand side.

The two-sided bounds on $g_t$, the linear growth bound for $\partial_tg_t$ in \eqref{eq:ent-kinetic-time-derivatives}, and Gaussian moments justify differentiating $\Ent_{\boldsymbol\mu}(g_t)$. By skew-adjointness of the Hamiltonian part and Gaussian integration by parts in momentum, we obtain \eqref{eq:ent-physical-dissipation}.
\end{proof}

\begin{lemma}[Vanishing terminal entropy]
\label{lem:ent-scale-endpoints}
For a regular density $f$, the coarse-grained densities satisfy $f_s\to1$ pointwise and $\Ent_{\boldsymbol\mu_s}(f_s)\to0$ as $s\to\infty$.
\end{lemma}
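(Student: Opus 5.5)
The plan has two steps. First I show that $f_s\to1$ pointwise, using the explicit Gaussian formula \eqref{eq:ent-coarse-operator}. Then I pass to the limit in the entropy by dominated convergence. The two-sided bounds $c_-\le f_s\le c_+$ from Lemma~\ref{lem:ent-fixed-cutoff-regularity} make the integrand $f_s\log f_s$ uniformly bounded. The density of $\boldsymbol\mu_s$ relative to $\mathsf N(0,\mathbf C_\infty)$ lies between $e^{-\operatorname{osc}V_0}$ and $e^{\operatorname{osc}V_0}$. So once pointwise convergence holds and the densities of $\boldsymbol\mu_s$ converge pointwise, the entropy limit is routine.

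\emph{Pointwise convergence.} Fix $(y,w)$. As $s\to\infty$ we have $Q_s\to0$ in operator norm, since $\norm{Q_s}_\op\le e^{-s/2}$, and $\mathbf C_s\to\mathbf C_\infty$. Hence the Gaussian $\mathsf N(\mathbf Q_s(y,w)^\top,\mathbf C_s)$ converges in total variation to $\mathsf N(0,\mathbf C_\infty)$; this holds because the covariances are uniformly nondegenerate and converge, and the means converge. By \eqref{eq:ent-coarse-operator}, $f_s(y,w)$ is a ratio. The numerator is the integral of the bounded function $e^{-V_0(q)}f(q,p)$ against this Gaussian, and the denominator $e^{-V_s(Q_sy)}$ is the integral of $e^{-V_0(q)}$ against it. Total variation convergence therefore gives
\[
f_s(y,w)\to\frac{\E_{\mathsf N(0,\mathbf C_\infty)}[e^{-V_0}f]}{\E_{\mathsf N(0,\mathbf C_\infty)}[e^{-V_0}]}=\E_{\boldsymbol\mu}f=1 .
\]
Alternatively, one can use the derivative bound $|\nabla_{(y,w)}f_s|\le C_1e^{-s/2}$ together with the mass identity \eqref{eq:ent-coarse-mass}. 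This route shows that $f_s$ becomes uniformly close to a constant, and that the constant must be $1$ up to errors of order $e^{-s/2}$ times Gaussian moments. In fact it gives the uniform bound $\sup_{(y,w)}|f_s-1|\lesssim e^{-s/2}(1+|(y,w)|)$ after averaging.

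\emph{Entropy limit.} Write
\[
\Ent_{\boldsymbol\mu_s}(f_s)=\int f_s\log f_s\,Z_0^{-1}e^{-V_s(Q_sy)}\,\dd\mathsf N(0,\mathbf C_\infty).
\]
The integrand is bounded by $\max(|c_-\log c_-|,|c_+\log c_+|)\,e^{\operatorname{osc}V_0}$ and tends to $0$ pointwise, since $f_s\to1$. Dominated convergence then gives $\Ent_{\boldsymbol\mu_s}(f_s)\to0$. I expect no real obstacle. The only point needing care is to confirm the uniform nondegeneracy of $\mathbf C_s$ for large $s$, namely $\mathbf C_s\succeq(1-e^{-s})\mathbf C_\infty$, which is what justifies the total variation convergence of the Gaussian kernels.
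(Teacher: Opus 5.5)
Your proposal is correct and follows the paper's proof. You show pointwise convergence of the ratio of Gaussian averages in \eqref{eq:ent-coarse-operator}, then apply dominated convergence against $\mathsf N(0,\mathbf C_\infty)$ using the two-sided bounds on $f_s$ and on the density of $\boldsymbol\mu_s$. The paper needs only weak convergence of the Gaussian kernels, since the integrands are bounded and continuous, so your nondegeneracy check for total variation is valid but unnecessary. One harmless slip: the uniform bound on $|f_s\log f_s|$ should be $\sup_{u\in[c_-,c_+]}|u\log u|$, which can equal $e^{-1}$ rather than the larger of the two endpoint values.
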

\begin{proof}
In \eqref{eq:ent-coarse-operator}, $Q_s\to0$ and $\mathbf C_s\to\mathbf C_\infty$. Since $f$ and $V_0$ are bounded and continuous, the expectation converges to $\E_{\mathsf N(0,\mathbf C_\infty)}[e^{-V_0}f]=Z_0\E_{\boldsymbol\mu}f=Z_0$, while $e^{-V_s(Q_sy)}\to Z_0$. Hence $f_s\to1$. Conditional expectation gives $\inf f\leq f_s\leq\sup f$, and Gaussian convolution gives $\inf V_0\leq V_s\leq\sup V_0$. Writing the entropy with the density of $\boldsymbol\mu_s$ relative to $\mathsf N(0,\mathbf C_\infty)$,
\begin{equation*}
  \Ent_{\boldsymbol\mu_s}(f_s)
  =Z_0^{-1}\int f_s\log f_s\,e^{-V_s(Q_sy)}\,\mathsf N(0,\mathbf C_\infty)(\dd y\,\dd w),
\end{equation*}
the integrand is therefore uniformly bounded and tends to zero pointwise. Dominated convergence proves the claim.
\end{proof}

\subsection{Approximation of finite-entropy densities}
\label{subsec:regularity-approximation}

We complete the proof of Theorem~\ref{thm:phase-space-criterion}
by approximating the initial density in $L^1$ and in entropy.
The decay constants for regular densities are independent of their
derivative bounds and positive lower bounds.

\begin{proof}[Completion of the proof of Theorem~\ref{thm:phase-space-criterion}]
Let $g_0$ be a probability density relative to $\boldsymbol\mu$
with finite entropy. Truncating $g_0$ at level $n$, adding $n^{-1}$,
mollifying in $(q,p)$ at a scale depending on $n$, and normalizing
gives regular densities $g_0^{(n)}$ that converge to $g_0$ in
$L^1(\boldsymbol\mu)$ and in entropy, by dominated convergence.
Since $\mathcal P_t^*$ is a contraction on $L^1(\boldsymbol\mu)$,
$g_t^{(n)}=\mathcal P_t^*g_0^{(n)}\to g_t$ in $L^1(\boldsymbol\mu)$,
and lower semicontinuity of entropy and the decay estimate for
regular densities give
\[
  \Ent_{\boldsymbol\mu}(g_t)
  \leq\liminf_{n\to\infty}\Ent_{\boldsymbol\mu}(g_t^{(n)})
  \leq C_{\gamma,\eta_*}e^{-c_{\gamma,\eta_*}t}\Ent_{\boldsymbol\mu}(g_0).\qedhere
\]
\end{proof}

\section{Identification with the formulation of Bauerschmidt and Bodineau}
\label{app:BB-identification}
The change of variables $\mathbf Q_s$ identifies the construction in Section~\ref{sec:scale-calculus} with the formulation of Bauerschmidt and Bodineau~\cite{BB} and the survey~\cite{BBD2024}. We give the correspondence for the measure, semigroup, generator, and entropy decomposition.

The pushforward $(\mathbf Q_s)_\#\boldsymbol\mu_s=Z_0^{-1}e^{-V_s(q)}\mathsf N(0,\mathbf Q_s\mathbf C_\infty\mathbf Q_s)(\dd q\,\dd p)$ is the renormalized measure for the block decomposition \eqref{eq:ent-BB-block-covariance} in \cite[(1.5)]{BB} and \cite[Definition~3.2]{BBD2024}. The function $f_s\circ\mathbf Q_s^{-1}$ is the image of $f$ under the Polchinski semigroup \cite[(3.12)]{BBD2024} from scale $0$ to $s$. Our scale pair $(0,s)$ and covariance $\mathbf C_s$ correspond to $(0,t)$ and its covariance in that definition. In our variables $(y,w)$, the Gaussian reference measure remains $\mathsf N(0,\mathbf C_\infty)$.

Applying the chain rule to \eqref{eq:ent-scale-evolution}, with $\frac\dd{\dd s}\mathbf Q_s^{-1}=\frac12\operatorname{diag}(A,A)\mathbf Q_s^{-1}$, gives the generator of $f_s\circ\mathbf Q_s^{-1}$ in $(q,p)$:
\begin{equation*}
  \frac12\Bigl(\frac\dd{\dd s}\mathbf C_s\Bigr):\nabla_{(q,p)}^2-(Q_s^2\nabla V_s)\cdot\nabla_q.
\end{equation*}
The Ornstein--Uhlenbeck drifts cancel against the derivative of $\mathbf Q_s^{-1}$. This is the Polchinski generator \cite[(2.8)]{BB}, with $\bigl(\frac\dd{\dd s}\mathbf C_s\bigr):\nabla_{(q,p)}^2=Q_s^2:\nabla_q^2+AQ_s^2:\nabla_p^2$ the Laplacian associated with $\frac\dd{\dd s}\mathbf C_s$. Since $Q_s$ commutes with $A$, \eqref{eq:ent-BB-block-covariance} gives
\begin{equation}
  \mathbf Q_s^{-1}\Bigl(\frac\dd{\dd s}\mathbf C_s\Bigr)\mathbf Q_s^{-1}
  =\begin{pmatrix}\Id&0\\0&A\end{pmatrix}.
  \label{eq:ent-coarse-diffusion}
\end{equation}
For smooth $\psi$ on the coarse space, $\nabla_{(q,p)}(\psi\circ\mathbf Q_s^{-1})=\mathbf Q_s^{-1}(\nabla_{(y,w)}\psi)\circ\mathbf Q_s^{-1}$, so \eqref{eq:ent-coarse-diffusion} gives
\begin{equation}
  \frac12\Bigl\langle\nabla_{(q,p)}(\psi\circ\mathbf Q_s^{-1}), \Bigl(\frac\dd{\dd s}\mathbf C_s\Bigr)\nabla_{(q,p)}(\psi\circ\mathbf Q_s^{-1})\Bigr\rangle\circ \mathbf Q_s =\frac12|\nabla_y\psi|^2+\frac12(\nabla_w\psi)^\top A\,\nabla_w\psi=\Gamma_s(\psi).
  \label{eq:ent-BB-Fisher-conversion}
\end{equation}
The left-hand side is the carr\'e du champ of the generator of Bauerschmidt and Bodineau, composed with $\mathbf Q_s$. Since $\Ent_{(\mathbf Q_s)_\#\boldsymbol\mu_s}(f_s\circ\mathbf Q_s^{-1})=\Ent_{\boldsymbol\mu_s}(f_s)$, taking $\psi=\log f_s$ in \eqref{eq:ent-BB-Fisher-conversion} transforms the identity \cite[(2.16)--(2.17)]{BB},
\begin{equation*}
  \Ent_{\boldsymbol\mu}(f)=\frac12\int_0^\infty\!\!\int (f_s\circ\mathbf Q_s^{-1})\Bigl\langle\nabla_{(q,p)}\log(f_s\circ\mathbf Q_s^{-1}), \Bigl(\frac\dd{\dd s}\mathbf C_s\Bigr)\nabla_{(q,p)}\log(f_s\circ\mathbf Q_s^{-1})\Bigr\rangle \dd(\mathbf Q_s)_\#\boldsymbol\mu_s\dd s,
\end{equation*}
into \eqref{eq:ent-decomposition}.

\bibliography{references}

\end{document}